\newif\ifaddone \addonetrue
\algnewcommand\algorithmicinput{\textbf{Input:}}
 \algnewcommand\INPUT{\item[\algorithmicinput]}
 \algnewcommand\algorithmicoutput{\textbf{Output:}}
 \algnewcommand\OUTPUT{\item[\algorithmicoutput]}
\newtheorem{remark}[theorem]{Remark}
\newtheorem*{assumpt}{Assumption}
\newcommand{\cA}{\mathcal{A}}
\newcommand{\cB}{\mathcal{B}}
\newcommand{\cG}{\mathcal{G}}
\newcommand{\G}{\mathcal{G}}
\newcommand{\M}{\mathcal{M}}
\newcommand{\cL}{\mathcal{L}}
\newcommand{\cM}{\mathcal{M}}
\newcommand{\cQ}{\mathcal{Q}}
\newcommand{\cR}{\mathcal{R}}
\newcommand{\cU}{\mathcal{U}}
\newcommand{\cT}{\mathcal{T}}
\newcommand{\ba}{\mathbf{a}}
\newcommand{\bby}{\mathbf{y}}
\newcommand{\s}{\mathbf{s}}
\newcommand{\rr}{\mathbf{r}}
\newcommand{\argmin}{\operatornamewithlimits{argmin}}
\newcommand{\argmax}{\operatornamewithlimits{argmax}}
\newcommand{\PROB}{\mathop{\mathbb{P}}}
\newcommand{\bx}{\mathbf{x}}
\newcommand{\MM}{\mathcal{M}_M}
\newcommand{\Lap}{\mathrm{Lap}}
\newcommand{\acc}{\alpha}
\newcommand{\E}{\mathbb{E}}
\newcommand{\actions}{k}
\newcommand{\actionset}{\{1,2,\dots,\actions\}}
\newcommand{\dists}[1]{\Pi_{#1}}
\newcommand{\actiondists}{\dists{\actions}}
\newcommand{\Losses}{L}
\newcommand{\losses}{l}
\newcommand{\loss}{l}
\newcommand{\tlosses}{\losses_t}
\newcommand{\tjloss}{\loss_{t}^j}
\newcommand{\itjloss}{\loss_{i,t}^{j}}
\newcommand{\itjvloss}{\loss^{j}_{i,t,\hat\tau}} 	
\newcommand{\state}{\pi}
\newcommand{\tstate}{\state_{t}}
\newcommand{\tjstate}{\state_{t}^j}
\newcommand{\itstate}{\state_{i,t}}
\newcommand{\rounds}{T}
\newcommand{\round}{t}
\newcommand{\scaled}[1]{\overline{#1}}
\newcommand{\scaledLosses}{\scaled{\Losses}}
\newcommand{\noisy}[1]{\widehat{#1}}
\newcommand{\noisyLosses}{\noisy{\Losses}}
\newcommand{\noisytjloss}{\noisy{\loss}_{t}^j}
\newcommand{\noisyitjvloss}{\noisy{\loss}^{j}_{i,t,\hat\tau}} 
\newcommand{\Noises}{Z}
\newcommand{\noise}{z}
\newcommand{\tjnoise}{z_{t}^j}
\newcommand{\itjnoise}{z^j_{i,t}}
\newcommand{\noisyitjloss}{\noisy{\loss}^{j}_{i,t}}
\newcommand{\noisyitlosses}{\noisy{\losses}_{i,t}}
\newcommand{\noisylosses}{\noisy{\losses}}
\newcommand{\nralg}{\mathcal{A}}
\newcommand{\fixedalg}{\nralg_{\mathsf{fixed}}}
\newcommand{\swapalg}{\nralg_{\mathsf{swap}}}
\newcommand{\exploss}{\Lambda}
\newcommand{\regret}{\rho}
\newcommand{\mregret}{\regret_{\mathsf{max}}}
\renewcommand{\mod}{f}
\newcommand{\mods}{\mathcal{F}}
\newcommand{\somemods}[1]{\mods_{\mathsf{#1}}}
\newcommand{\fixedmods}{\somemods{fixed}}
\newcommand{\swapmods}{\somemods{swap}}
\newcommand{\modstate}{\mod \!\circ\! \state}
\newcommand{\set}[1]{\left\{#1\right\}}
\newcommand{\from}{\colon}
\newcommand{\eps}{\varepsilon}
\newcommand{\bits}{\{0,1\}}
\DeclareMathOperator*{\Expectation}{\mathbb{E}}
\newcommand{\Ex}[2]{\Expectation_{#1}\left[#2\right]}
\DeclareMathOperator*{\Probability}{\mathbb{P}}
\newcommand{\prob}[1]{\Probability\left[#1\right]}
\newcommand{\Prob}[2]{\Probability_{#1}\left[#2\right]}
\newcommand{\N}{\mathbb{N}}
\newcommand{\R}{\mathbb{R}}
\newcommand{\ignore}[1]{}
\newcommand{\anote}[1]{\ignore}
\newcommand{\jnote}[1]{\ignore}
\newcommand{\mpnote}[1]{\ignore}
\newcommand{\mknote}[1]{\ignore}
\newcommand{\rrnote}[1]{\ignore}
\newcommand{\err}{\mathbf{E}}
\newcommand{\pst}{\texttt{PST}}
\newcommand{\bm}{\texttt{BM}}
\newcommand{\abm}{\texttt{ABM}}
\newcommand{\pbr}{\texttt{P-BR}}
\newcommand{\nrl}{\texttt{NR-LAPLACE}}
\newcommand{\nrm}{\texttt{NRMEDIAN}}
\newcommand{\nrms}{\texttt{NRMEDIAN-SHARED}}
\begin{document}

\title{\textsc{Robust Mediators in Large Games}\thanks{We gratefully acknowledge the support of NSF Grant CCF-1101389 and the Alfred P. Sloan Foundation. We thank Nabil Al-Najjar, Eduardo Azevdeo, Eric Budish, Tymofiy Mylovanov, Andy Postlewaite, Al Roth, Tim Roughgarden, Ilya Segal, and Rakesh Vohra for helpful comments and discussions.
}}

\author{
Michael Kearns\thanks{Department of Computer and Information Science, University of Pennsylvania.} \and
Mallesh M. Pai\thanks{Department of Economics, University of Pennsylvania.} \and
Ryan Rogers\thanks{Department of Applied Mathematics and Computational Sciences, University of Pennsylvania.} \and
Aaron Roth\thanks{Department of Computer and Information Science, University of Pennsylvania.} \and
Jonathan Ullman\thanks{College of Computer and Information Science, Northeastern University.  Most of this work was done while the author was in the School of Engineering and Applied Sciences at Harvard University.}}

\maketitle
\begin{abstract}
A mediator is a mechanism that can only suggest actions to players, as a function of all agents' reported types, in a given game of incomplete information. We study what is achievable by two kinds of mediators, ``strong'' and ``weak.'' Players can choose to opt-out of using a strong mediator but cannot misrepresent their type if they opt-in. Such a mediator is ``strong'' because we can view it as having the ability to verify player types. Weak mediators lack this ability--- players are free to misrepresent their type to a weak mediator.  We show a striking result---in a prior-free setting, assuming only that the game is large and players have private types, strong mediators can implement approximate equilibria of the complete-information game. If the game is a congestion game, then the same result holds using only weak mediators. Our result follows from a novel application of \emph{differential privacy}, in particular, a variant we propose called  \emph{joint differential privacy}.
\end{abstract}

\newpage

\setcounter{tocdepth}{1}
\tableofcontents
\vfill
\newpage

\section{Introduction}
Mechanism design generally supposes a principal who can greatly influence the ``rules of the game'' among agents, to achieve a desirable outcome. In several applied settings of interest, however, this sort of influence is too much to hope for. For example, a market designer may have to live within the constraints of an existing market. She may be unable to guarantee that players participate in her designed mechanism, or that they follow the outcomes proposed even if they do participate. Similarly, in many settings, transfers may be ruled out.%
\footnote{A leading example is the central matching clearinghouses proposed for matching markets such as school choice, residency markets etc. Agents are free to drop out and contract on their own before the match, or indeed may deviate from the proposed match, and there are no transfers. A more recent example, closer to our model concerns the design of traffic-aware smartphone routing apps. Again, agents are free to not use the app, or may even try to mislead the app to get favorable routes. For a recent case study, see \url{http://www.wsj.com/articles/in-l-a-one-way-to-beat-traffic-runs-into-backlash-1447469058}.}
 To distinguish these principals of limited power from the standard principals considered in the literature, we refer to them as ``mediators.'' In this paper, we study what sorts of outcomes are achievable by mediators.

To be more precise, given a game of incomplete information, a mediator may only collect reports from players, and then \emph{suggest} actions to each player.  That is, we augment a given game with an additional option for each player to use a mediator. If a player opts in to using the mediator, and reveals her type, then it will suggest an action for that player to take (as a possibly stochastic function of all reports it received). We further require that mediators be robust or prior-free, i.e. that the mediator's recommendations (and the incentives of players to follow them) do not rely on the knowledge (or even existence) of any prior distribution over types.

Our main results are to show that it is possible to construct a robust mediator under very mild assumptions, namely, that the game is large and that the setting is one of private values. For every profile of realized types, our mediator suggests to each player her action in an approximate equilibrium of the realized full-information game. Further, we show that opting in to using the mediator, and following its recommendation forms an approximate \emph{ex-post} Nash equilibrium of the mediated game. The error of the approximation vanishes as the number of players grows.

The techniques used in our construction may be of independent interest. We propose and use a novel variant of the influential notion of \emph{differential privacy}, which we term \emph{joint differential privacy}.  In particular, we devise an algorithm for  computing equilibria of a full-information game such that any single player's reported type to the algorithm only has a small effect on the distribution of suggested actions to all other players, in the worst case over the reports of all players (it is this worst case guarantee that yields prior-free robustness). Ideas from the differential privacy literature allow us to precisely control the amount by which any single agent's type influences the suggested actions to others, and hence allows us to ensure our desired incentive properties. In other words, we prove our incentive guarantees for our mechanism by showing that it is differentially private, a technique that may be useful in other settings.

In this paper, we consider two kinds of mediators, which we call \emph{weak}, and \emph{strong mediators}, that extend the original game.  For both types of mediators, players may simply opt out of using the mediator and play the original game using any strategy they choose. Similarly, for both types of mediators, if they opt in to using the mediator, they are under no obligation to follow the action suggested to them, and can use the suggestion to act however they like. If the mediator is weak, then agents can also arbitrarily misrepresent their type to the mediator. In contrast, if players opt in to using a strong mediator, then they cannot misrepresent their type (equivalently, strong mediators have the ability to verify a player's type given that she opted in, which is why we term them ``strong''). In all cases, we assume that the mediator has very limited power---it cannot modify payoffs of the game (i.e. make payments) or enforce that any player take any particular action.

Under only the assumptions that the original game is ``large,'' and that types are private, we show that it is possible to implement a correlated equilibrium of the complete-information game using a strong mediator.\footnote{To be precise, we only guarantee that our mediator will select \emph{some} approximate equilibrium of the realized continuum, \emph{not} that the designer has control over what equilibrium is selected. } Informally, a game is large if there are many players and any player individually has only a small effect on the utility of any other player. Remarkably, no further assumptions on the game (e.g. single crossing properties) are required. We show that in such games there exists a strong mediator such that it is an approximate ex-post Nash equilibrium for every player to opt in to using the mediator, and then faithfully follow the suggested action of the mediator. In particular, this behavior forms a Bayes-Nash equilibrium for any prior on agent types, since it is an ex-post Nash equilibrium. Moreover, when players follow this strategy, the resulting play forms an approximate correlated equilibrium of the complete-information game.

We also show that for more structured large games---in particular, congestion games---it is possible to implement an  approximate Nash equilibrium of the complete-information game using only a weak mediator. Congestion games are examples of potential games, introduced by \citeasnoun{MS96}. They describe them thus:
\begin{quote}
``The class of congestion games is, on the one hand, narrow, but on the
other hand, very important for economics. Any game where a collection
of homogeneous agents have to choose from a finite set of
alternatives, and where the payoff of a player depends on the number
of players choosing each alternative, is a congestion game.'' \cite{MS96}
\end{quote}

In fact, they show that the class of potential games is isomorphic to the class of congestion games.\footnote{However, the mapping from potential games to congestion games they exhibit yields a game with a number of actions that is exponential in the number of players in the game. Our results in contrast require that the number of alternatives be small relative to the number of players, and so we do not obtain positive results for the entire class of potential games, but only for those potential games that can be expressed \emph{concisely} as congestion games.} For congestion games, we construct a weak mediator such that it is an approximate ex-post
Nash equilibrium for every player to opt into using the mediator, truthfully
report their type, and then faithfully follow the mediator's suggested
action, and when players do so, the resulting play forms an approximate Nash
equilibrium of the complete-information game. In both of these results, the approximation error tends to $0$ as the number of players grows.

A tempting approach to obtaining our results is to consider the following mediator. The mediator
accepts a report of each agent's type, which defines an instance of a
complete-information game. It then computes an equilibrium (either Nash or
correlated) of the full-information game, and suggests an action to each
player which is a draw from this equilibrium distribution. By definition of
an equilibrium, if all players opt-in to the mediator and truthfully report
their type, then they can do no better than subsequently following the
recommended action. However, this mediator does not solve the problem, as it
may not be in a player's best interest to opt in, even if the other $n-1$
players do opt in! Intuitively, by opting out, the player can cause the
mediator to compute an equilibrium of the wrong game, or to compute a
different equilibrium of the same game. More specifically, by opting out of the mechanism, a player can have a substantial effect on the computed equilibrium, even if each player has only small effect on the utilities of other players.  That is, in the absence of other guarantees, the mediator itself might be quite unstable to individual agent reports. A similar argument applies to misreports to a weak mediator.

The task, then, is to find a stable way to compute equilibria of the underlying game that is insensitive to any single player's report. We show that this can be done for approximate correlated equilibria of any large game, and approximate Nash equilibria of large congested games. Specifically, in Section \ref{sec:truthfulness}, we argue that algorithms which (1) are jointly differentially private, and (2) compute approximate equilibria of the underlying complete information game provides appropriate incentives. Then in Sections \ref{sec:weakmediator} and \ref{sec:strongmediatorlaplace} we provide an algorithm for computing an approximate Nash equilibrium in large congestion games subject to joint differential privacy and an algorithm for computing an approximate correlated equilibrium subject to joint differential privacy in any large game, respectively.

In Section \ref{sec:discussion} we argue that our results for general large games are tight in several senses we make precise. In particular, we show that, in general, (1) there do not exist mediators that implement \emph{exact} rather than approximate equilibria, (2) it is not possible to achieve \emph{more exact} equilibria than what we achieve without employing substantially different techniques, and (3) that mediators that provide strong welfare guarantees are computationally infeasible.

We argue that in certain games, our mediators achieve good (utilitarian) social welfare.  Finally, we discuss why privately computing a correlated equilibrium is not sufficient to implement a weak mediator, even though it is sufficient to implement a strong mediator, and why instead it is necessary to privately compute a Nash equilibrium in order to implement a weak mediator.

\subsection{Related Work and Discussion} \label{ssec:relwork}

\paragraph{Market and Mechanism Design}
Our work is related to the substantial body of literature on mechanism/market design in ``large games.'' In this literature, the largeness enables mechanisms to have good incentive properties, even when the small market versions do not. It stretches back to \citeasnoun{roberts1976incentives} who showed that market (Walrasian) equilibria are approximately strategy proof in large economies. More recently \citeasnoun{immorlica2005marriage}, \citeasnoun{kojima2009incentives}, \citeasnoun{kojima2010matching} have shown that various two-sided matching mechanisms are approximately strategy proof in large markets. There are similar results in the literature for one-sided matching markets, market economies, and double auctions. The most general result is that of \citeasnoun{azevedo2011strategyproofness} who design incentive compatible mechanisms for large economies that satisfy a smoothness assumption.

One salient aspect of our work in relation to most of this literature is that we make substantially less restrictive assumptions. We do not assume anything about how player types are generated (e.g. we make no distributional assumptions, and do not assume a replication economy), we do not assume anything about the structure of equilibria in the games we study, and our results give non-trivial bounds when the number of players is small (our bounds are non-trivial even when the number of players is much smaller than the size of the type space, for example).

Inspired by applied market design settings such as the medical resident match, other papers have also observed that a designer may have limited ability to affect/ limit participants' actions in a pre-existing market, and this should be taken into account in the design---see e.g. \citeasnoun{rothshorrer}.

\paragraph{Large Games} Our results hold under a ``largeness condition'', i.e. a player's unilateral deviation can affect the payoff of others by only a small amount. This is closely related to the literature on large games, see e.g. \citeasnoun{al2000pivotal} or \citeasnoun{kalai2004large}. There has been recent work studying large games using tools from theoretical computer science, studying robustness of equilibrium concepts, see \citeasnoun{gradwohl2008fault, gradwohl2010partial}. Our results can be interpreted as a structural result about large normal form games: there exists a family of approximate equilibria in all large games that are stable to changing any single player's payoffs in the game (the joint distribution of all other players' actions does not change much).

\paragraph{Communication}
There has been some previous work on mediators in games~\citeasnoun{mediators1,mediators2}. This line of work aims to modify the equilibrium structure of full-information games by introducing a mediator, which can coordinate agent actions if they choose to opt in using the mediator. Mediators can be used to convert Nash equilibria into dominant strategy equilibria, or implement equilibria that are robust to collusion. While the setting is substantially different, qualitatively, our mediators are weaker in that they cannot make payments \cite{mediators1}, or enforce actions \cite{mediators2}.

Our mediators are closer to the communication devices in the ``communication equilibria'' of  \citeasnoun{forges1986approach}. Her mediators are prior dependent, and she focuses on characterizing the set of payoffs achievable via such mediators. No attention is paid to computational tractability.

As was recently noted by \citeasnoun{CKRW15},\footnote{They extend our techniques to design weak mediators for aggregative games. These are games in which every player's payoff is a function of the player'€™s own action and a low-dimensional aggregate of all players' actions.} the results reported here and in that work
can be viewed as giving robust versions of the
communication equilibria of \citet{forges1986approach}, in the sense that we can implement an approximate equilibrium of the
complete information game using a ``direct revelation mediator,'' but
without needing the existence of a prior distribution on types. Compared
to the former, in which truthful reporting is a Bayes-Nash Equilibrium, truth-telling here
forms an ex-post Nash equilibrium. We include a summary
of this view of our results
in Table \ref{table:truthful}.

\begin{table}[ht]

\begin{small}
\begin{tabular}{|c|c|c|c|c|c|}
\hline
  {\bf Mechanism} & {\bf Class of} & {\bf Common} & {\bf Mediator} & {\bf Solution} &{ \bf Ex-post}\\
   		     &    {\bf Games} & {\bf Prior?} & {\bf Strength} & {\bf Concept} &{\bf Realization}\\
\hline
\citeasnoun{forges1986approach} & Any & Yes & Weak &  Bayes-Nash & ---\\[15pt]
This work & Large & No & Strong & Ex-post & Correlated\\
&&&&Nash&Eqbm \\[10pt]
This work & Large & No & Weak & Ex-post & Nash \\
& Congestion & && Nash& Eqbm\\[10pt]
\citet{CKRW15} & Aggregative  & No & Weak & Ex-post & Nash \\
&&&&Nash&Eqbm \\[5pt]
\hline
\end{tabular}
\end{small}
\caption{Summary of truthful mechanisms for various classes of games
  and solution concepts. Note that a ``weak'' mediator does not
  require the ability to verify player types. A ``strong'' mediator
  does.}
\label{table:truthful}
\end{table}

\paragraph{Differential Privacy}
Differential privacy was first defined by \citeasnoun{DMNS06}, and is now the standard privacy ``solution concept'' in the theoretical computer science literature. It was introduced to quantify the \emph{worst-case harm} that can befall an individual from allowing his data to be used in a computation, as compared to if he did not provide his data. It is useful in our context because it provides a stability guarantee on the distribution of outputs induced by a mechanism with respect to small changes in its inputs. There is by now a very large literature on differential privacy, readers can consult \citeasnoun{DR13} for a more thorough introduction to the field.  Here we mention work at the intersection of privacy and game theory, and defer a longer discussion of related work in the privacy literature to Appendix \ref{app:related}.

\citeasnoun{MT07} were the first to observe that a differentially private algorithm is also approximately dominant strategy truthful. This line of work was extended by \citeasnoun{NST12} to give mechanisms in several special cases which are exactly truthful by combining private mechanisms with non-private mechanisms which explicitly punish non-truthful reporting. \citeasnoun{HK12} showed that the mechanism of \citeasnoun{MT07} can be made exactly truthful with the addition of payments. This connection between differential privacy and truthfulness does not carry over to joint-differential privacy, but as we show in Section \ref{sec:truthfulness}, it is regained if the object that we compute privately is an equilibrium of the underlying game.

\section{Model and Preliminaries} \label{sec:prelims}
We consider games $\mathcal{G}$ of up to $n$ players $\{1,2,\ldots, n\}$, indexed by $i$. Player $i$ can take actions in a set $A_i$, $|A_i| \leq k$, where actions are indexed by $j$. To allow our games to be defined also for fewer than $n$ players, we will imagine that the null action $\bot \in A_i$, which corresponds to ``opting out'' of the game.  We define $A \equiv \prod_{i = 1}^n A_i$.  A tuple of actions, one for each player, will be denoted $\ba =(a_1, a_2, \ldots a_n) \in A$.

Let $\mathcal{T}$ be the set of player types.\footnote{It is trivial to extend our results when agents have different typesets, $\mathcal{T}_i$.  $\mathcal{T}$ will then be $\bigcup_{i=1}^{n} \mathcal{T}_i$.} There is a utility function $u:\mathcal{T} \times A \rightarrow \R$ that determines the payoff for a player given his type $\tau_i$ and a joint action profile $\ba$ for all players. When it is clear from context, we will refer to the utility function of player $i$, writing $u_i:A\rightarrow \R$ to denote $u(\tau_i,\cdot)$.  We are interested in implementing an equilibrium of the realized \emph{complete-information game} in settings of \emph{incomplete information}. In the complete-information setting, the type $\tau_i$ of each player is fixed and commonly known to all players. In such settings, we can ignore the abstraction of types and consider each player $i$ simply to have a fixed utility function $u_i = u(\tau_i, \cdot)$. In models of \emph{incomplete} information, players know their own type, but do not know the types of others.

We restrict attention to ``large games,'' which are ``insensitive,'' in the following way.  Roughly speaking a game is $\lambda$-large if a player's choice of action affects any other player's payoff by at most $\lambda$. Note that we do not constrain the effect of a player's \emph{own} actions on his payoff---\emph{a player's action can have an arbitrary impact on his own payoff}. Formally:
\begin{definition}[$\lambda$-Large Game] \label{def:sensitive}
A game is said to be \emph{$\lambda$-large} if for any two distinct players $i\neq i'$, any two actions $a_i, a_{i}'$ and type $\tau_{i'}$ for player $i'$ and any tuple of actions $a_{-i},$
\begin{align}\label{eqn:sensitive}
|u(\tau_{i'}, (a_i,a_{-i})) - u(\tau_{i'}, (a_{i}',a_{-i}))| \leq \lambda.
\end{align}
\end{definition}

\subsection{Solution Concepts}
We now define the solution concepts we will use, both in the full information setting and in the incomplete-information setting.  Denote a distribution over $A$ by $\pi$, the marginal distribution over the actions of player $i$ by $\pi_i$, and the marginal distribution over the (joint tuple of) actions of every player but player $i$ by $\pi_{-i}$. We now present two standard solution concepts---(approximate) pure strategy Nash equilibrium and (approximate) correlated equilibrium.

\begin{definition}
(Approximate Pure Strategy Nash Equilibrium)  Let $(u_1, \cdots, u_n)$ be a profile of utility functions, one for each player.  The action profile $\mathbf{a}$ is an \emph{$\eta$-approximate pure strategy Nash equilibrium} of the (complete information) game defined by $(u_1, \cdots, u_n)$ if for any player $i$ and any deviation $a_i' \in A_i$, we have
$$
u_i(\mathbf{a}) \geq u_i(a_i',\ba_{-i}) - \eta.
$$
\end{definition}

\begin{definition}[Approximate Correlated Equilibrium] \label{def:ace}
Let $(u_1, \cdots, u_n)$ be a profile of utility functions, one for each player.  Let $\pi$ be a distribution over tuples of actions $A$.  We say that $\pi$ is an \emph{$\eta$-approximate correlated equilibrium} of the (complete information) game defined by $(u_1, \cdots, u_n)$ if for every player $i \in [n]$, and any function $f_i \from A_i \to A_i$,
\begin{equation*} \label{eqn:ace}
\Ex{\pi}{u_i(\ba)} \geq \Ex{\pi}{u_i(f(a_{i}), \ba_{-i})} - \eta
\end{equation*}
\end{definition}
We now define a desirable solution concept in the incomplete information setting.  Recall that in this setting, we make no assumptions about how player types are generated (i.e. they need not be drawn from a distribution).  We then define a player $i$'s strategy $s_i: \mathcal{T} \to A_i$ to be a mapping from her type to an action.

The following definition is a very strong solution concept, that we should not expect to exist in arbitrary games:
\begin{definition}[Ex-Post Nash Equilibrium] \label{def:expost}
The strategy profile $\s = (s_1, \cdots, s_n)$ is an \emph{$\eta$-approximate ex-post Nash equilibrium} if for every type profile $(\tau_1,\ldots,\tau_n) \in \mathcal{T}^n$, for every player $i$, and every unilateral deviation $a_i'\in A_i$, we have
$$
u(\tau_i, \s(\tau)) \geq  u(\tau_i, (a_i', \s_{-i}(\tau_{-i})) ) -  \eta
$$
\end{definition}
If it does exist, however, it is very robust. It is defined without knowledge of any prior distribution on types, but if there is such a distribution, then in particular, simultaneously for \emph{every} such distribution, an ex-post Nash equilibrium is also a Bayes Nash equilibrium.

For all of these notions of $\eta$-approximate equilibrium, if $\eta$ goes to $0$ as the number of players $n$ goes to infinity (i.e.~$\eta = o(1)$), then we say that we have an asymptotically exact equilibrium.

\subsection{Differential Privacy}
A key tool in our paper is the design of differentially private ``mediator'' algorithms for suggesting actions to play. Agents can opt out of participating in the mediator: so each agent can submit to the mediator some type $\tau_i$ (which may or may not be their actual type), or else a null symbol $\bot$ which represents opting out. A mediator is then a function from a profile of types (and $\bot$ symbols) to a probability distribution over some appropriate range $\cR^n$, i.e. $\cM \from (\mathcal{T} \cup \{\bot\})^n \to \Delta \cR^n$.

First we recall the definition of differential privacy, both to provide a basis for our modified definition, and since it will be a technical building block in our algorithms.  Roughly speaking, a mechanism is differentially private if for every input profile $\tau = (\tau_1, \cdots, \tau_n)$ and every $i$, 
if player $i$ unilaterally changes his report, he has only a small effect on the \emph{distribution} of outcomes of the mechanism.
\begin{definition}[Differential Privacy~\cite{DMNS06}] \label{def:standardprivacy}
A mechanism $\cM$ satisfies \emph{$(\eps, \delta)$-differential privacy} if for any player $i$, any $\tau_i, \tau_{i}' \in \mathcal{T} \cup \{\bot\}$, any tuple of types for players $i' \neq i$, $\tau_{-i} \in (\mathcal{T} \cup \{\bot\})^{n-1}$ and any event $B \subseteq \cR^{n}$,
\begin{align*}
\Prob{\cM}{\left(\cM(\tau_i,\tau_{-i}) \right) \in B} \leq e^{\eps} \Prob{\cM}{\left(\cM(\tau_{i}', \tau_{-i})\right) \in B} + \delta.
\end{align*}
\end{definition}

Differential privacy is not quite suitable for our setting because for any ``reasonable'' notion of equilibrium, the action played by player $i$ must depend in a highly sensitive way on her own type (since she must be playing some sort of best response).  To circumvent this barrier, we define a relaxation that we call \emph{joint differential privacy.}
Roughly speaking, a mechanism is jointly differentially private if for every player $i$, the joint distribution on outcomes given to the $n-1$ players $j \neq i$ is stable with respect to player $i$'s report. It differs from differential privacy in that the outcome given to player $i$ can depend in an arbitrary way on player $i$'s own report.
\begin{definition}[Joint Differential Privacy] \label{def:privacy}
A mechanism $\cM$ satisfies \emph{$(\eps,\delta)$-joint differential privacy} if for any player $i$, any $\tau_i, \tau_{i}' \in \mathcal{T} \cup \{\bot\}$, any tuple of utilities for players $i' \neq i$, $\tau_{-i},$ and any event $B \subseteq \cR^{n-1}$,
\begin{align*}
\Prob{\cM}{\left(\cM(\tau_i, \tau_{-i}) \right)_{-i} \in B} \leq e^{\eps} \Prob{\cM}{\left(\cM(\tau_{i}',\tau_{-i})\right)_{-i} \in B} + \delta.
\end{align*}
\end{definition}

\subsection{Congestion Games}
As we suggested earlier, our results  corresponding to weak mediators hold only for the class of large congestion games.
An instance of a congestion game with $n$ players is given by a set of $m$ facilities $E$.  The set of actions available to a player in the game is a subset of $2^{E}$ which depends on her type. Further, there is a loss function $\ell_e: \R \to [0,1]$ for each facility that maps the number of people using that facility to a value between zero and one.

Players wish to minimize their cost, which is the sum of the losses on the facilities that they play on.  For a set of $n$ players with type profile $\tau = (\tau_i)_{i=1}^n$, we denote the vector of their chosen actions by $\mathbf{a} = (a_i)_{i = 1}^n \in A$ (recall that each action corresponds to a subset of the facilities).  For a given profile, we will denote the number of people choosing a facility $e\in E$ by $y_e(\ba)$, which for simplicity we will write as $y_e$ when the chosen actions are understood.  We now  define the cost of a player of type $\tau_i$ fixing a given action profile.
$$
c(\tau_i, \mathbf{a}) = \sum_{e\in a_i}\ell_e(y_e).
$$
Note that in the previous section we referred to utilities of each player, whereas here we refer to costs. This is merely notational--- we could equivalently define the utility for a player of type $\tau_i$ for a fixed action profile $\ba$ to be
$$
u(\tau_i,\ba) = \sum_{e \in a_i} (1-\ell_e(y_e)),
$$
which has an upper bound of $m$.

We define the sensitivity of a loss function to quantify how much a player can impact the loss on each facility by a unilateral deviation.
\begin{definition}(Sensitivity) The \emph{sensitivity} $\sigma_\ell$ for an $n$-player congestion game with facilities $E$ and loss functions $\ell = (\ell_e)_{e \in E}$ is defined as
$$
\sigma_\ell = \max_{e \in E; 0\leq y_e < n}\left| \ell_e(y_e+1) - \ell_e(y_e)\right|.
$$
\end{definition}
  Hence, a congestion game $\G$ with sensitivity $\sigma_\ell$ is $\lambda$-large where $\lambda = m\sigma_\ell$.

One common example of a congestion game is a ``traffic routing game.'', In such a game, the facilities $E$ are the edges of a directed graph $G = (V,E)$, with $|E| = m$.  The actions in the routing game correspond to paths in the graph, which are subsets of the edges.  The type $\tau$ of a player specifies a particular source and destination pair in the graph between which the player wishes to route traffic, and feasible actions for a player are those subsets of edges that form paths between the source and destination pair given by the player's type.  Recall that $A$ denotes the union of all action spaces: in traffic routing games, the set of all simple paths in $G$.
\subsection{Mediated Games}

The main result of this paper is a reduction that takes an arbitrary large game $\mathcal{G}$ of incomplete information and modifies it to have an ex-post Nash equilibrium implementing equilibrium outcomes of the corresponding full-information game defined by the \emph{realized} agent types.
Specifically, we modify the game by introducing the option for players to use a \emph{mediator} $\cM$ that can recommend actions to the players.  A mediator is defined to be an algorithm $\cM: (\mathcal{T}\cup \{\perp\})^n \to A$, which takes as input reported types (or $\bot$ for any player who declines to use the mediator), and outputs a suggested action to each player who opted in. We consider two cases, depending on whether the mediator is weak or strong.

For a weak mediator $\cM$, we construct a new game $\cG^W_\cM$ called the \emph{weakly mediated game} induced by the mediator $\cM$. Informally, in $\cG^W_\cM$, players have several options.  They can \emph{opt-out} of using the mediator (i.e. report $\perp$) and select an action independently of it. Alternately they can \emph{opt-in} and report to it some type (not necessarily their true type), and receive a suggested action $a_i$. We will denote the action set to a player of type $\tau_i$ as $A(\tau_i)$ or simply as $A_i$ for the $i$th player of type $\tau_i$.  Players are free to follow this suggestion or use it in some other way.  That is, they can play an action $f_i(a_i)$ for some arbitrary function $f_i:A(\tau_i') \rightarrow A_i$ where $\tau_i'$ is the reported type of player $i$. Formally, the game $\cG^W_\cM$ has an action set $A^W_i$ for each player $i$ defined as:
$$
A^1_{i} = \{(\tau_i', f_i): \tau_i' \in \mathcal{T}, f_i : A(\tau_i') \to A_{i}\} \qquad
A^2_i = \{ (\perp, f_i): f_i \text{ is constant} \},\\
$$
$$
A^W_i = A^1_i \cup A^2_i.
$$
We define $ A^W = \prod_{i=1}^n A^W_i$.  Mediators may be randomized, so we define the utility functions $ u^W_\cM$ in $\cG^W_\cM$ to be the expected utility that agents receive in the original game (defined by utilities $u: \mathcal{T} \times A \to [0,m]$ in $\G$) when players use functions $f_i$ to interpret the suggested actions provided by the mediator.  We will use the notation $\mathbf{f}(\mathbf{a}) = (f_1(a_1),\cdots, f_n(a_n) ). $
$$
 u^W_\cM:\mathcal{T}\times  A^W \to \R;  \quad \quad  u^W_\cM(\tau_i, (\tau',\mathbf{f})) =  \E_{\ba \sim \cM(\tau')}[u\left(\tau_i,\mathbf{f}(\ba)\right)].
$$

We also consider a stronger form of mediation, in which the mediator has the power to verify reported agent types (equivalently, agents do not have the ability to misreport their type to the mediator if they choose to opt in). To model such mediators, we limit the action space of $\cG^W_\cM$ so that players can either \emph{opt-out} of using the mediator $\cM$ (i.e. report type $\bot$) or \emph{opt-in} to using $\cM$, (which we denote as a report of type $\top$), which is equivalent to her reporting her true type.  Formally, given a game $\mathcal{G}$ defined by an action set $A$, a type space $\mathcal{T}$, and a utility function $u$, we define the \emph{strongly mediated game} $\cG^S_\cM$, parameterized by an algorithm $\cM:\{\mathcal{T}\cup \{\bot\}\}^n \rightarrow A$, as the game where player $i$ chooses an action from $A_i^S$ defined as
$$
 (A_i')^1 = \{(\top, f) | f_i:A_i\rightarrow A_i\} \qquad A^2_i = \{ (\perp, f_i): f_i \text{ is constant} \}
$$
$$
A^S_i= (A_i')^1 \cup A_i^2.
$$
We then define $A^S = \prod_{i=1}^n A^S_i$.  Given a set of choices by the players in a strongly mediated game $\cG^S_\cM$, we define a vector $\mathbf{x} = (x_i)_{i=1}^n$ such that $x_i = \tau_i$ for each player $i$ who chose $(\top, f_i)$ (each player who opted in), and $x_i = \bot$ for each player $i$ who chose $(\bot, a_i)$ (each player who opted out). The mediator then computes $\M(\mathbf{x}) = \ba $.  This results in a vector of actions $\ba$ from the original game $\mathcal{G}$, one for each player. For each player who opted in, they play the action $f_i(a_i)$. For each person who opted out, they play an action which was chosen independently of the mechanisms output.  The utility $u^S_\cM$ for the strongly mediated game $\cG^S_\cM$ is then
$$
u^S_\cM:\mathcal{T}\times A^S \to \R  \quad \quad u^S_\cM(\tau_i, \bx) =  \E_{\ba \sim \cM(\bx)}[u\left(\tau_i,\mathbf{f}(\ba)\right)].
$$

\section{Joint Differential Privacy and Incentives} \label{sec:truthfulness}
We have now defined two augmented games parametrized by an algorithm $\cM$ and in both, players can deviate from the suggested action of the mediator  $\cM$: the weakly mediated game $\cG^W_\cM$ allows players to falsely report types to the mediator, and the strongly mediated game $\cG^S_\cM$ does not (although it allows players to opt out of using the mediator and withhold their type).  We now show that if the algorithm $\cM$ satisfies certain properties then in the resulting mediated game it is an asymptotic ex-post Nash equilibrium for every player to truthfully report their types to the algorithm $\cM$ and follow the suggested action of $\cM$ in $\cG^W_\cM$ or opt-in to using $\cM$ and follow its suggestion in $\cG^S_\cM $.

\begin{theorem}[Weakly Mediated Games]
Let $\M$ be a mechanism satisfying $(\epsilon,\delta)$-joint differential privacy such that, when $\M$ is run on any input type profile $\tau$, with probability at least $1-\beta$,  $\M$ computes an $\eta$-approximate Nash equilibrium of the complete-information game defined by $\tau.$  Then it is an $\eta'-$ approximate ex-post Nash equilibrium of the weakly mediated game $\cG^W_\M$ for each player $i$ to play $(\tau_i, f_i)$ where $\tau_i$ is $i$'s actual type, $f_i$ is the identity function for each $i \in [n]$, and
$$\eta' = \eta + U(\epsilon + \beta +\delta) $$
for an upper bound $U$ on the utilities of every player. Informally, we will call the action profile $(\tau, \mathbf{f})$ ``good behavior.''
\label{lem:main}
\end{theorem}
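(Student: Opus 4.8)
The plan is to fix an arbitrary type profile $\tau\in\mathcal{T}^n$ and player $i$, compare player $i$'s expected payoff under good behavior with the payoff from an arbitrary deviation, and bound the gap by $\eta' = \eta + U(\epsilon+\beta+\delta)$. A deviation is an element $(\tau_i', f_i')\in A^W_i$; note this uniformly covers both opting in with a false report (any $\tau_i'\in\mathcal{T}$ and any $f_i'\from A(\tau_i')\to A_i$) and opting out (where $\tau_i'=\bot$ and $f_i'$ is constant), since in all cases $\M$ is run on the valid input $(\tau_i',\tau_{-i})\in(\mathcal{T}\cup\{\bot\})^n$ and joint differential privacy is stated for reports in $\mathcal{T}\cup\{\bot\}$. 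Under good behavior, $\M$ is run on $\tau$, everyone plays their suggestion, and $i$'s expected payoff is $V_{\mathrm{good}}=\E_{\ba\sim\M(\tau)}[u(\tau_i,\ba)]$. Under the deviation $(\tau_i',f_i')$ with everyone else still behaving well, $\M$ is run on $(\tau_i',\tau_{-i})$, the players $j\neq i$ play their suggested actions $\ba'_{-i}$, and $i$ plays $f_i'(a_i')$, so $i$'s expected payoff is $V_{\mathrm{dev}}=\E_{\ba'\sim\M(\tau_i',\tau_{-i})}[u(\tau_i,(f_i'(a_i'),\ba'_{-i}))]$.

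The conceptual core is the observation motivating Section~\ref{sec:truthfulness}: ordinary differential privacy would say player $i$'s report barely moves the whole output distribution, which is impossible since $i$'s own recommendation must respond to $i$'s report; joint differential privacy instead says only that the distribution of the \emph{other} players' recommendations $\ba'_{-i}$ is $(\epsilon,\delta)$-insensitive to $i$'s report, and that is exactly what is needed. The only reason $i$ could gain by deviating is to steer what the others are told to do, but joint differential privacy forbids that; and against the essentially unchanged distribution of $\ba_{-i}$ the mechanism already hands $i$ an approximate best response, by the approximate-Nash guarantee.

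I would carry this out in four steps. \textbf{Step 1 (collapse the interpretation function to a best response).} For every realization, $u(\tau_i,(f_i'(a_i'),\ba'_{-i}))\le g(\ba'_{-i})$, where $g(\ba_{-i}):=\max_{b\in A_i}u(\tau_i,(b,\ba_{-i}))\in[0,U]$; since $g$ depends only on the $-i$ coordinates, taking expectations gives $V_{\mathrm{dev}}\le\E_{(\M(\tau_i',\tau_{-i}))_{-i}}[g]$. \textbf{Step 2 (apply joint differential privacy).} By Definition~\ref{def:privacy}, $(\M(\tau_i',\tau_{-i}))_{-i}$ and $(\M(\tau))_{-i}=(\M(\tau_i,\tau_{-i}))_{-i}$ are $(\epsilon,\delta)$-close, so integrating the tail bound $\Pr[g>t]\le e^{\epsilon}\Pr_{\M(\tau)}[g>t]+\delta$ over $t\in[0,U]$ yields $\E_{(\M(\tau_i',\tau_{-i}))_{-i}}[g]\le e^{\epsilon}\,\E_{\ba\sim\M(\tau)}[g(\ba_{-i})]+\delta U$. \textbf{Step 3 (use the approximate-Nash guarantee).} Let $E\subseteq A$ be the set of outputs that are $\eta$-approximate pure Nash equilibria of the game defined by $\tau$, so $\Pr_{\ba\sim\M(\tau)}[\ba\in E]\ge 1-\beta$. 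On $E$, the definition of approximate Nash equilibrium gives $g(\ba_{-i})\le u(\tau_i,\ba)+\eta$; off $E$ we use only $g\le U$; hence, using $u\ge 0$, $\E_{\ba\sim\M(\tau)}[g(\ba_{-i})]\le V_{\mathrm{good}}+\eta+\beta U$. \textbf{Step 4 (combine).} Chaining Steps 1--3, $V_{\mathrm{dev}}\le e^{\epsilon}(V_{\mathrm{good}}+\eta+\beta U)+\delta U$; since each of $V_{\mathrm{good}},\eta,\beta U$ is at most $U$ and $e^{\epsilon}=1+O(\epsilon)$, the multiplicative factor contributes only a lower-order $O(\epsilon U)$ term, so $V_{\mathrm{dev}}\le V_{\mathrm{good}}+\eta+U(\epsilon+\beta+\delta)$ up to the absorbed constant, which is the claimed inequality $u^W_\M(\tau_i,\text{good behavior})\ge u^W_\M(\tau_i,\text{deviation})-\eta'$.

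I expect the work to be conceptual rather than computational. Step 1 is the crux: one must replace an arbitrary (and possibly degenerate ``opt out and play a constant'') interpretation function by a pointwise best response, so that joint differential privacy can be invoked on a single fixed bounded function of $\ba_{-i}$ alone --- this is exactly where it matters that we only need stability of the \emph{others'} recommendations. The other subtlety is in Step 3: the approximate-Nash property is a property of the realized output of $\M(\tau)$, not of the mechanism in expectation, so the $\beta$ failure probability must be absorbed by the crude bound $g\le U$ on the bad event rather than by anything finer. Everything else is routine $(\epsilon,\delta)$ bookkeeping, and the only place a careful reader might want more detail is the final passage from the $e^{\epsilon}$ multiplicative factor to the clean additive form stated in the theorem.
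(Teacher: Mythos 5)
Your overall structure mirrors the paper's proof exactly: collapse the interpretation function to a pointwise best response $g(\ba_{-i})=\max_{b\in A_i}u(\tau_i,(b,\ba_{-i}))$ so that only a function of $\ba_{-i}$ survives, invoke joint differential privacy on that marginal, use the approximate-Nash guarantee, and charge the $\beta$ failure event at the crude rate $U$. Your Steps 1--3 are correct, and your Step 3 is in fact a cleaner handling of the failure event than the paper's slightly informal ``condition first, apply privacy afterwards'' phrasing.

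The one place where your argument does not land exactly where the theorem does is Step 4, and it is not just a matter of presentation. You chain the inequalities in the direction
\[
V_{\mathrm{dev}} \;\le\; \E_{\M(\tau')_{-i}}[g] \;\le\; e^{\epsilon}\,\E_{\M(\tau)_{-i}}[g] + U\delta \;\le\; e^{\epsilon}\bigl(V_{\mathrm{good}}+\eta+\beta U\bigr) + U\delta,
\]
which leaves you with an $e^{\epsilon}$ \emph{multiplier} on a quantity that can be as large as $U$. The excess over the additive form is $(e^{\epsilon}-1)\E_{\M(\tau)_{-i}}[g]$, and since $e^{\epsilon}-1>\epsilon$ for every $\epsilon>0$, this cannot be absorbed into $\epsilon U$; you would end up with $\eta' \le \eta + U\bigl((e^{\epsilon}-1)+\beta+\delta\bigr)$, strictly weaker than the stated $\eta + U(\epsilon+\beta+\delta)$. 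The paper avoids this by running the chain from the other end: starting from $V_{\mathrm{good}}$, deducting $\eta$ and $U\beta$, and then writing
\[
\E_{\M(\tau)_{-i}}[g] \;\ge\; e^{-\epsilon}\,\E_{\M(\tau')_{-i}}[g] - U\delta \;\ge\; (1-\epsilon)\,\E_{\M(\tau')_{-i}}[g] - U\delta \;\ge\; \E_{\M(\tau')_{-i}}[g] - \epsilon U - U\delta,
\]
using $e^{-\epsilon}\ge 1-\epsilon$ (equivalently $1-e^{-\epsilon}\le\epsilon$), which yields exactly the additive $\epsilon U$ slack. So the asymmetry in the privacy inequality matters: $1-e^{-\epsilon}\le\epsilon\le e^{\epsilon}-1$, and you must place the multiplicative distortion on the side where it shrinks rather than inflates. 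Reversing the direction of your Step 4 chain (lower-bound $V_{\mathrm{good}}$ in terms of $\E_{\M(\tau')_{-i}}[g]$, then in terms of $V_{\mathrm{dev}}$) gives the theorem's constant with no further work.
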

\begin{proof}
We consider the strategy profile $(s_1, \cdots, s_n)$ where $s_i(\tau_i) = (\tau_i, f_i)$ is the action in $\cG^W_\M$ where player $i$ reports $\tau_i$ and then faithfully follows the suggestion of $\M$, i.e. $f_i$ is the identity function.  Consider a unilateral deviation for player $i$, $s_i'(\tau_i)= (\tau_i',f'_i)$ where $f_i': A(\tau_{i}') \to A_i$ is some function mapping actions feasible for reported type $\tau_i'$ to actions feasible for her actual type $\tau_i$.  We seek to bound the gain that she can obtain through this unilateral deviation:
$$
 u^W_\M(\tau_i,\s(\tau))  -  u^W_\M(\tau_i, (s_i'(\tau_i),\s_{-i}(\tau_{-i}))).
$$
Let $\mathbf{\tau}'$ be the type profile that results when player $i$ makes this unilateral deviation i.e. $\tau_{-i} = \tau_{-i}'$.  We define the best response action for a player of type $\tau_i$ given the action profile of the other players to be
$$
BR_{\tau_i}(\ba_{-i}) = \argmax_{ a_i \in A_i} \{u(\tau_i, (a_i,\ba_{-i}))\}.
$$
breaking ties arbitrarily.
We first condition on the event that $\M$ outputs an $\eta$-approximate Nash Equilibrium (which occurs with probability at least $1-\beta$). We then know,
\begin{align}
 u^W_\M(\tau_i,\s(\tau))&  = \Ex{\ba \sim \M(\tau)}{u(\tau_i,(a_i,\ba_{-i}) )} \nonumber\\
& \geq \Ex{\ba \sim \M(\tau)}{u(\tau_i, (BR_{\tau_i}(\ba_{-i}),\ba_{-i})) }-\eta .
\label{eq:ne_not_corr}
\end{align}
Next, we invoke the privacy condition:
\begin{align*}
 u^W_\M(\tau_i,\s(\tau))&  \geq \exp(-\epsilon) \Ex{(a_i',\ba_{-i}) \sim \M(\tau')}{u(\tau_i, (BR_{\tau_i}(\ba_{-i}),\ba_{-i})) }- \eta  -U\delta \\
& \geq  \Ex{(a_i',\ba_{-i}) \sim \M(\tau')}{  u(\tau_i, (BR_{\tau_i}(\ba_{-i}),\ba_{-i})) }  - \eta- U\epsilon- U\delta  \\
& \geq  \Ex{(a_i',\ba_{-i})\sim \M(\tau')} { u(\tau_i, (f'_i(a'_i),\ba_{-i})) }- \eta - U\epsilon-U\delta \\
& =  u^W_\M(\tau_i, (s_i', \s_{-i}(\tau_{-i}))) -\eta- U\epsilon - U\delta.
\end{align*}
The first inequality follows from joint differential privacy, the second from the inequality $\exp(-\epsilon) \geq 1- \epsilon$ together with the fact that $U$ is an upper bound on the utility function, and the third from the definition of the best response.  Combining this analysis with the $\beta$-probability event that the mechanism fails to output an approximate Nash equilibrium (and hence any player might be able to improve his utility by $U$) gives the result:
\begin{equation*}
 u^W_\M(\tau_i,\s(\tau)) \geq  u^W_\M(\tau_i, (s_i', \s_{-i}(\tau_{-i}))) - \eta -U\epsilon - U \delta-U\beta. \qedhere
\end{equation*}\end{proof}

Because we cannot compute Nash equilibria efficiently in general games, and because there are settings in which we know how to privately compute correlated equilibria, but do not know how to privately compute Nash equilibria, we would like to know what is implied when a mediator computes a correlated equilibrium.

\begin{theorem}[Strongly Mediated Games]
Let $\M$ be a mechanism satisfying $(\epsilon,\delta)$-joint differential privacy such that, on any input type profile $\tau,$ with probability at least $1-\beta$, $\M$ computes an $\eta$-approximate correlated equilibrium of the complete-information game defined by $\tau$.  Then it is an $\eta'-$ approximate ex-post Nash equilibrium of the strongly mediated game $\cG^S_\M$ for each player $i$ to play $(\top , f_i)$ where $f_i$ is the identity function for each $i \in [n]$, and
$$\eta' = \eta + U(\epsilon + \delta +\beta) $$
where $U$ is an upper bound to the utilities of every player. We will call the action profile $(\top, \mathbf{f})$ ``nice" behavior.
\label{thm:truth}
\end{theorem}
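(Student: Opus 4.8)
The plan is to follow the same template as the proof of Theorem~\ref{lem:main} for weakly mediated games, adapting the argument from pure-strategy Nash equilibrium to correlated equilibrium, and exploiting the fact that in the strongly mediated game a player who opts in cannot misreport her type (she can only opt out or apply a post-processing function $f_i$ to the suggestion). First I would fix a player $i$, a type profile $\tau$, and the candidate ``nice'' strategy profile in which every player plays $(\top, f_i)$ with $f_i$ the identity. A unilateral deviation for $i$ is of one of two forms: either opt in but post-process the suggestion via some function $f_i' : A_i \to A_i$ (the report $\top$ is forced, so the input profile seen by $\M$ is still $\tau$), or opt out, in which case $i$ reports $\bot$ and plays some fixed action $a_i'$. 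In the first case the mediator runs on the \emph{same} input $\tau$, so privacy is not even needed — we just invoke the $\eta$-approximate correlated equilibrium guarantee directly against the strategy modification $f_i'$ (using Definition~\ref{def:ace}), picking up only $\eta$ plus the $U\beta$ failure term. In the second case the input to $\M$ changes from $\tau$ to $\tau' = (\bot, \tau_{-i})$, which is a single-coordinate change, so joint differential privacy applies to the distribution of suggestions received by the \emph{other} players $j \ne i$.

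For the opt-out case, the key steps, in order, are: (1) condition on the probability-$(1-\beta)$ event that $\M(\tau)$ is an $\eta$-approximate correlated equilibrium; (2) bound $i$'s utility under nice behavior from below by her utility when she plays a best response to the realized $\ba_{-i} \sim \M(\tau)$ — this is where the correlated-equilibrium property enters, using the best-response function $f_i = BR_{\tau_i}$ as the strategy modification in Definition~\ref{def:ace}, costing $\eta$; (3) pass from the distribution of $\ba_{-i}$ under $\M(\tau)$ to its distribution under $\M(\tau')$ using $(\eps,\delta)$-joint differential privacy, since $u(\tau_i, (BR_{\tau_i}(\ba_{-i}), \ba_{-i}))$ is a bounded function (by $U$) of the other players' suggestions only, costing $U\eps + U\delta$ via $e^{-\eps} \ge 1-\eps$; (4) observe that $i$'s actual utility from opting out and playing $a_i'$ is at most $\Ex{\ba_{-i} \sim \M(\tau')}{u(\tau_i, (BR_{\tau_i}(\ba_{-i}), \ba_{-i}))}$ by definition of best response; (5) fold in the $\beta$-probability failure event, which can cost at most $U\beta$. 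Chaining these yields $u^S_\M(\tau_i, \text{nice}) \ge u^S_\M(\tau_i, \text{deviation}) - \eta - U\eps - U\delta - U\beta$, which is exactly the claimed $\eta'$.

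One subtlety I would be careful about: in step~(3) the quantity being compared must genuinely depend only on the coordinates $-i$ of $\M$'s output, so that Definition~\ref{def:privacy} (which only controls the marginal on $\cR^{n-1}$) applies. Since $BR_{\tau_i}$ is $i$'s \emph{own} choice — computed by $i$ from the observed $\ba_{-i}$, not read off from $\M$'s $i$-th coordinate — the integrand $u(\tau_i, (BR_{\tau_i}(\ba_{-i}), \ba_{-i}))$ is indeed a function of $\ba_{-i}$ alone, so this goes through; I would state this explicitly. The main (and really only) obstacle compared to Theorem~\ref{lem:main} is conceptual rather than technical: handling the opt-out deviation, where the realized game that $\M$ is solving is a \emph{different} (smaller) game on input $\tau'$, and making sure the best-response comparison is set up so that joint differential privacy — not full differential privacy, which we do not have — suffices. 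Everything else is a routine re-run of the weak-mediator argument with Definition~\ref{def:ace} in place of the pure Nash condition, so I would keep the write-up short and, if desired, simply remark that the proof is ``identical to that of Theorem~\ref{lem:main}, replacing inequality~\eqref{eq:ne_not_corr} with the approximate correlated equilibrium condition of Definition~\ref{def:ace}.''
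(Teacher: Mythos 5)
Your treatment of the opt-in-and-post-process deviation matches the paper, but your opt-out argument has a genuine gap at step~(2). You propose to ``use the best-response function $f_i = BR_{\tau_i}$ as the strategy modification in Definition~\ref{def:ace}.'' This is not legal: the swap functions in Definition~\ref{def:ace} have domain $A_i$ (they map player $i$'s \emph{own} recommendation to a new action), whereas $BR_{\tau_i}$ has domain $A_{-i}$. In fact, your own observation that $u(\tau_i, (BR_{\tau_i}(\ba_{-i}), \ba_{-i}))$ ``is a function of $\ba_{-i}$ alone'' --- which you correctly note is what makes the privacy step~(3) work --- is precisely the reason $BR_{\tau_i}$ \emph{cannot} be fed into the correlated-equilibrium guarantee. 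The inequality you want,
$$
\Ex{\ba\sim\M(\tau)}{u(\tau_i,\ba)} \;\geq\; \Ex{\ba\sim\M(\tau)}{u(\tau_i,(BR_{\tau_i}(\ba_{-i}),\ba_{-i}))} - \eta,
$$
is genuinely false for approximate correlated equilibria: in matching pennies the uniform correlated equilibrium is exact ($\eta = 0$) but the right-hand side exceeds the left by $1/2$, since conditioning on the realized $\ba_{-i}$ lets the deviator extract value that the correlation hides. This is the one place where the weak-mediator template (Theorem~\ref{lem:main}), which uses a pointwise Nash guarantee and therefore tolerates the realized best response, does not port over.

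The fix is simpler than what you attempted, and it is what the paper does. For the opt-out deviation $s_i''(\tau_i) = (\bot, a_i')$, do not introduce $BR_{\tau_i}$ at all. Apply Definition~\ref{def:ace} with the \emph{constant} swap function $f(a_i) \equiv a_i'$, giving $\Ex{\M(\tau)}{u(\tau_i,\ba)} \geq \Ex{\M(\tau)}{u(\tau_i,(a_i',\ba_{-i}))} - \eta$; then, since $u(\tau_i,(a_i',\ba_{-i}))$ with $a_i'$ fixed depends only on $\ba_{-i}$, invoke joint differential privacy to swap $\M(\tau)$ for $\M(\bot,\tau_{-i})$ at cost $U\eps + U\delta$, and finish by adding $U\beta$ for the failure event. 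Your steps~(1), (3), (5) and your privacy subtlety are all correct --- the error is isolated to inserting the best response between the CE bound and the privacy step, which both overshoots what the CE property supplies and is unnecessary.
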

\begin{proof}
We consider the strategy profile $(s_1, \cdots, s_n)$ where $s_i(\tau_i) = (\top, f_i)$ the action in $\cG_\M^S$ where player $i$ opts-in to using the mediator $\M$ and $f_i$ is the identify function.  There are two types of deviations that a player $i$ can consider: $s_i'(\tau_i) = (\top, f'_i)$ for some function $f'_i:A_i\rightarrow A_i$ not the identity function, and $s_i''(\tau_i)=(\bot, a_i')$ for some action $a'_i$. First, we consider deviations of the first kind, $s_i'$.  Conditioning on the event that $\M$ outputs an $\eta$ approximate correlated equilibrium of $\G$ (which occurs with probability at least $1-\beta$):
\begin{align*}
u^S_\M(\tau_i,\s(\tau)) &= \Ex{\ba \sim \M(\tau)}{u(\tau_i,\ba)} \\
&\geq \Ex{\ba \sim \M(\tau)}{u(\tau_i,(f'_{i}(a_{i}), \ba_{-i})} - \eta \\
&= u^S_\M(\tau_i,(s'_i(\tau_i), \s_{-i}(\tau_{-i}))) - \eta
\end{align*}
where the inequality follows from the fact that $\M$ computes an $\eta$-approximate correlated equilibrium. Now, consider a deviation of the second kind $s_i''(\tau_i) = (\bot,a_i')$. Again, condition on the event that $\M$ outputs an $\eta$-approximate correlated equilibrium of $\G$:
\begin{eqnarray*}
u^S_\M(\tau_i,\s(\tau))  &=&  \Ex{\ba \sim \M(\mathbf{\tau})}{u(\tau_i,\ba)} \\
&\geq& \Ex{\ba \sim \M(\mathbf{\tau})}{u(\tau_i,(a_i',\ba_{-i}))} - \eta \\
&\geq& \exp(-\epsilon)\cdot\Ex{\ba \sim \M(\bot,\mathbf{\tau}_{-i})}{u(\tau_i,(a_i',\ba_{-i}))} - U\delta - \eta\\
&\geq& \Ex{\ba \sim \M(\bot,\mathbf{\tau}_{-i})}{u(\tau_i, (a_{i}',\ba_{-i}))} - U\epsilon - U\delta - \eta \\
&=& u^S_\M(\tau_i,(s_i''(\tau_i),\s(\tau))) -U\eps - U\delta - \eta
\end{eqnarray*}
where the first inequality follows from the $\eta$-approximate correlated equilibrium condition, the second follows from the $(\epsilon,\delta)$-joint differential privacy condition, and the third follows from the fact that for $\epsilon \geq 0$, $\exp(-\epsilon) \geq 1-\epsilon$ and that utilities for $\G$ are bounded by $U$.  Lastly, we consider the event where $\M$ fails to output an approximate correlated equilibrium, which may decrease the expected utility of any player by at most $U \beta$ (since utilities are bounded by $U$, and the event in question occurs with probability at most $\beta$.).  This completes the theorem.
\end{proof}


 The main technical contributions of this paper is that we construct a weak mediator that satisfies the hypothesis of Theorem \ref{lem:main} in large congestion games and we develop a strong mediator that satisfies the hypothesis of Theorem \ref{thm:truth} for all large games.  We first state our main result in the case of weak mediators assuming  $\lambda = O(1/n)$. In the general case, $\lambda$ appears in our theorem as a parameter. Note that we will use the notation $\tilde O(g(\theta)) = O(g(\theta) \log^r(g(\theta)) )$ where $\theta$ is a vector of parameters, $g$ is a function of all the parameters, and $r$ is some constant.

\noindent\textsc{Theorem} (Informal): Let $\G$ be any $O(1/n)$-large congestion game with $m$ facilities 
and nonnegative facility loss functions bounded by $1$.  Then there is a mediator $\M$ such that ``good behavior'' forms an $\eta'$-approximate ex-post Nash equilibrium of the weakly mediated game $\G^W_\M$, for:
$$
\eta' =  \tilde O\left(  \frac{m^{3/4} \sqrt{\log(n)} }{n^{1/4}} \right).
$$
Moreover, when players play according to this equilibrium, the resulting play forms an $\eta'$-approximate Nash equilibrium of the underlying complete-information game induced by the realized player types.
\begin{remark}
Note that $\eta' = o(1)$ as $n$ grows.  Hence, asymptotically, good behavior forms an exact ex-post Nash equilibrium of the weakly mediated game.
\end{remark}

We next state our main result for arbitrary large games (again, under the assumption that $\lambda = O(1/n)$ -- in general, $\lambda$ will appear in the theorem as a parameter).

\noindent\textsc{Theorem} (Informal): Let $\mathcal{G}$ be any $O(1/n)$-large game with nonnegative utilities bounded by $1$ and action set for each player with size bounded by $|A_i| \leq k$. Then there is a mediator $\M$ such that ``nice behavior" forms an $\eta'$-approximate ex-post Nash equilibrium of the strongly mediated game $\G^S_{\M}$, for
  $$\eta' = \tilde O\left( \frac{ \sqrt{k \log(n)} }{n^{1/4}} \right).$$
Moreover, when players play according to this equilibrium, the resulting play forms an $\eta'$-approximate correlated equilibrium of the underlying complete-information game induced by the realized player types.

 \begin{remark}
Note that $\eta' = o(1)$ as $n$ grows.  Hence, asymptotically, nice behavior forms an exact ex-post Nash equilibrium of the strongly mediated game.
 \end{remark}

The mediators that form the basis of both of these results are computationally efficient. In the appendix, we give a computationally \emph{inefficient} mediator that gives an improved dependence on $k$ in the general case.

\section{Weak Mediators for Congestion Games} \label{sec:weakmediator}
In this section, we design an algorithm that computes approximate Nash equilibria in congestion games while satisfying joint differential privacy.

Our approach is to simulate ``best response dynamics,'' i.e. dynamics in which each player best responds to the actions of other players in the previous round. It is known that in congestion games, these dynamics converge to an approximate Nash equilibrium in time linear in $n$ for any fixed approximation factor. In what follows, we describe this simulation as if it was an actual process being carried out by the players. Of course, what we intend is a centralized algorithm (the mediator) which simulates this process as part of its computation.

The main idea is that since each player's cost in a congestion game is a function only of \emph{how many people} are playing on each facility, it is possible to implement best response dynamics by giving each player a view only of a counter, one for each facility, which indicates how many players are currently playing on that facility. To implement the algorithm privately, we replace these exact counters with private counters, which instead maintain a ``noisy'' inexact version of the player counts. In order to bound the scale of the noise that is necessary for privacy, we need to bound how many times any single player might make a best response over the course of the game dynamics.

This is exactly where we use the largeness condition on the game: we show that since single moves of \emph{other} players cannot significantly increase the cost of any player $i$ in a large game, once $i$ has made a best response, a substantial number of other players must move before $i$ can substantially improve again by deviating. In combination with an upper bound on the total number of moves that ever need to be made before convergence, this gives an upper bound on the total number of moves made by any single player -- that depends on the largeness parameter, but not on the number of players. Finally, we show that best response dynamics is tolerant to these low levels of error on the counters.
\\
\indent We go right into designing an algorithm that simulates best response dynamics via an online counter that satisfies differential privacy.  There will be a counter for each facility indicating how many players are using that facility.  The counter keeps track of a stream of bits indicating when players arrive and leave the facility (i.e. when players make unilateral deviations to actions that now include, or no longer include the facility). The stream of bits on a facility $e$, denoted by $\omega_e^{\leq t} = (\omega_e^j)_{j = 1}^t \in \{ -1,0,1\}^t$ represents the state of the facility at times $1$ through $t$. At each time step $t$, $\omega_e^t = 1$ if a player has unilaterally changed his action to one that now includes facility $e$, $\omega_e^t = -1$ if a player has changed his action to one that no longer includes facility $e$, or $\omega_e^t = 0$ if the player changing his action at time $t$ was either playing on facility $e$ both before and after $t$, or neither before nor after $t$. From this stream, we can compute the number of players using a facility $e$ at time $t$, which we denote as $y_e^t = \sum_{j=1}^t \omega_e^j$. We cannot represent this sum exactly while satisfying differential privacy: we instead use a modified version of the binary mechanism of \citeasnoun{Binary} (given in $\abm$ in Algorithm \ref{ABM}) to maintain an approximate estimate of the sum of bits in this stream, which in turn represents at each time period the approximate number of players using the facility.  Here, we will denote a (mean 0) and scale $b$ Laplacian random variable by $\Lap(b)$.  The original binary mechanism $\bm$ is given in the appendix.
\begin{algorithm}[h]
\caption{Partial Sum Table}\label{PST}
\begin{algorithmic}[0]
\INPUT : Stream $\omega^{\leq t} \in \{-1,0,1\}$, past partial count table $\hat \rr $, and privacy parameter $\epsilon'$.
\OUTPUT : Updated partial counts $\hat \rr$.
\Procedure {$\pst$} {$\omega^{\leq t}, 	\hat \rr ,\epsilon'$}
\State Write $t = \sum_{j = 0}^{\lfloor \log(t)\rfloor}b_j(t) 2^j$ where $b_j(t) \in \{ 0,1\}$.
\State $ i \gets \min\{j:b_j(t) \neq 0 \} $
\For{$j = 0, \cdots, i$}
	\State $\rr(j, t/2^{j} ) \gets \sum_{\ell =  t/2^{j} }^t w^\ell $ 
	\State $\hat{\rr}(j, t/2^{j}) \gets \rr(j, t/2^{j} ) + \text{Lap}(1/\epsilon')$
\EndFor
	

\Return $\hat \rr$.
\EndProcedure
\end{algorithmic}
\end{algorithm}

We initialize a partial sum table $\hat \rr^1_e$ to be a table of zeros of size $\lfloor\log(nT) \rfloor + 1$ by $nT$ for each facility $e \in E$.  We update as many as $\log(nT) + 1$ entries of past partial sum table $\hat\rr_e^{t-1}$ to get $\hat\rr^t_e=\pst(\omega_e^{\leq t}, \hat\rr_e^{t-1},\epsilon')$.  We then use the partial sum table $\hat\rr_e^t$ to find the count $\hat y^t_e = \abm(\hat\rr_e^t,t)$ for round $t$.

\begin{algorithm}[h]
\caption{Adaptive Binary Mechanism}\label{ABM}
\begin{algorithmic}[0]
\INPUT : Partial Sum Table $\hat \rr$ and time $t \leq $ (length of the first row of $\hat \rr$).
\OUTPUT : A count $\hat y^t$.
\Procedure {$\abm$} {$\hat \rr,t$}
\State Write $t = \sum_{j = 0}^{\lfloor \log(t)\rfloor}b_j(t) 2^j$
\State $\hat y^t \gets \sum_{j=0}^{\lfloor \log(t) \rfloor} b_j(t) \hat \rr(j,t/2^{j}) $\\
\Return $\hat y^t$.
\EndProcedure
\end{algorithmic}
\end{algorithm}

At each ``time step'', a single player has the option of making a best response move (but will not if she has no move which substantially improves her utility). At time $t$ with action profile $\ba^t$, we compute the noisy count $\hat{y}_e^t = \hat y_e(\ba^t)$ on facility $e$ from the output of $\abm$. At the same time, we denote the exact counts on the facilities as $y_e^t=y_e(\ba^t)$.  We define the noisy costs for each player in terms of the counts $\hat \bby^t = (\hat{y}_e(\ba^t))_{e\in E}$ as
$$
\hat{c}(\tau_i, \ba^t) = \sum_{e \in a_i}\ell(\hat{y_e}(\ba^t) ).
$$
Players will make best responses on rounds when they can substantially decrease their cost according to the \emph{noisy} cost estimates.
\begin{definition}
($\alpha$-Noisy Best Response) Given an action profile $\ba^t$ of a congestion game $\G$, and noisy facility counts $\hat\bby^t$, an $\alpha$-noisy best response $a_i^*$ for player $i$ of type $\tau_i \in \mathcal{T}$ is
$$
a_i^* \in \argmin_{a_i' \in A_i} \left\{ \hat{c}(\tau_i,(a_i',a^t_{-i})): \hat c(\tau_i,\ba^t )- \hat c(\tau_i,(a_i',a_{-i}^t))\geq \alpha\right\}
$$
\end{definition}
We now use the result of \citeasnoun{Binary} which gives a bound on how much the noisy count of players on a facility $y_e^t$ and $\hat y_e^t$ can differ on a stream $\omega_e^{\leq t}$ of length $t \in [T]$ where $\hat y_e^t$ is computed via the binary mechanism $\bm$.
\begin{lemma}(Utility \cite{Binary})
\label{lem:chanerror}
For a facility stream $\omega_e^{\leq T}$ of length $T$, and for $t \in [T]$, let $y_e^t = \sum_{j=1}^t \omega_e^j$, and let $\hat y_e^t$ denote the estimated partial sum computed via the binary mechanism $\hat y_e^{\leq T } = \bm(\omega_e^{\leq T}, \epsilon')$. Then with probability at least $1-\beta'$ for $\beta' \in (2/T,1]$:
$$
\max_{t \in [T]}|\hat{y}_e^t - y_e^t|  \leq \frac{\sqrt{8\log(T)\log(2/\beta')}}{\epsilon'}.
$$
\end{lemma}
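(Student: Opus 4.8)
The displayed bound is precisely the utility guarantee of the binary (``tree-based aggregation'') mechanism $\bm$, so the plan is to reprove it directly: reduce the estimation error at any single time to a sum of a few independent Laplace variables, apply a sub-exponential tail bound to that sum, and then pass to a bound that holds simultaneously over all $t \in [T]$.

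First I would recall the structure of $\bm$. For each dyadic block $B_{j,q} = \{(q-1)2^j+1,\dots,q 2^j\}\subseteq [T]$ it maintains a noisy partial sum $\hat\rr(j,q) = \sum_{s\in B_{j,q}} \omega_e^s + \Lap(1/\epsilon')$, with all the Laplace draws mutually independent. At time $t$, writing $t = \sum_j b_j(t)2^j$ in binary, it releases $\hat y_e^t = \sum_{j:\, b_j(t)=1} \hat\rr(j,q_j(t))$, where $q_j(t)$ indexes the unique level-$j$ block occurring in the canonical dyadic decomposition of $\{1,\dots,t\}$. Because those blocks partition $\{1,\dots,t\}$, their true partial sums add up to $y_e^t$ exactly, so
$$
\hat y_e^t - y_e^t \;=\; \sum_{j:\, b_j(t)=1} \zeta_j^{(t)},
$$
a sum of $k_t := \#\{j: b_j(t)=1\} \le \lfloor\log_2 t\rfloor + 1 \le \lfloor\log_2 T\rfloor + 1$ independent $\Lap(1/\epsilon')$ random variables $\zeta_j^{(t)}$.

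Next I would invoke a Bernstein-type tail bound for a sum $Y=\sum_{i=1}^k X_i$ of i.i.d.\ $\Lap(b)$ variables: from $\E[e^{hX_i}] = (1-h^2b^2)^{-1} \le e^{2h^2b^2}$ for $|hb|\le 1/\sqrt 2$, together with a Chernoff argument, one gets $\prob{|Y|\ge \nu} \le 2\exp(-\nu^2/(8kb^2))$, valid for $0\le\nu\le 2\sqrt2\,kb$. Taking $b = 1/\epsilon'$ and $k\le\lfloor\log_2 T\rfloor+1$, the choice $\nu = \sqrt{8\log(T)\log(2/\beta')}/\epsilon'$ makes $\prob{|\hat y_e^t - y_e^t|\ge\nu}$ at most (essentially) $\beta'$ for each fixed $t$, and the hypothesis $\beta'\in(2/T,1]$ is exactly what keeps this $\nu$ inside the admissible range $\nu\le 2\sqrt2\,kb$ of the Chernoff bound. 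The uniform-over-$t$ statement then follows, either by a union bound over the $T$ time steps (at the cost of an extra factor inside the logarithm) or, to land on the constants as written, by the sharper analysis of \cite{Binary} that exploits the binary-tree structure so that noise is shared across time steps.

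The only non-routine ingredient is the concentration inequality for sums of Laplace variables and keeping track of the admissible range of its deviation parameter; everything else is bookkeeping about the dyadic decomposition. Equivalently, one may simply cite this as the utility theorem for the binary mechanism of \cite{Binary}.
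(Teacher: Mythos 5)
The paper does not prove this lemma; it is quoted directly from \cite{Binary}, so there is no in-paper proof to compare against. Your reconstruction is essentially the standard argument and the core of it is correct: the decomposition of the error at a fixed time $t$ as a sum of $k_t \le \lfloor \log_2 T\rfloor + 1$ independent $\Lap(1/\epsilon')$ variables (since the dyadic blocks that figure in the binary representation of $t$ are disjoint and partition $\{1,\dots,t\}$), the MGF bound $\E[e^{hX}] \le e^{2h^2 b^2}$ for $|hb|\le 1/\sqrt 2$, and the resulting Chernoff bound $\Pr[|Y|\ge \nu] \le 2\exp(-\nu^2/(8kb^2))$ valid for $\nu \le 2\sqrt 2\,kb$, all check out. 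Your observation that the constraint $\beta' > 2/T$ is precisely the requirement that the target deviation $\nu = \sqrt{8\log T\log(2/\beta')}/\epsilon'$ stays inside the Chernoff bound's admissible range is the right explanation for where that hypothesis comes from, and it is not something the lemma statement itself motivates.

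The one place you wave your hands is the passage from the per-$t$ bound to the $\max_{t\in[T]}$ bound, and you are right to flag it. A union bound over $t$ replaces $\log(2/\beta')$ by $\log(2T/\beta')$ under the square root, which in the regime $\beta' > 2/T$ can cost up to a $\sqrt{\log T}$ factor, and the ``sharper analysis that exploits the binary-tree structure'' does not obviously recover the stated constants---the sharing of noise across time steps is what makes the \emph{privacy} budget scale with $\log T$ rather than $T$, but it does not cancel the union bound needed for a uniform \emph{accuracy} guarantee. The version of the utility theorem in \cite{Binary} is really a per-$t$ statement; the paper's phrasing with $\max_t$ (and the downstream ``low error assumption'' that needs uniformity over $t$ and over the $m$ facilities) absorbs the resulting extra $\sqrt{\log T}$ and $\log m$ factors into the $\tilde O(\cdot)$ accounting elsewhere, so this looseness is harmless for the paper's conclusions. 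But as a literal restatement of the cited theorem, the bound should be read per-$t$, or with a $\log(2T/\beta')$ in place of $\log(2/\beta')$ when a supremum over $t$ is intended.
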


It will be necessary to control the error, bounded in Lemma \ref{lem:chanerror} for a single counter, in the worst case over the counters on all facilities in the congestion game.  We will then use $\beta' = \beta/m$ and write:
\begin{equation}
\err= \frac{\sqrt{8\log(T)\log(2m/\beta)}}{\epsilon'}.
\label{eq:error}
\end{equation}
We can then bound the difference $\mathbf{\Delta}$, with probability $1-\beta$, between the noisy costs and the actual costs summed over all facilities being used by a player for a given action profile $\ba^t$
\begin{align}
| \hat c(\tau_i,\ba^t) - c(\tau_i,\ba^t) |& \leq \sum_{e\in a_i^t}\left|\ell_e(\hat y_e(\ba^t)) - \ell_e(y_e(\ba^t)) \right|.\nonumber \\
& \leq  (m \sigma_\ell) \err = \mathbf{\Delta}.
\label{eq:costdiff}
\end{align}
We make the following low error assumption (which is satisfied with probability at least $1-\beta$ by Lemma \ref{lem:chanerror}) so that we can bound the error from the noisy counts with high probability.
\begin{assumpt}(Low Error Assumption)
For each $t \in [T]$ and each facility stream $\{\omega_e^{\leq t}: e \in E\}$, we assume that the error introduced by the adaptive binary mechanism $\abm$ is at most $\err$ (given in \eqref{eq:error})
\end{assumpt}
We can bound the maximum number of $\alpha$-noisy best responses  that can be made under noisy best response dynamics before play converges to an approximate Nash equilibrium. This follows the classic potential function method of \citeasnoun{MS96}.
\begin{lemma}\label{lem:Tnoise}
Under the low error assumption, and for $\alpha > 4\mathbf{\Delta}$, the total number of  $\alpha$-noisy best response moves that can be made before no player has any remaining $\alpha$-noisy best response moves (i.e. play converges to an approximate equilibrium) is bounded, which allows us to set $T$ to be:
\begin{equation}\label{eq:Tnoise}
T = \frac{2mn}{\alpha}.
\end{equation}
\end{lemma}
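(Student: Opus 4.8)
The plan is to run the classical exact-potential argument of Monderer and Shapley, made robust to the bounded error $\mathbf{\Delta}$ that the private counters introduce. First I would recall Rosenthal's potential for the congestion game $\G$, namely $\Phi(\ba) = \sum_{e \in E} \sum_{k=1}^{y_e(\ba)} \ell_e(k)$, together with its defining property: if a single player $i$ unilaterally switches from $a_i$ to $a_i'$ with $\ba_{-i}$ fixed, then $\Phi(\ba) - \Phi(a_i', \ba_{-i}) = c(\tau_i, \ba) - c(\tau_i, (a_i', \ba_{-i}))$. This follows by telescoping the per-facility sums: for each facility $e$ that $i$ leaves the potential drops by $\ell_e(y_e)$ and so does $i$'s cost, for each facility $e$ that $i$ enters the potential rises by $\ell_e(y_e')$ and so does $i$'s cost, and facilities $i$ does not touch contribute nothing. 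Since there are $m$ facilities, each $\ell_e$ takes values in $[0,1]$, and at most $n$ players ever occupy a facility, $\Phi$ is bounded by $0 \le \Phi(\ba) \le mn$ for every profile $\ba$.

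Next I would convert an $\alpha$-noisy best response into a genuine decrease of $\Phi$. By the low error assumption and the bound \eqref{eq:costdiff}, for every profile the noisy cost of any player differs from its true cost by at most $\mathbf{\Delta} = m\sigma_\ell\err$. Hence if at time $t$ player $i$ makes an $\alpha$-noisy best response, moving from $a_i^t$ to $a_i^{t+1}$ (so that $\ba^t$ and $\ba^{t+1}$ differ only in coordinate $i$), the inequality $\hat c(\tau_i, \ba^t) - \hat c(\tau_i, \ba^{t+1}) \ge \alpha$ yields $c(\tau_i, \ba^t) - c(\tau_i, \ba^{t+1}) \ge \alpha - 2\mathbf{\Delta}$, and by the potential property $\Phi(\ba^t) - \Phi(\ba^{t+1}) \ge \alpha - 2\mathbf{\Delta}$. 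Invoking the hypothesis $\alpha > 4\mathbf{\Delta}$, we get $\alpha - 2\mathbf{\Delta} > \alpha/2 > 0$, so every move strictly decreases $\Phi$ by more than $\alpha/2$; on time steps where the designated player has no $\alpha$-noisy improving move and does not move, $\Phi$ is unchanged.

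Finally I would combine the two facts. Starting from $\Phi(\ba^1) \le mn$, monotone non-increasing, and dropping by more than $\alpha/2$ on each of the (say) $N$ moves, we get $mn \ge \Phi(\ba^1) - \Phi(\ba^{N+1}) > N\alpha/2$, i.e.\ $N < 2mn/\alpha$; thus after $T = 2mn/\alpha$ time steps no player can have a remaining $\alpha$-noisy best response, which is exactly convergence to an approximate equilibrium of the noisy dynamics. I do not anticipate a real obstacle; the only point needing care is the bookkeeping about which profiles the low error assumption must control --- it must bound the counter error not only on the realized profiles $\ba^t$ but also on the hypothetical one-step deviations appearing in the definition of an $\alpha$-noisy best response --- and this is precisely what the assumption (quantified over all facility streams of every length $t \le T$) provides, which is also why the factor $4$ (rather than $2$) shows up in the hypothesis $\alpha > 4\mathbf{\Delta}$.
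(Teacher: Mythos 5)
Your proof is correct and follows essentially the same route as the paper's: invoke the Rosenthal potential $\Phi(\ba)=\sum_{e\in E}\sum_{k=1}^{y_e(\ba)}\ell_e(k)$, use the low error assumption to convert an $\alpha$-noisy best response into a true-cost (hence potential) drop of at least $\alpha-2\mathbf{\Delta}$, and divide the bound $\Phi\le mn$ by this per-step decrease. One small inaccuracy in your closing remark: the hypothesis $\alpha>4\mathbf{\Delta}$ is not about controlling the counter error on hypothetical one-step deviations (the bound $|\hat c - c|\le\mathbf{\Delta}$ already covers those, since the deviating player adjusts the noisy counts by her own known contribution); it is simply there to turn the per-move drop $\alpha-2\mathbf{\Delta}$ into the clean lower bound $\alpha/2$, which is what gives the stated $T<2mn/\alpha$.
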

\begin{definition}
(Improvement) The \emph{improvement} of player $i$ of type $\tau_i$ at a given action profile $\ba$ is defined to be the decrease in cost he would experience by making a best response (according to his exact cost function):
$$
Improve(\tau_i, \ba)= c(\tau_i,\ba) -\min_{a_i'\in A_i}\{ c(\tau_i, (a_i', a_{-i})) \}.
$$
\end{definition}
\begin{claim}\label{claim:one}
Let $\ba^t$ be some action profile at time $t \in [T]$.  If player $i$ makes an $\alpha$-noisy best response at time $t$, then under the low error assumption:
\begin{equation*}
Improve(\tau_i, \ba^t)  \leq 2\mathbf{\Delta}
\end{equation*}
\end{claim}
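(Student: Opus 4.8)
The plan is to exploit the fact that, once a move is actually triggered, an $\alpha$-noisy best response is not merely ``some action improving the noisy cost by $\alpha$'' but is in fact the \emph{global} minimizer of player $i$'s noisy cost among all her available actions; a single application of the counter-error bound \eqref{eq:costdiff} on each side then converts this into the asserted bound of $2\mathbf{\Delta}$ on her exact improvement. I would fix notation as follows: let $\ba^{\mathrm{old}}$ be the profile immediately before player $i$'s move and $\ba^t$ the one immediately after (so $\ba^t$ is the profile \emph{resulting from} the move at time $t$); the two differ only in player $i$'s coordinate, hence $a_{-i}^t := \ba^t_{-i} = \ba^{\mathrm{old}}_{-i}$ and $\ba^t = (a_i^*, a_{-i}^t)$ with
\[
a_i^* \in \argmin_{a_i' \in A_i}\Big\{\, \hat c(\tau_i,(a_i', a_{-i}^t)) \;:\; \hat c(\tau_i, \ba^{\mathrm{old}}) - \hat c(\tau_i,(a_i', a_{-i}^t)) \ge \alpha \,\Big\}.
\]

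\textbf{Step 1 (the move is a noisy best response).} Since player $i$ actually moves, the feasible set $S := \{ a_i' \in A_i : \hat c(\tau_i, \ba^{\mathrm{old}}) - \hat c(\tau_i,(a_i', a_{-i}^t)) \ge \alpha\}$ is nonempty, so every member $a_i'$ of $S$ satisfies $\hat c(\tau_i,(a_i', a_{-i}^t)) \le \hat c(\tau_i, \ba^{\mathrm{old}}) - \alpha$. The unconstrained minimizer $\bar a_i := \argmin_{a_i' \in A_i} \hat c(\tau_i,(a_i', a_{-i}^t))$ then has noisy cost no larger than that of any member of $S$, hence itself lies in $S$; consequently $\hat c(\tau_i,(a_i^*, a_{-i}^t)) = \min_{a_i' \in A_i}\hat c(\tau_i,(a_i', a_{-i}^t))$. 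Writing $a_i^{\mathrm{BR}} := \argmin_{a_i' \in A_i} c(\tau_i,(a_i', a_{-i}^t))$ for the \emph{exact} best response, this yields $\hat c(\tau_i,(a_i^*, a_{-i}^t)) \le \hat c(\tau_i,(a_i^{\mathrm{BR}}, a_{-i}^t))$.

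\textbf{Step 2 (noisy to exact).} Under the low error assumption, \eqref{eq:costdiff} holds not only at $\ba^t$ but at every profile obtained from it by a unilateral deviation of player $i$, since such a deviation shifts each facility count by a fixed amount in $\{-1,0,1\}$ and therefore leaves the noisy-versus-exact discrepancy on each facility unchanged. Hence
\[
c(\tau_i,(a_i^*, a_{-i}^t)) \le \hat c(\tau_i,(a_i^*, a_{-i}^t)) + \mathbf{\Delta} \le \hat c(\tau_i,(a_i^{\mathrm{BR}}, a_{-i}^t)) + \mathbf{\Delta} \le c(\tau_i,(a_i^{\mathrm{BR}}, a_{-i}^t)) + 2\mathbf{\Delta},
\]
and since $Improve(\tau_i, \ba^t) = c(\tau_i, \ba^t) - c(\tau_i,(a_i^{\mathrm{BR}}, a_{-i}^t)) = c(\tau_i,(a_i^*, a_{-i}^t)) - c(\tau_i,(a_i^{\mathrm{BR}}, a_{-i}^t))$, this is precisely $Improve(\tau_i, \ba^t) \le 2\mathbf{\Delta}$. (The hypothesis $\alpha > 4\mathbf{\Delta}$ is not used here; it enters only later, in bounding the number of moves.)

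The step I expect to require the most care is Step 1 --- noticing that the $\alpha$ threshold in the definition of an $\alpha$-noisy best response never screens off the true noisy-cost minimizer once a move is actually made, so that an $\alpha$-noisy best response behaves, after the fact, like an exact best response up to the per-facility noise $\mathbf{\Delta}$. The accompanying remark in Step 2 --- that \eqref{eq:costdiff} may legitimately be invoked at the hypothetical deviation profiles $(a_i', a_{-i}^t)$ appearing in the definition of $Improve$, and not only at realized profiles --- is routine but should be made explicit, since it relies on the deterministic effect of player $i$'s own deviation on the facility counts.
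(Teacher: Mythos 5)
Your proof is correct and is essentially the paper's argument: both proceed by converting the exact improvement into a difference of noisy costs using \eqref{eq:costdiff} on each side, and then observing that the post-move profile $\ba^t$ has noisy cost at most that of the exact best-response deviation, which kills the leading term and leaves $2\mathbf{\Delta}$. Your Step~1, establishing that the $\alpha$-noisy best response is in fact the \emph{unconstrained} minimizer of the noisy cost once a move is actually triggered, is exactly the (implicit) justification for the paper's final ``$\leq 0$'' and is a welcome clarification of a point the paper glosses over.
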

Note that if player $i$ of type $\tau_i$ does not make an $\alpha$-best response at time $t$, then we can bound her improvement by
$$
Improve(\tau_i, \ba^{t+1}) \leq Improve(\tau_i, \ba^t) + m\sigma_\ell.
$$
We wish to lower bound the \emph{gap} between pairs of time steps $t, t'$ at which any player $i$ can make $\alpha$-noisy best responses.
\begin{definition}(Noisy Gap)  The \emph{noisy gap} $\Gamma$ of a congestion game $\G$ is:
\begin{align*}
\Gamma = \min\{ |t-t'| : t \neq t', \text{some player $i$ makes an $\alpha$-noisy best responses at $t$ and $t'$} \}.
\end{align*}
\end{definition}
\begin{lemma}
\label{lem:boundedmoves}
Assume that $\alpha > 4 \mathbf{\Delta}$. Under the low error assumption, the noisy gap $\Gamma$ of a congestion game $\G$ satisfies
$$
\Gamma \geq \frac{\alpha - 2\mathbf{\Delta} }{m \sigma_\ell}.
$$
Furthermore, the number $p$ of $\alpha$-noisy best responses any individual player can make is bounded by
\begin{equation}
p \leq \frac{4m^2n\sigma_\ell}{\alpha^2}.
\label{eq:knoise}
\end{equation}
\label{lem:noisygap}
\end{lemma}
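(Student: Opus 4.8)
The plan is to prove the lower bound on the noisy gap $\Gamma$ first, and then deduce the bound on $p$ immediately from it together with the horizon bound $T = 2mn/\alpha$ of Lemma~\ref{lem:Tnoise}. For $\Gamma$: fix a player $i$ and two consecutive times $t < t'$ at which $i$ makes an $\alpha$-noisy best response (``consecutive'' meaning $i$ makes no such move strictly between them). Immediately after her move at time $t$, player $i$ plays the action that minimizes her \emph{noisy} cost against the current opponent profile, so Claim~\ref{claim:one} bounds her true improvement there by $2\mathbf{\Delta}$. Since $i$ does not move again until $t'$, every one of the intervening time steps is a move by some other player, and each such move changes the load on any facility by at most one and hence (by the observation following Claim~\ref{claim:one}) increases $Improve(\tau_i, \cdot)$ by at most $m\sigma_\ell$. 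On the other hand, for $i$ to make an $\alpha$-noisy best response at $t'$ her \emph{noisy} improvement must be at least $\alpha$, so by~\eqref{eq:costdiff} her true improvement just before $t'$ is at least $\alpha - 2\mathbf{\Delta}$. Thus $Improve(\tau_i, \cdot)$ must climb from at most $2\mathbf{\Delta}$ to at least $\alpha - 2\mathbf{\Delta}$ in increments of at most $m\sigma_\ell$ apiece, which (using $\alpha > 4\mathbf{\Delta}$, so the target genuinely exceeds the starting point) forces $t' - t \ge (\alpha - 2\mathbf{\Delta})/(m\sigma_\ell)$; minimizing over players and over all such pairs of times gives the claimed bound on $\Gamma$.

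For $p$: by Lemma~\ref{lem:Tnoise} the noisy best-response process makes at most $T = 2mn/\alpha$ moves in total, so in particular it runs for at most $T$ time steps. If player $i$ makes $p$ of these moves, at times $t_1 < \dots < t_p \le T$, then $t_{j+1} - t_j \ge \Gamma$ for every $j$, so $T \ge t_p \ge (p-1)\,\Gamma$, i.e. $p \le 1 + T/\Gamma$. Since $\alpha > 4\mathbf{\Delta}$ gives $\Gamma \ge (\alpha - 2\mathbf{\Delta})/(m\sigma_\ell) \ge \alpha/(2m\sigma_\ell)$, substituting $T = 2mn/\alpha$ yields $p \le 1 + 4m^2 n\sigma_\ell/\alpha^2$, and absorbing the additive constant gives~\eqref{eq:knoise}.

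The substantive point — and essentially the only place the hypotheses bite — is the ``recovery'' step inside the gap bound: after player $i$ has (noisily) best-responded, her true slack is only $O(\mathbf{\Delta})$, and largeness (each opponent move perturbs her cost, and the cost of each of her alternatives, by at most $m\sigma_\ell = \lambda$) prevents that slack from returning to the $\alpha$ threshold for $\Omega(\alpha/(m\sigma_\ell))$ steps, while the low-error assumption is exactly what lets one translate between the \emph{noisy} improvement that governs when moves occur and the \emph{true} improvement that the potential-style/largeness reasoning controls. Once the gap bound is in hand, the rest is the elementary ``horizon divided by gap'' computation; the only care required there is to keep track of the several $\mathbf{\Delta}$-sized slacks — from Claim~\ref{claim:one}, from~\eqref{eq:costdiff}, and from the standing assumption $\alpha > 4\mathbf{\Delta}$ — so that the final constants come out as stated.
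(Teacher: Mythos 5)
Your proof follows the paper's argument essentially step for step: apply Claim~\ref{claim:one} to bound $Improve$ at a freshly best-responded profile by $2\mathbf{\Delta}$, use the largeness observation to show $Improve$ recovers at most $m\sigma_\ell$ per intervening step, conclude a lower bound on the gap, and then divide the total move count from Lemma~\ref{lem:Tnoise} by the gap to bound $p$. Two small points worth flagging. First, your sentence ``must climb from at most $2\mathbf{\Delta}$ to at least $\alpha-2\mathbf{\Delta}$ in increments of at most $m\sigma_\ell$, which forces $t'-t \geq (\alpha-2\mathbf{\Delta})/(m\sigma_\ell)$'' is internally inconsistent: the stated climb has size $\alpha-4\mathbf{\Delta}$, which gives $t'-t \geq (\alpha-4\mathbf{\Delta})/(m\sigma_\ell)$. (The paper's own chain $\alpha \leq Improve(\tau_i,\ba^{t+t'})$ makes the same slip in the other direction, so this constant is loose there too; the hypothesis $\alpha > 4\mathbf{\Delta}$ is exactly what keeps the final bound~\eqref{eq:knoise} from being disturbed.) Second, ``so in particular it runs for at most $T$ time steps'' is not literally correct --- the algorithm iterates $nT$ round-robin time steps, of which at most $T$ are actual best-response \emph{moves}. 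The division $p \lesssim T/\Gamma$ is valid precisely because the gap must be counted in \emph{moves}, not time steps: only a genuine move by another player can raise $Improve$, and Lemma~\ref{lem:Tnoise} caps the total move count at $T$. This is also implicit in the paper's $p \leq T/\Gamma$; just be aware that ``time step'' and ``move'' are being identified in that step.
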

\begin{algorithm}
\caption{Private Best Responses}\label{BR-PRIVATE}
\begin{algorithmic}[0]
\INPUT : For a sufficiently large congestion game, users report type vector $\tau$.
\OUTPUT : Suggested action profile $\ba$
\Procedure {$\pbr$} {$\tau$}
\State Set: $\alpha =\Theta\left( \left(\frac{m^4n (\sigma_\ell)^2 \log^{2}(mn/\beta)}{\epsilon}\right)^{1/3}\right)$,
$T = \frac{2mn}{\alpha}$, $p = \frac{4m^2n\sigma_\ell}{\alpha^2}$, and $\epsilon ' = \frac{\epsilon}{\ifaddone 3 \else 2 \fi pm\log(T)}$
\For{$i \in [n]$}
	\State $count(i) \gets 1$
	\State Choose $a_i \in A_i$ to be some arbitrary initial action.
	\For{$e \in E$}
		\If{$e \in a_i^i$} $w_e^i \gets 1$, {\bf  else } $\quad w_e^i \gets 0$
		\EndIf
	\EndFor
\EndFor
	\For{$e \in E$}
	$\quad \hat \rr^1_e \gets 0^{\ifaddone \left( \lfloor\log(nT)\rfloor+1\right)\else \lfloor\log(nT)\rfloor \fi  \times nT}$
	 	\For{$j = 2, \cdots, n$ }
			 $\quad \hat \rr_e^{j} \gets \pst(\omega_e^{\leq j},\hat\rr_e^{j-1}, \epsilon')$
		\EndFor
		 \State $\hat y_e^n \gets \abm(\hat \rr_e^n, n)$
	 \EndFor
$\ba^n = (a_i)_{i = 1}^n$
\For{$t = n+1,\cdots, nT$} $ i \gets \text{Player at round $t$} $
		\State $ a_i^t \gets a_i^*$, an $\alpha$-Best Response for $i$ given the past facility count $\hat y_e^{t-1}$
		\If{$a_i^t == NA$} $\quad a_i^t \gets a_i^{t-1}$
			\For{$e \in E$}
				 $\omega_e^t \gets 0$
			\EndFor
		\Else			
		\State $count(i) \gets count(i)+1$
			\For{$e \in E$} $ \omega_e^t \gets 0$
				\If{$e\in a_i^{t-1}\backslash a_i^t$}
					 $\omega_e^t \gets (-1)$,
				{\bf else if } $e\in a_i^{t} \backslash a_i^{t-1}$
					{\bf then } $ \omega_e^t \gets 1$.
				\EndIf
			\EndFor
		\EndIf
		\If{$count(i) > p$}
			 HALT, \Return FAIL
		\EndIf
		\For{$e \in E$}
		 $\quad \hat{\rr}_e^t \gets \pst(\omega_e^{\leq t}, \hat{\rr}_e^{t-1} , \epsilon'), \qquad \hat y_e^t \gets \abm(\hat \rr_e^t, t)$
	\EndFor
	\For{$j \neq i$}
		 $\quad a_j^t \gets a_j^{t-1}$
	\EndFor
\EndFor \\
\Return $a_i^{nT}$ to player of type $\tau_i$ for every $i \in [n]$
\EndProcedure
\end{algorithmic}
\end{algorithm}
\indent  Given the players' input type vector $\tau$, $\pbr(\tau)$ in Algorithm \ref{BR-PRIVATE} simulates best response dynamics with only access to the noisy counts of players on each facility at each iteration.  The algorithm cycles through the players in a round robin fashion and in order, each player $i$ makes an $\alpha$-noisy best response if she has one. If not, she continues playing her current action (we denote her noisy best response in this case as $NA$). After each move, the algorithm updates the counts on each facility via $\abm$ a modified version of the binary mechanism to deal with adaptive streams (see Algorithm \ref{ABM}) and computes the updated noisy costs for the next iteration.  The algorithm terminates after it has iterated $nT$ times (which is enough iterations for $T$ noisy best responses, which under the low error assumption guarantees that we have found an equilibrium), or if any individual player has changed her action more than $p$ times (which we have shown can only occur if the low error assumption is violated, which is an event with probability at most $\beta$).  If the former occurs, the algorithm outputs the final action profile, and in the latter case, the algorithm outputs that it failed.

\subsection{Analysis of $\pbr$}
The analysis of the above algorithm gives the following privacy guarantee.

\begin{theorem}
The algorithm  $\pbr(\tau)$ is $(\epsilon,\beta)$-jointly differentially private.
\label{lem:dp}
\end{theorem}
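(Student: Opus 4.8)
The plan is a ``billboard''-style argument. First I would show that the object $\mathsf{T}=(\hat\rr^{t}_e)_{e\in E,\,1\le t\le nT}$ --- the full sequence of noised partial-sum tables that $\pbr$ maintains in Algorithm~\ref{BR-PRIVATE} --- is $(\epsilon,\beta)$-differentially private in the reported type vector $\tau$. Joint differential privacy then follows essentially for free: the action eventually returned to a player $j$ is obtained by replaying the dynamics, and on every round on which $j$ is the active player the move taken is an $\alpha$-noisy best response, which by definition depends only on $\tau_j$, on $j$'s own current action, and on the noisy counts $\hat y^{t-1}_e=\abm(\hat\rr^{t-1}_e,t-1)$ --- never directly on the other players' actions. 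Hence by induction $a^{nT}_j$ is a deterministic function $g_j(\tau_j,\mathsf{T})$, so the tuple of recommendations to players $j\neq i$ is a post-processing of $\tau_{-i}$ and $\mathsf{T}$, and $(\epsilon,\beta)$-differential privacy of $\mathsf{T}$ in coordinate $i$ gives exactly the guarantee of Definition~\ref{def:privacy}.

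To bound the privacy of $\mathsf{T}$, fix a player $i$ and type profiles $\tau,\tau'$ agreeing off coordinate $i$, and condition on the low-error event, which by Lemma~\ref{lem:chanerror} (union-bounded over facilities) holds with probability at least $1-\beta$ on either profile, and under which, by Lemma~\ref{lem:boundedmoves}, every player --- in particular player $i$ --- makes at most $p$ moves, so $\pbr$ does not FAIL. Fix an observed value $\mathsf{t}$ of the transcript. By the first paragraph, conditioned on $\mathsf{t}$ every player $j\neq i$ follows the identical trajectory under $\tau$ and under $\tau'$; the only way the two executions can disagree is in player $i$'s own trajectory, and this can involve an actual change of action on at most $p$ rounds in each execution, plus the single initialization round for $i$. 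A move by player $i$ alters his action set and therefore flips the stream bit $\omega^t_e$ on at most $m$ facilities, each by at most $1$; consequently the two collections of facility streams underlying $\mathsf{t}$ under $\tau$ versus $\tau'$ differ by only $O(pm)$ in total $\ell_1$ mass.

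Finally I would push this stream-level bound through the binary-tree machinery. In $\pst$ (Algorithm~\ref{PST}) each stream coordinate $\omega^t_e$ feeds into at most $\lfloor\log(nT)\rfloor+1$ of the published partial-sum entries, so an $O(pm)$ change in $\ell_1$ across the streams perturbs the vector of \emph{true} partial sums --- the centers of the Laplace noise that produces $\mathsf{t}$ --- by at most $O(pm\log(nT))$ in $\ell_1$; since every published entry carries an independent $\Lap(1/\epsilon')$ term, the log-likelihood ratio of $\mathsf{T}$ at $\mathsf{t}$ under $\tau$ versus $\tau'$ is at most $\epsilon'\cdot O(pm\log(nT))\le\epsilon$, by the choice $\epsilon'=\epsilon/(3pm\log T)$ in the algorithm. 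Integrating this pointwise likelihood bound over transcripts that are FAIL-free under both profiles, and charging the probability (which must be shown to be $O(\beta)$) that either execution FAILs to the additive term, yields $\Pr_{\tau}[\mathsf{T}\in S]\le e^{\epsilon}\Pr_{\tau'}[\mathsf{T}\in S]+\beta$ for all events $S$, after the usual rebalancing of constants.

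The crux I expect is the ``screening'' claim in the second paragraph: that once the transcript is fixed, the trajectories of the players $j\neq i$ are fully reproducible from their own types, so that divergence cannot cascade through the $nT$-round simulation and is confined to player $i$'s at most $p$ moves. Making this airtight requires (i) fixing every ``arbitrary'' choice in $\pbr$ --- each player's initial action and the $\argmin$ tie-break in the noisy best response --- by a deterministic rule depending only on the acting player's type, so that the maps $g_j$ are well defined; and (ii) handling the asymmetry that a full-length transcript which is FAIL-free under $\tau$ may still be ``$\tau'$-bad'', i.e.\ force a FAIL upon replay with $\tau'$ and hence have probability zero under $\tau'$ --- this is why one conditions on FAIL-freeness under both profiles, absorbs the extra loss into $\beta$, and obtains an $(\epsilon,\beta)$- rather than an $(\epsilon,0)$-guarantee.
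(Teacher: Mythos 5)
Your proof takes the same route as the paper's: reduce joint privacy to ordinary privacy of the noisy partial-sum transcript via the Billboard Lemma~(\ref{lem:billboard}), bound the $\ell_1$ divergence between the two facility-stream collections by $O(pm)$ using the per-player move bound of Lemma~\ref{lem:noisygap}, multiply by the $O(\log(nT))$ fan-out of the binary-tree decomposition, compare against the choice of $\epsilon'$, and absorb the FAIL event into the additive $\beta$. The two subtleties you flag at the end---deterministic tie-breaking so that each player's trajectory is a well-defined function of her type and the published transcript, and the asymmetry of the halt event between $\tau$ and $\tau'$---are genuine and are handled only implicitly in the paper's write-up.
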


We give a proof sketch here to highlight the main ideas of the formal proof given in the appendix.  We observe that the action suggestion output to each player can be computed as a function of his own type $\tau_i$, and the output history of the noisy counters. This is because each player $i$ can compute which action he would have taken (if any) at every time step that he is slotted to move by observing the noisy counts: his final action is simply the last noisy best response he took over the course of the algorithm. It suffices to prove that the noisy partial sums from which the counts are computed (internally in the binary mechanism) are differentially private.

\citet{Binary} prove that a single counter facing a non-adaptively chosen (sensitivity 1) stream of bits is differentially private. We need to generalize this result to multiple counters facing adaptively chosen streams, whose joint sensitivity (i.e. the number of times any player ever changes his action) is bounded. This is exactly the quantity that we bounded in Lemma \ref{lem:boundedmoves}. We want the privacy parameter to scale like the joint sensitivity of the streams, rather than the sum of the sensitivities of each of the individual streams. This is akin to the (standard) analysis of privately answering a vector-valued query with noise that scales with its $\ell_1$ sensitivity, rather than with the sum of its coordinate-wise sensitivities.

Finally, we achieve $(\epsilon,\beta)$-joint differential privacy rather than $(\epsilon,0)$-joint differential privacy because the publicly observable event that the algorithm fails without outputting an equilibrium might be disclosive. However, this event occurs with probability at most $\beta$.

This theorem satisfies the first of the two necessary hypotheses of Theorem \ref{lem:main}. We now show that with high probability, our algorithm outputs an approximate Nash equilibrium of the underlying game, which satisfies the remaining hypothesis of Theorem \ref{lem:main}.  We need to show that when $\pbr(\tau)$ outputs an action profile $\ba$, players have little incentive to deviate from using their suggested action, leaving details to the appendix.
\begin{theorem}
With probability $1-\beta$, the algorithm $\pbr(\tau)$
 produces an $\eta$-approximate Nash Equilibrium, where
\begin{align}
\eta =  \tilde O\left( \left(\frac{m^4n (\sigma_\ell)^2 \log^2(1/\beta)}{\epsilon}\right)^{1/3}\right) =  \tilde O\left( \left(\frac{m^2n \lambda^2 \log^2(1/\beta) }{\epsilon}\right)^{1/3}\right)
\label{eq:etanash}
\end{align}
\label{thm:nash_eq}
\end{theorem}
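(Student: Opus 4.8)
The plan is to condition on the low error assumption, use it to conclude that $\pbr$ runs to completion rather than outputting \texttt{FAIL} and that its output is a fixed point of $\alpha$-noisy best response dynamics, translate this into an approximate Nash equilibrium of $\G$ with respect to the true costs, and then unwind the parameter settings of Algorithm \ref{BR-PRIVATE} to extract the bound on $\eta$.

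First, by Lemma \ref{lem:chanerror} with $\beta' = \beta/m$ together with a union bound over the $m$ facilities, with probability at least $1-\beta$ every counter has error at most $\err$ (as in \eqref{eq:error}) at all time steps, i.e.\ the low error assumption holds; the bound transfers from the non-adaptive mechanism $\bm$ to the adaptive mechanism $\abm$ because the error of such a mechanism is a function of the injected Laplace noise alone and not of the (adaptively chosen) stream values. Condition on this event henceforth. Then Lemma \ref{lem:noisygap} bounds the number of $\alpha$-noisy best responses made by any single player by $p = 4m^2n\sigma_\ell/\alpha^2$, so the condition $count(i) > p$ of Algorithm \ref{BR-PRIVATE} never fires and $\pbr$ runs all of its rounds. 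By Lemma \ref{lem:Tnoise} at most $T = 2mn/\alpha$ $\alpha$-noisy best responses are ever made, so running the round-robin schedule for $nT$ rounds suffices for all of them to occur and still leaves a full pass over all $n$ players in which nobody moves; since nothing changes during such a pass, the final profile $\ba$ satisfies, for every player $i$ and every $a_i' \in A_i$, the inequality $\hat c(\tau_i,\ba) - \hat c(\tau_i,(a_i',a_{-i})) < \alpha$.

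Next, applying the cost-approximation bound \eqref{eq:costdiff}, namely $|\hat c(\tau_i,\ba') - c(\tau_i,\ba')| \le \mathbf{\Delta}$ for any profile $\ba'$, to both $\ba$ and the deviation $(a_i',a_{-i})$ yields $c(\tau_i,\ba) - c(\tau_i,(a_i',a_{-i})) < \alpha + 2\mathbf{\Delta}$ for all $a_i'$, so $\ba$ is an $(\alpha + 2\mathbf{\Delta})$-approximate Nash equilibrium of $\G$. It remains to bound $\alpha + 2\mathbf{\Delta}$. Substituting $\epsilon' = \epsilon/(3pm\log T)$ into $\err = \sqrt{8\log T\,\log(2m/\beta)}/\epsilon'$, and then $p = 4m^2n\sigma_\ell/\alpha^2$ and $T = 2mn/\alpha$ into $\mathbf{\Delta} = m\sigma_\ell\,\err$, gives $\mathbf{\Delta} = \tilde O(m^4 n\sigma_\ell^2 / (\alpha^2\epsilon))$, with the factors $\log T = O(\log(mn))$ and $\log(2m/\beta)$ absorbed into the $\log^2(mn/\beta)$ hidden by $\tilde O$. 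The choice $\alpha = \Theta((m^4 n\sigma_\ell^2\log^2(mn/\beta)/\epsilon)^{1/3})$ in Algorithm \ref{BR-PRIVATE} is exactly the smallest value for which $\alpha > 4\mathbf{\Delta}$, which is also the standing hypothesis of Lemmas \ref{lem:Tnoise} and \ref{lem:noisygap}; with $\alpha > 4\mathbf{\Delta}$ we get $\eta = \alpha + 2\mathbf{\Delta} < \frac{3}{2}\alpha = O(\alpha)$, which is the first claimed expression, and using $\lambda = m\sigma_\ell$ to rewrite $m^4\sigma_\ell^2 = m^2\lambda^2$ gives the second.

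I expect the main obstacle to be the parameter unwinding: the quantities form a circular chain ($\err$ depending on $\epsilon'$, which depends on $p$ and $T$, which depend on $\alpha$, with the requirement $\alpha > 4\mathbf{\Delta}$ closing the loop), so the real work is checking that the induced inequality---roughly $\alpha^3 \ge c\, m^4 n\sigma_\ell^2\log^2(mn/\beta)/\epsilon$ for an absolute constant $c$---is solved by the stated $\alpha$ and that the various logarithmic factors combine as claimed inside $\tilde O(\cdot)$. A secondary point requiring care is the round-robin bookkeeping invoked above: confirming that $nT$ rounds really do accommodate the at most $T$ moves together with a subsequent moveless pass over all $n$ players.
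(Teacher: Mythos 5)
Your proposal is correct and follows essentially the same approach as the paper's proof: condition on the low error assumption (probability $1-\beta$), invoke Lemma~\ref{lem:noisygap} to conclude no player exceeds $p$ moves so the algorithm does not \texttt{FAIL}, invoke Lemma~\ref{lem:Tnoise} to conclude that after $nT$ rounds the output profile has no remaining $\alpha$-noisy best response, translate the noisy condition into the exact-cost bound $\eta \le \alpha + 2\mathbf{\Delta}$, and then unwind $\err$, $\epsilon'$, $p$, $T$ to balance the two terms with the stated choice of $\alpha$ and verify $\alpha > 4\mathbf{\Delta}$. Your explicit remarks about the $\bm$-to-$\abm$ error transfer and the round-robin bookkeeping are useful elaborations of points the paper leaves implicit, but they do not change the argument.
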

We have satisfied the hypotheses of Theorem \ref{lem:main} (with the slight notational difference that we are dealing with costs instead of utilities) because we have an algorithm $\M(\tau)  = \pbr(\tau)$ that is $(\epsilon, \beta)$ joint differentially private and with probability $1-\beta$ produces an action profile that is an $\eta$ approximate pure strategy Nash Equilibrium of the congestion game defined by the type vector $\tau$.  Note that the upper bound $U$ on utilities needed in Theorem \ref{lem:main} is here $U = m$.  In the following theorem we choose the parameter values for $\epsilon$ and $\beta$ that optimize for the approximation parameter for $\eta'$ in Theorem \ref{lem:main} given $\eta$ in \eqref{eq:etanash}

\begin{theorem}
\label{thm:weak_main}
Choosing $\pbr(\tau)$ as our mediator $\M(\tau)$ for the congestion game $\G$, for any $\tau \in \mathcal{T}^n$  it is, with probability at least $1-o(1)$, an $\eta'$-approximate ex post Nash Equilibrium of the weakly mediated game $\G_\M^W$ for each player to play the ``good" action profile $(\tau,\mathbf{f})$ where $\eta'$ is
\begin{align}
\eta'  = \tilde O \left( \left(m^5n (\sigma_\ell)^2 \log^2(1/\sigma_\ell) \right)^{1/4}\right) = \tilde O \left( m^{3/4} n^{1/4}  \sqrt{\lambda \log(1/\lambda)} \right)
\label{eq:etaprime}
\end{align}
and $\lambda = m \sigma_\ell$ is the largeness parameter for $\cG$.
\end{theorem}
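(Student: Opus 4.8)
The plan is to treat Theorem~\ref{thm:weak_main} as a corollary: feed the algorithm $\pbr$ into Theorem~\ref{lem:main} and then optimize over the two still-free parameters $\epsilon$ and $\beta$. Concretely, Theorem~\ref{lem:dp} says $\pbr(\tau)$ is $(\epsilon,\beta)$-jointly differentially private, and Theorem~\ref{thm:nash_eq} says that with probability at least $1-\beta$ it outputs an $\eta$-approximate pure Nash equilibrium of the complete-information congestion game defined by $\tau$, with
\[
\eta = \tilde O\!\left(\left(\frac{m^4 n (\sigma_\ell)^2 \log^2(1/\beta)}{\epsilon}\right)^{1/3}\right).
\]
Since a congestion game with $m$ facilities has utilities bounded by $U=m$, these are exactly the two hypotheses of Theorem~\ref{lem:main}, with the privacy-$\delta$ and the failure probability both equal to $\beta$. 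Hence good behavior forms an $\eta'$-approximate ex-post Nash equilibrium of $\G^W_\M$ with
\[
\eta' = \eta + U(\epsilon + \beta + \delta) = \eta + m\epsilon + 2m\beta,
\]
and, as in Theorem~\ref{lem:main}, the realized play is an $\eta$-approximate (hence $\eta'$-approximate) Nash equilibrium of the underlying game. What remains is purely a matter of choosing $\epsilon$ and $\beta$ to make $\eta'$ small.

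First I would dispatch the $\beta$ terms: taking $\beta$ inverse-polynomially small in $n$ (e.g.\ $\beta = 1/n$, or anything $o(1/m)$) makes $2m\beta = o(1)$ negligible against the other terms, while $\log^2(1/\beta) = O(\log^2 n)$ only inflates $\eta$ by a polylogarithmic factor absorbed into $\tilde O(\cdot)$; the same choice makes the failure event (either $\pbr$ halting with \textsf{FAIL}, or not converging) occur with probability at most $\beta = o(1)$, which is the source of the ``probability at least $1-o(1)$'' in the statement. With $\beta$ fixed, the two remaining terms are $\eta = \tilde O((m^4 n (\sigma_\ell)^2/\epsilon)^{1/3})$ and $m\epsilon$. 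Minimizing $C\epsilon^{-1/3} + m\epsilon$ over $\epsilon$ (a one-variable calculus exercise) balances the two at $\epsilon = \Theta((m n (\sigma_\ell)^2)^{1/4})$ up to polylog factors — exactly the regime in which the algorithm's internal parameter $\alpha = \Theta\!\big((m^4 n (\sigma_\ell)^2 \log^2(mn/\beta)/\epsilon)^{1/3}\big)$ is of order $m\epsilon$.

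Substituting this $\epsilon$ back gives $\eta' = \tilde O(m\epsilon) = \tilde O( m\,(m n (\sigma_\ell)^2)^{1/4} ) = \tilde O( (m^5 n (\sigma_\ell)^2)^{1/4} )$, and rewriting in terms of the largeness parameter $\lambda = m\sigma_\ell$ (so $\sigma_\ell = \lambda/m$) this is $\tilde O( m^{3/4} n^{1/4}\sqrt{\lambda} )$, the hidden polylogs carrying the $\log(1/\sigma_\ell) = \log(1/\lambda)$ factor displayed in \eqref{eq:etaprime}. The only part that needs genuine care, rather than assertion, is checking that this choice of $\epsilon$ is consistent with every side condition invoked along the way: that the induced $\alpha$ satisfies $\alpha > 4\mathbf{\Delta}$ (required by Lemmas~\ref{lem:Tnoise} and~\ref{lem:noisygap}, hence by Theorems~\ref{lem:dp} and~\ref{thm:nash_eq}); that $\epsilon' = \epsilon/(3pm\log T)$ is a legitimate positive privacy parameter; and that $\epsilon = o(1)$, so that both the privacy loss and $\eta'$ actually vanish. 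All of these hold precisely in the large-game regime $\lambda = O(1/n)$ (equivalently $\sigma_\ell = O(1/(mn))$), where $mn(\sigma_\ell)^2 = O(1/(mn)) = o(1)$. I expect this bookkeeping — not the optimization — to be the only place the argument requires attention.
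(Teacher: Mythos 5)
Your proposal is correct and follows essentially the same route as the paper: invoke Theorem~\ref{lem:main} with $\pbr$, using Theorems~\ref{lem:dp} and~\ref{thm:nash_eq} as the two hypotheses with $U=m$, then balance $\eta \sim \epsilon^{-1/3}$ against $m\epsilon$ to get $\epsilon = \Theta\big((mn(\sigma_\ell)^2\log^2(mn/\beta))^{1/4}\big)$. The only (cosmetic) divergence is in bookkeeping for $\beta$: the paper sets $\beta = \sqrt{\sigma_\ell}/n$ rather than $1/n$, and explicitly verifies the side constraint $\beta > 2m/(nT)$ inherited from Lemma~\ref{lem:chanerror} (since the stream has length $nT$), which is the one side condition your list of checks does not name explicitly.
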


Note that for any value of the sensitivity that shrinks with $n$ (treating $m$ as constant) at a rate of $\sigma_\ell = o\left(\frac{1}{\sqrt{n}\log(n)}\right)$, our approximation error $\eta$ asymptotes to $0$ as $n$ grows large.

\section{Strong Mediators for General Games} \label{sec:ub} \label{sec:strongmediatorlaplace}

In the previous section, we showed that there is a jointly differentially private algorithm to compute a Nash equilibrium in a congestion game (and thus a weak mediator for congestion games).  We do not know if such an algorithm exists for arbitrary large games (If such an algorithm does exist, it cannot be computationally efficient, since it is $PPAD$-hard to compute Nash equilibria in arbitrary games, even ignoring the privacy constraint).
However, in the following we  show that we can privately compute a \emph{correlated equilibrium} of arbitrary large games, and hence can obtain a strong mediator that makes nice behavior an ex-post Nash equilibrium, in arbitrary large games.

The high level idea of the construction is similar to the previous section, except that we now simulate no-regret learning dynamics, rather than best response dynamics. No-regret dynamics are guaranteed to converge to correlated equilibria in any game (see, e.g., \citeasnoun{hart2000simple}). As in the previous section, instead of using the actual realized regret, we use a noisy version to ensure privacy. We show that no-regret dynamics are robust to the addition of noise, and that since the game is large, it suffices to add ``small'' noise in order to guarantee joint differential privacy.


\subsection{Definitions and Basic Properties}
  First, we recall some basic results on no-regret learning. See \citeasnoun{nisan2007algorithmic} for a text-book exposition.

Let $\actionset$ be a finite set of $k$ \emph{actions}.  Let $\Losses = (\losses_1, \dots, \losses_{T}) \in [0,1]^{k \times T}$ be a \emph{loss matrix} consisting of $T$ vectors of losses for each of the $k$ actions.  Let $\Pi = \set{\pi \in [0,1]^{k} \mid \sum_{j = 1}^{k} \pi^j = 1}$ be the set of distributions over the $k$ actions and let $\pi_{U}$ be the uniform distribution.  An \emph{online learning algorithm} $\nralg \from \Pi \times [0,1]^{k} \to \Pi$ takes a distribution over $k$ actions and a vector of $k$ losses, and produces a new distribution over the $k$ actions.  We use $\nralg_{t}(\Losses)$ to denote the distribution produced by running $\nralg$ sequentially $t-1$ times using the loss vectors $\losses_1, \dots, \losses_{t-1}$, and then running $\nralg$ on the resulting distribution and the loss vector $l_t$.  That is:
\begin{align*}
&\nralg_{0}(\Losses) = \pi_{U},\\
& \nralg_{t}(\Losses) = \nralg(\nralg_{t-1}(\Losses), \tlosses).
\end{align*}
We use $\vec{\nralg}(\Losses) = (\nralg_0(\Losses), \nralg_1(\Losses), \dots, \nralg_T(\Losses))$ to be the vector of distributions when $T$ is clear from context.

Let $\state_0, \dots, \state_{T} \in \Pi$ be a sequence of $T$ distributions and let $\Losses$ be a $T$-row loss matrix. We define the quantities:
\begin{align*}
&\exploss(\state, \losses) = \sum_{j=1}^{k} \pi^j l^j,  \\
&\exploss( \state_0, \dots, \state_{T}, \Losses) = \frac{1}{T} \sum_{\round = 1}^{\rounds} \exploss(\state_{t}, \losses_{t}) , \\
&\exploss(\vec{\nralg}(\Losses), \Losses') = \exploss(\nralg_{0}(\Losses), \nralg_{1}(\Losses), \dots, \nralg_{T}(\Losses), \Losses')
\\
& \exploss(\vec{\nralg}, \Losses) = \exploss(\vec{\nralg}(\Losses), \Losses).
\end{align*}
Note that the notation retains the flexibility to run the algorithm $\nralg$ on one loss matrix, but measure the loss $\nralg$ incurs on a different loss matrix. This flexibility will be useful later.

Let $\mods$ be a family of functions $\mod \from \actionset \to \actionset$.  For a function $\mod$ and a distribution $\state$, we define the distribution $\modstate$ to be
\begin{equation*}
(\modstate)^j = \sum_{j': f(j') = j} \state^{j'}.
\end{equation*}
The distribution $\modstate$ corresponds to the distribution on actions obtained by first choosing an action according to $\pi$, then applying the function $f$.

Now we define the following quantities:
\begin{align*}
&\exploss(\state_1, \dots, \state_T, \Losses, \mod) = \exploss(\modstate_1, \modstate_2, \dots, \modstate_T, \Losses), \\
&\regret(\vec{\nralg}(\Losses), \Losses', \mod) =  \exploss(\vec{\nralg}(\Losses), \Losses') - \exploss(\vec{\nralg}(\Losses), \Losses', \mod),\\
&\regret(\vec{\nralg}, \Losses, \mod) = \regret(\vec{\nralg}(\Losses), \Losses, \mod) , \\
&\regret(\vec{\nralg}, \Losses, \mods) = \max_{\mod \in \mods} \regret(\vec{\nralg}, \Losses, \mod).
\end{align*}
As a mnemonic, we offer the following.  $\Lambda$ refers to expected loss, $\rho$ refers to regret. Next, we define the families $\fixedmods, \swapmods:$
\begin{align*}
&\fixedmods = \set{\mod_j(j') = j, \textrm{ for all } j' \mid j \in \actionset} \\
&\swapmods = \set{f: \actionset \to \actionset}
\end{align*}

Looking ahead, we will need to be able to handle not just \emph{a priori fixed} sequences of losses, but also adaptive sequences. To see why, note that for a game setting, a player's loss will depend on the distribution of actions played by everyone in that period, which will depend, in turn, on the losses everyone experienced in the previous period and how everyone's algorithms reacted to that.
\begin{definition}[Adapted Loss] \label{def:adaptloss}
A loss function $\cL$ is said to be adapted to an algorithm $\cA$ if in each period $t$, the experienced losses $l_t \in [0,1]^k$ can be written as:
\begin{align*}
l_t = \cL(l_0, \cA_0(l_0), l_1, \cA_1(l_1),\ldots, l_{t-1}, \cA_{t-1}(l_{t-1})).
\end{align*}
\end{definition}
The following well-known result shows the existence of algorithms that guarantee low regret even against adapted losses (see e.g. \citeasnoun{nisan2007algorithmic}).

\begin{theorem} \label{thm:noregretalgsexist}
There exists an algorithm $\fixedalg$ such that for any adapted loss $\cL$, $\regret\left(\overrightarrow{\fixedalg}, \cL,  \fixedmods\right) \leq \sqrt{\frac{2\log k}{T}}$.  There also exists an algorithm $\swapalg$ such that $\regret\left(\overrightarrow{\swapalg}, \cL, \swapmods\right) \leq k\sqrt{\frac{2 \log k}{T}}$.
\end{theorem}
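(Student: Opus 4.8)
The plan is to prove the two halves separately via the two classical online-learning constructions: exponential weights (Hedge / Multiplicative Weights) for $\fixedalg$, and the Blum--Mansour stationary-distribution reduction from swap regret to external regret for $\swapalg$. Throughout, the reason the ``adapted loss'' generality comes for free is that in both constructions the distribution effectively played against $l_t$ is a function of $l_1,\dots,l_{t-1}$ only, so the per-round inequalities below never use that the loss sequence is fixed in advance; the bounds therefore hold verbatim when $\cL$ is merely adapted to the algorithm in the sense of Definition~\ref{def:adaptloss}.

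\emph{External regret ($\fixedalg$).} Let $\fixedalg$ be exponential weights with learning rate $\eta=\sqrt{2\log k/T}$: maintain $w_t\in\R_{\ge 0}^k$ with $w_1\equiv 1$, update $w_{t+1}^j=w_t^j e^{-\eta l_t^j}$, and play $\state_t=w_t/\|w_t\|_1$. Run the standard potential argument on $\Phi_t=\|w_t\|_1$. For the best fixed action $j^\star$, $\Phi_{T+1}\ge w_{T+1}^{j^\star}=\exp(-\eta\sum_t l_t^{j^\star})$ while $\Phi_1=k$, so $\log(\Phi_{T+1}/\Phi_1)\ge -\eta\sum_t l_t^{j^\star}-\log k$. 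For the upper bound, using $e^{-x}\le 1-x+x^2/2$ for $x\ge 0$ and $l_t^j\in[0,1]$ gives $\Phi_{t+1}/\Phi_t=\sum_j \state_t^j e^{-\eta l_t^j}\le 1-\eta\,\exploss(\state_t,l_t)+\eta^2/2$, hence $\log(\Phi_{T+1}/\Phi_1)\le -\eta\sum_t \exploss(\state_t,l_t)+\eta^2 T/2$. Combining and dividing by $\eta T$ yields $\frac1T\big(\sum_t\exploss(\state_t,l_t)-\sum_t l_t^{j^\star}\big)\le \frac{\log k}{\eta T}+\frac{\eta}{2}=\sqrt{2\log k/T}$. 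Since $f_j\circ\state_t$ is the point mass on $j$ we have $\exploss(\vec\nralg(\Losses),\Losses,f_j)=\frac1T\sum_t l_t^j$, so $\regret(\overrightarrow{\fixedalg},\cL,\fixedmods)=\max_j \frac1T(\sum_t\exploss(\state_t,l_t)-\sum_t l_t^j)$, which is exactly the quantity just bounded.

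\emph{Swap regret ($\swapalg$).} Run $k$ independent copies $\nralg^1,\dots,\nralg^k$ of the external-regret algorithm above. At round $t$, copy $\nralg^j$ holds a distribution $q_t^j\in\Pi$; form the row-stochastic matrix $Q_t$ with rows $q_t^1,\dots,q_t^k$ and let $\state_t$ be a stationary distribution, $\state_t Q_t=\state_t$ (one exists since $Q_t$ is the transition matrix of a finite Markov chain; pick any if it is not unique). Play $\state_t$, observe $l_t$, and feed copy $j$ the scaled loss vector $\state_t^j\, l_t\in[0,1]^k$. The key identity from stationarity is $\sum_j \state_t^j\langle q_t^j,l_t\rangle=\langle\state_t,l_t\rangle$. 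For any $f\in\swapmods$, summing copy $j$'s external-regret guarantee against the fixed action $f(j)$ over all $j$ telescopes, via this identity, to $\sum_t\langle\state_t,l_t\rangle-\sum_t\sum_j\state_t^j l_t^{f(j)}\le\sum_j(\text{external regret of }\nralg^j)$. Each copy sees losses in $[0,1]^k$ over $T$ rounds, so its external regret is at most $\sqrt{2T\log k}$ by the bound above, giving total at most $k\sqrt{2T\log k}$. Dividing by $T$ and noting $\frac1T\sum_t\sum_j\state_t^j l_t^{f(j)}=\exploss(\vec\nralg(\Losses),\Losses,f)$ gives $\regret(\overrightarrow{\swapalg},\cL,f)\le k\sqrt{2\log k/T}$ for every $f$, hence for the maximum.

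I expect the difficulty to be entirely bookkeeping rather than mathematical depth: getting the loss scaling in the swap reduction exactly right (copy $j$ must be charged $\state_t^j l_t$, not $l_t$, both so the telescoping identity collapses to $\langle\state_t,l_t\rangle$ and so each copy's cumulative losses stay in range), verifying the existence and role of the stationary distribution $\state_t$, and reconciling the off-by-one indexing between $\vec\nralg(\Losses)$ and $\exploss(\state_0,\dots,\state_T,\Losses)$ so that ``the distribution played against $l_t$ depends only on $l_1,\dots,l_{t-1}$'' is literally true --- which is precisely what licenses upgrading both bounds from a priori fixed loss matrices to loss functions merely adapted to the algorithm.
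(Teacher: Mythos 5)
The paper does not prove Theorem~\ref{thm:noregretalgsexist}; it cites it as a well-known result (see the discussion around the reference to \citeasnoun{nisan2007algorithmic}). Your proof supplies exactly the standard argument that the cited textbook would give — exponential weights with the potential-function analysis for external regret, and the Blum--Mansour stationary-distribution reduction for swap regret — so it is not so much a ``different route'' as the canonical one the paper is implicitly appealing to.

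The mathematical content is sound. The constants check out: $\eta=\sqrt{2\log k/T}$ balances $\log k/(\eta T)+\eta/2$ to exactly $\sqrt{2\log k/T}$, and multiplying by $k$ in the swap reduction gives $k\sqrt{2\log k/T}$. You correctly charge copy $j$ with the scaled loss $\state_t^j l_t$ (keeping each copy's losses in $[0,1]$) and correctly use stationarity to collapse the sum to $\langle\state_t,l_t\rangle$. Two things worth making explicit if this were to stand as a proof inside the paper. First, you flag but do not resolve the indexing mismatch in the paper's definitions: as written, $\nralg_t(\Losses)$ depends on $\losses_t$ yet is paired with $\losses_t$ in $\exploss$, which is a small typo in the paper (it should pair $\nralg_{t-1}$ with $\losses_t$, or equivalently shift the recursion); you should state the intended convention and use it, rather than leave the reconciliation as an ``I expect'' remark. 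Second, the claim that the bounds carry over from fixed to adapted losses deserves one sentence of justification beyond the heuristic: in the potential argument the per-round inequality $\Phi_{t+1}/\Phi_t\le 1-\eta\,\exploss(\state_t,\losses_t)+\eta^2/2$ holds pointwise for whatever $\losses_t$ is realized, conditional on the history that determined $\state_t$, so summing over $t$ requires no independence or non-adaptivity assumption; the swap reduction then inherits this because each subroutine is itself run on an adapted stream. With those two points tightened, the proof is complete.
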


\subsection{Noise Tolerance of No-Regret Algorithms}\label{sec:no-regret}
The next lemma states that when a no-regret algorithm is run on a noisy sequence of losses, it does not incur too much additional regret with respect to the real losses.
\begin{lemma} [Regret Bounds for Bounded Noise] \label{lem:noisyregretsbounded}
Let $\Losses \in [\frac{1}{3}, \frac{2}{3}]^{T \times k}$ be any loss matrix, $\Noises = (\tjnoise) \in [-\zeta, \zeta]^{T \times k}$ be an arbitrary matrix with bounded entries, and let $\noisyLosses = \Losses + \Noises$.  Further, we have an algorithm $\nralg$ and a family of functions $\mods$.  Then
\begin{equation*}
\regret(\vec{\nralg}(\noisyLosses), \Losses, \mods) \leq \regret(\vec{\nralg}, \noisyLosses, \mods) + 2\zeta.
\end{equation*}
\end{lemma}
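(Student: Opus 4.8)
The plan is to exploit the fact that, on both sides of the claimed inequality, the algorithm $\nralg$ is fed the \emph{same} loss matrix $\noisyLosses$, so the sequence of play distributions $\vec{\nralg}(\noisyLosses) = (\state_0, \dots, \state_T)$ (write $\state_t = \nralg_t(\noisyLosses)$) is literally the same object in every term that appears; the only difference between the two sides is whether we score this fixed sequence against $\Losses$ or against $\noisyLosses = \Losses + \Noises$. Thus the whole lemma reduces to a pointwise comparison of the expected loss of a \emph{fixed} distribution under $\Losses$ versus under $\noisyLosses$, and the fact that distributions are normalized does the rest.

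Concretely, first I would record the one-line estimate: for any distribution $\state \in \Pi$ and any round $t$,
$$
\left| \exploss(\state, \losses_t) - \exploss(\state, \noisylosses_t) \right| \;=\; \left| \sum_{j=1}^{k} \state^j \tjnoise \right| \;\leq\; \sum_{j=1}^{k} \state^j \left| \tjnoise \right| \;\leq\; \zeta,
$$
using $\sum_j \state^j = 1$ and $|\tjnoise| \leq \zeta$. Averaging over $t = 1, \dots, T$ gives, for \emph{any} fixed sequence of distributions $(\state_0, \dots, \state_T)$, that $\left|\exploss(\state_0, \dots, \state_T, \Losses) - \exploss(\state_0, \dots, \state_T, \noisyLosses)\right| \leq \zeta$. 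I would then apply this twice: once to the sequence $\vec{\nralg}(\noisyLosses)$ itself, and once to its modification $(\mod\circ\state_1, \dots, \mod\circ\state_T)$ by any fixed $\mod \in \mods$ — which is again a sequence of probability distributions, so the same bound applies verbatim — obtaining
$$
\left| \exploss(\vec{\nralg}(\noisyLosses), \Losses) - \exploss(\vec{\nralg}(\noisyLosses), \noisyLosses) \right| \leq \zeta, \qquad \left| \exploss(\vec{\nralg}(\noisyLosses), \Losses, \mod) - \exploss(\vec{\nralg}(\noisyLosses), \noisyLosses, \mod) \right| \leq \zeta.
$$

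Finally, for each fixed $\mod \in \mods$ I would subtract and combine the two displayed estimates:
$$
\regret(\vec{\nralg}(\noisyLosses), \Losses, \mod) = \exploss(\vec{\nralg}(\noisyLosses), \Losses) - \exploss(\vec{\nralg}(\noisyLosses), \Losses, \mod) \;\leq\; \regret(\vec{\nralg}(\noisyLosses), \noisyLosses, \mod) + 2\zeta \;\leq\; \regret(\vec{\nralg}, \noisyLosses, \mods) + 2\zeta,
$$
and then take the maximum over $\mod \in \mods$ on the left-hand side to get $\regret(\vec{\nralg}(\noisyLosses), \Losses, \mods) \leq \regret(\vec{\nralg}, \noisyLosses, \mods) + 2\zeta$. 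I do not anticipate any genuine obstacle here: the only points needing a little care are (i) that $\exploss(\cdot,\cdot,\mod)$ scores the \emph{pushed-forward} sequence $\mod\circ\state_t$, which is still a probability vector, so the per-round $\zeta$ bound is unchanged; and (ii) that the hypothesis $\Losses \in [\tfrac13,\tfrac23]^{T\times k}$ is not actually used in this lemma — it is invoked elsewhere only to ensure $\noisyLosses$ stays in $[0,1]^{T\times k}$ when $\zeta \leq \tfrac13$, whereas the $2\zeta$ slack above needs only $\Noises \in [-\zeta,\zeta]^{T\times k}$.
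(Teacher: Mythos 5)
Your proof is correct and follows essentially the same route as the paper's: both reduce the claim to the observation that the sequence $\vec{\nralg}(\noisyLosses)$ is the same object on both sides, so the regret gap is a sum of two terms of the form $\frac{1}{T}\sum_t \sum_j \state_t^j \tjnoise$, each bounded by $\zeta$ via normalization of $\state_t$ (resp.\ $\modstate_t$) and $|\tjnoise|\leq\zeta$. Your remark (ii) that the hypothesis $\Losses\in[\tfrac13,\tfrac23]^{T\times k}$ is not actually needed for this particular lemma is also accurate.
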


\begin{corollary}\label{thm:lowregretbounded}
Let $\Losses \in [\frac{1}{3}, \frac{2}{3}]^{T \times k}$ be any loss matrix and let $\Noises \in \R^{T \times k}$ be a random matrix such that $\Prob{\Noises}{\Noises \in [- \zeta, \zeta]^{T \times k}} \geq 1-\beta$ for some $\zeta \in [0,\frac{1}{3}]$, and let $\noisyLosses = \Losses + \Noises$.  Then
\begin{enumerate}
\item $\Prob{\Noises}{\regret\left(\overrightarrow{\fixedalg}(\noisyLosses), \Losses, \fixedmods\right) > \sqrt{\frac{2 \log k}{T}} + 2\zeta} \leq \beta,$
\item $\Prob{\Noises}{\regret\left(\overrightarrow{\swapalg}(\noisyLosses), \Losses, \swapmods\right) > k \sqrt{\frac{2 \log k}{T}} + 2\zeta} \leq \beta.$
\end{enumerate}
\end{corollary}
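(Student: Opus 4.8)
The plan is to condition on the high-probability event that the noise stays in the box $[-\zeta,\zeta]^{T\times k}$ and, on that event, simply chain the deterministic bound of Lemma~\ref{lem:noisyregretsbounded} with the no-regret guarantees of Theorem~\ref{thm:noregretalgsexist}.

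Concretely, I would let $E$ denote the event $\{\Noises \in [-\zeta,\zeta]^{T\times k}\}$, so that $\Prob{\Noises}{E} \geq 1-\beta$ by hypothesis. Fix an arbitrary realization of $\Noises$ lying in $E$; then $\noisyLosses = \Losses + \Noises$ is a deterministic matrix, and since $\Losses \in [\frac{1}{3},\frac{2}{3}]^{T\times k}$ and $\zeta \leq \frac{1}{3}$, each of its entries lies in $[\frac{1}{3}-\zeta,\frac{2}{3}+\zeta] \subseteq [0,1]$, so $\noisyLosses$ is a legitimate loss matrix (in particular, a trivially adapted loss sequence). Applying Lemma~\ref{lem:noisyregretsbounded} with $\nralg=\fixedalg$ and $\mods=\fixedmods$ gives
\[
\regret\!\left(\overrightarrow{\fixedalg}(\noisyLosses),\Losses,\fixedmods\right) \ \leq\ \regret\!\left(\overrightarrow{\fixedalg},\noisyLosses,\fixedmods\right) + 2\zeta ,
\]
and Theorem~\ref{thm:noregretalgsexist} bounds the first term on the right-hand side by $\sqrt{2\log k / T}$; together these give $\regret(\overrightarrow{\fixedalg}(\noisyLosses),\Losses,\fixedmods) \leq \sqrt{2\log k/T} + 2\zeta$ on all of $E$. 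Repeating the argument verbatim with $\nralg=\swapalg$ and $\mods=\swapmods$ gives $\regret(\overrightarrow{\swapalg}(\noisyLosses),\Losses,\swapmods) \leq k\sqrt{2\log k/T} + 2\zeta$ on $E$. Since both inequalities hold on the whole event $E$, the two ``bad'' events appearing in parts (1) and (2) are each contained in $E^{c}$, and $\Prob{\Noises}{E^{c}} \leq \beta$ completes both parts.

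I do not expect a genuine obstacle: all of the content is already packaged into Lemma~\ref{lem:noisyregretsbounded} and Theorem~\ref{thm:noregretalgsexist}, which may be cited. The two small points I would be careful about are (i) checking that the hypothesis $\zeta \leq \frac{1}{3}$, together with $\Losses$ being centered in $[\frac{1}{3},\frac{2}{3}]$, keeps the perturbed losses inside $[0,1]$ so that the no-regret bound is applicable to $\noisyLosses$ at all, and (ii) the routine passage from a deterministic bound holding on an event of probability at least $1-\beta$ to the corresponding probability-$(1-\beta)$ statement.
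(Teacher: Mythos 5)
Your argument is exactly the paper's: condition on the (probability $\geq 1-\beta$) event that the noise lies in $[-\zeta,\zeta]^{T\times k}$, note that this keeps $\noisyLosses$ in $[0,1]^{T\times k}$, and chain Lemma~\ref{lem:noisyregretsbounded} with Theorem~\ref{thm:noregretalgsexist}. The proposal is correct and essentially identical to the paper's proof, including the same care about why $\zeta \leq \tfrac{1}{3}$ is needed.
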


Note that the technical condition $\zeta \in [0,\frac{1}{3}]$ is needed to ensure that the noisy loss matrix $\noisyLosses$ is contained in $[0,1]^{T \times k}$, which is required to apply the regret bounds of Theorem~\ref{thm:noregretalgsexist}.

Next, we state a tighter bound on the additional regret in the case where the entries of $Z$ are i.i.d. samples from a Laplace distribution using Theorem \ref{thm:conc}.
\begin{lemma} [Regret Bounds for Laplace Noise] \label{lem:noisylapregretsbounded}
Let $\Losses \in [\frac{1}{3}, \frac{2}{3}]^{T \times k}$ be any loss matrix. Let $\Noises = (\tjnoise) \in \R^{T \times k}$ be a random matrix formed by taking each entry to be an independent sample from $\Lap(b)$, and let $\noisyLosses = \Losses + \Noises$.  Let $\nralg$ be an algorithm.  Let $\mods$ be any family of functions. Then for any $\alpha \leq b$.
\begin{equation*}
\Prob{\Noises}{ \regret(\vec{\nralg}(\noisyLosses), \Losses, \mods) - \regret(\vec{\nralg}, \noisyLosses, \mods)  > \alpha} \leq 2|\mods| e^{- \alpha^2 T/ 24b^2}.
\end{equation*}
\end{lemma}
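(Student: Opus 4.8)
The plan is to follow the template of the proof of Lemma~\ref{lem:noisyregretsbounded}, except that the noise contribution, which there is bounded crudely by $2\zeta$, is now retained and controlled with a concentration inequality. First fix a single modification rule $\mod\in\mods$. Writing $p_\round := \nralg_\round(\noisyLosses)$ and $q_\round := \mod\circ\nralg_\round(\noisyLosses)$ for the distributions at round $\round$ (both in $\Pi$), and $z_\round\in\R^k$ for the $\round$-th row of $\Noises$, expanding $\regret(\vec{\nralg}(\noisyLosses),\Losses,\mod)$ and $\regret(\vec{\nralg},\noisyLosses,\mod)$ using linearity of $\exploss$ in its loss argument together with $\noisyLosses=\Losses+\Noises$ yields, after cancellation, the exact equality
\[
\regret(\vec{\nralg}(\noisyLosses), \Losses, \mod) - \regret(\vec{\nralg}, \noisyLosses, \mod) \;=\; \frac{1}{\rounds}\sum_{\round=1}^{\rounds}\big\langle\, q_\round - p_\round,\; z_\round\,\big\rangle .
\]
So it suffices to show, for each fixed $\mod$, that the right-hand side exceeds $\alpha$ with probability at most $2e^{-\alpha^2\rounds/24b^2}$, and then union bound over $\mods$.

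The main obstacle is that the coefficient vectors $q_\round - p_\round$ are themselves random: they depend on the earlier noise rows $z_1,\dots,z_{\round-1}$ through the losses fed to $\nralg$, so $\sum_\round\langle q_\round-p_\round,z_\round\rangle$ is \emph{not} a fixed linear form in the i.i.d.\ Laplace entries and a Laplace tail bound cannot be applied to it directly. The fix is to exploit the sequential structure. Since $p_\round$ and $q_\round$ are functions of $z_1,\dots,z_{\round-1}$ only, while the coordinates of $z_\round$ are i.i.d.\ $\Lap(b)$ and independent of $z_1,\dots,z_{\round-1}$, the terms $X_\round := \langle q_\round-p_\round,z_\round\rangle$ form a martingale difference sequence: $\E[X_\round\mid z_1,\dots,z_{\round-1}]=0$, and conditionally on $z_1,\dots,z_{\round-1}$, $X_\round$ is a fixed linear combination of $k$ independent $\Lap(b)$ variables with coefficient vector $c_\round := q_\round-p_\round$ a difference of two probability vectors, so $\|c_\round\|_\infty\le 1$ and $\|c_\round\|_1\le 2$, hence $\sum_j (c_\round^j)^2\le 2$. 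I would then bound the conditional moment generating function — via the Laplace MGF $\E e^{sW}=(1-b^2s^2)^{-1}$ one gets $\E[e^{sX_\round}\mid z_{<\round}]\le \exp\!\big(2b^2 s^2\sum_j(c_\round^j)^2\big)\le \exp(4b^2s^2)$ for $|s|\le 1/(2b)$ — chain this over $\round=1,\dots,\rounds$ in the usual Chernoff fashion, and optimize $s=\Theta(\alpha/b^2)$; the hypothesis $\alpha\le b$ is precisely what keeps the optimal $s$ inside the range where the Laplace MGF is finite. Equivalently one may simply invoke the concentration bound for sums of independent Laplace variables, Theorem~\ref{thm:conc}, conditioning round by round. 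Either route gives the per-$\mod$ tail bound $2e^{-\alpha^2\rounds/24b^2}$, the factor $2$ coming from the two-sided Laplace concentration inequality and the looser constant $24$ (versus the $16$ one gets from the crude MGF bound above) being comfortably absorbed.

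Finally, since $\regret(\vec{\nralg}(\noisyLosses),\Losses,\mods)$ and $\regret(\vec{\nralg},\noisyLosses,\mods)$ are by definition the maxima over $\mod\in\mods$ of the corresponding per-rule regrets, and $\max_\mod A(\mod)-\max_\mod B(\mod)\le \max_\mod\big(A(\mod)-B(\mod)\big)$, the event $\{\regret(\vec{\nralg}(\noisyLosses),\Losses,\mods)-\regret(\vec{\nralg},\noisyLosses,\mods)>\alpha\}$ is contained in $\bigcup_{\mod\in\mods}\{\regret(\vec{\nralg}(\noisyLosses),\Losses,\mod)-\regret(\vec{\nralg},\noisyLosses,\mod)>\alpha\}$, and a union bound over the $|\mods|$ rules delivers the stated $2|\mods|e^{-\alpha^2\rounds/24b^2}$. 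I would also remark that, unlike the bounded-noise statements, nothing here requires $\noisyLosses\in[0,1]^{\rounds\times k}$: the displayed identity is a purely algebraic consequence of linearity of $\exploss$, so Laplace noise pushing entries of $\noisyLosses$ outside the unit interval is harmless for this particular bound.
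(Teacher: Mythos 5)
Your proposal is correct, and the overall skeleton matches the paper's: the same algebraic decomposition of the regret difference into $\frac{1}{\rounds}\sum_\round\langle q_\round-p_\round, z_\round\rangle$, followed by Laplace concentration per modification rule, followed by a union bound over $\mods$ (justified by $\max_\mod A(\mod)-\max_\mod B(\mod)\le\max_\mod(A(\mod)-B(\mod))$). The one meaningful way you diverge is where you invest the most effort, and I think you are right to: the paper first establishes its concentration bound~\eqref{eqn:conc1} for a \emph{fixed} sequence of distributions $\state_0,\dots,\state_T$, via the rewriting $Y = \sum_{\vec a}(\prod_t\state_t^{a_t})(\frac{1}{T}\sum_t z_t^{a_t})$ and then Theorem~\ref{thm:conc}, and then simply substitutes $(\state_0,\dots,\state_T)=\vec\nralg(\noisyLosses)$, whose components are functions of $\Noises$. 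Because the weights $\prod_t\state_t^{a_t}$ are then correlated with the $z_t^{a_t}$'s, that substitution is not an immediate consequence of the fixed-sequence bound; the correct justification is exactly your observation that $p_\round,q_\round$ are $\sigma(z_{<\round})$-measurable so that $X_\round=\langle q_\round-p_\round,z_\round\rangle$ is a martingale difference sequence, with the per-step MGF bound chained in the standard Chernoff way. Your parenthetical ``or invoke Theorem~\ref{thm:conc} conditioning round by round'' is a bit of a shortcut — Theorem~\ref{thm:conc} as stated needs fixed scalar coefficients, so the round-by-round conditioning \emph{is} the MDS argument rather than an alternative to it — but your MGF chain is spelled out correctly, with $\|c_\round\|_\infty\le1$, $\|c_\round\|_1\le2$ giving $\sum_j(c_\round^j)^2\le2$, the range restriction $|s|\le 1/(2b)$ keeping the Laplace MGF finite, and $\alpha\le b$ keeping the optimizing $s$ in range.

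Two minor points. First, you handle the quantity $\frac{1}{T}\sum_\round\langle q_\round-p_\round,z_\round\rangle$ as a single sum; the paper instead splits it into $\frac{1}{T}\sum_\round\langle q_\round,z_\round\rangle$ and $\frac{1}{T}\sum_\round\langle p_\round,z_\round\rangle$ at level $\alpha/2$ each and union-bounds, which is where its factor $2$ and the constant $24$ actually come from. Your single-term route gives a strictly tighter exponent, but your attribution of the factor $2$ to ``two-sided Laplace concentration'' is slightly off — your one-sided bound is already below $2e^{-\alpha^2T/24b^2}$, so the discrepancy is harmless but worth correcting. Second, your closing remark that the noisy losses need not lie in $[0,1]^{T\times k}$ for this lemma is correct and a nice observation: that boundedness is only needed to invoke the regret guarantee of Theorem~\ref{thm:noregretalgsexist}, which happens in Corollary~\ref{thm:lowregretlaplace}, not here.
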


\begin{corollary}\label{thm:lowregretlaplace}
Let $\Losses \in [\frac{1}{3}, \frac{2}{3}]^{T \times k}$ be any loss matrix and let $\Noises \in \R^{T \times k}$ be a random matrix formed by taking each entry to be an independent sample from $\Lap(b)$ for $b < \frac{1}{6 \log(4kT/\beta)}$ and let $\noisyLosses = \Losses + \Noises$.  Then
\begin{enumerate}
\item $\Prob{\Noises}{\regret\left(\overrightarrow{\fixedalg}(\noisyLosses), \Losses, \fixedmods\right) > \sqrt{\frac{2 \log k}{T}} + b\sqrt{\frac{24\log(4k/\beta)}{T}}} \leq \beta$,
\item $\Prob{\Noises}{\regret\left(\overrightarrow{\swapalg}(\noisyLosses), \Losses, \swapmods\right) > k \sqrt{\frac{2 \log k}{T}} + b\sqrt{\frac{24k\log(4k/\beta)}{T}} } \leq \beta$.
\end{enumerate}
\end{corollary}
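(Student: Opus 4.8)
The plan is to obtain Corollary~\ref{thm:lowregretlaplace} by combining the deterministic regret guarantee of Theorem~\ref{thm:noregretalgsexist}, applied to the noisy loss matrix $\noisyLosses$, with the Laplace concentration estimate of Lemma~\ref{lem:noisylapregretsbounded}, after first controlling the size of the noise so that $\noisyLosses$ is a legitimate loss matrix.

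\textbf{Step 1 (bounding the noise).} First I would show that with probability at least $1-\beta/2$ the noise matrix is uniformly small. Since $\prob{|\Lap(b)| > s} = e^{-s/b}$, a union bound over the $kT$ entries of $\Noises$ gives $\Prob{\Noises}{\|\Noises\|_\infty > \zeta} \le \beta/2$ with $\zeta := b\log(2kT/\beta)$. The hypothesis $b < 1/(6\log(4kT/\beta))$ forces $\zeta < 1/6$, so on this event every entry of $\noisyLosses = \Losses + \Noises$ lies in $[1/6, 5/6] \subseteq [0,1]$. In particular, for each such realization of $\Noises$ the matrix $\noisyLosses$ is a fixed loss matrix in $[0,1]^{T \times k}$, hence a (trivially) adapted loss, so Theorem~\ref{thm:noregretalgsexist} applies: on this event $\regret(\overrightarrow{\fixedalg}, \noisyLosses, \fixedmods) \le \sqrt{2\log k / T}$ and $\regret(\overrightarrow{\swapalg}, \noisyLosses, \swapmods) \le k\sqrt{2\log k / T}$. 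These are exactly the quantities $\regret(\vec{\nralg}, \noisyLosses, \mods)$ appearing in Lemma~\ref{lem:noisylapregretsbounded}.

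\textbf{Step 2 (transferring from $\noisyLosses$ to $\Losses$ and a union bound).} Next I would apply Lemma~\ref{lem:noisylapregretsbounded} to bound $\regret(\vec{\nralg}(\noisyLosses), \Losses, \mods) - \regret(\vec{\nralg}, \noisyLosses, \mods)$. For part (1) take $\mods = \fixedmods$ (so $|\mods| = k$) and $\alpha = b\sqrt{24\log(4k/\beta)/T}$, which makes the tail bound $2|\mods| e^{-\alpha^2 T / 24 b^2} = 2k\, e^{-\log(4k/\beta)} = \beta/2$. For part (2) take $\mods = \swapmods$ (so $|\mods| = k^k$) and $\alpha = b\sqrt{24 k\log(4k/\beta)/T}$, which makes the tail bound $2 k^k e^{-k\log(4k/\beta)} = 2(\beta/4)^k \le \beta/2$ for $k \ge 1$. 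Taking a union bound over the event of Step~1 and the event of Lemma~\ref{lem:noisylapregretsbounded}, with probability at least $1-\beta$ we get $\regret(\overrightarrow{\fixedalg}(\noisyLosses), \Losses, \fixedmods) \le \regret(\overrightarrow{\fixedalg}, \noisyLosses, \fixedmods) + \alpha \le \sqrt{2\log k/T} + b\sqrt{24\log(4k/\beta)/T}$, and identically with $\swapalg$, $\swapmods$ in place of $\fixedalg$, $\fixedmods$ — these are precisely the two claimed bounds.

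\textbf{Main obstacle.} There is no conceptual difficulty here; the work is bookkeeping. The points to watch are: (i) splitting the failure probability $\beta$ correctly between the ``noise too large'' event of Step~1 and the ``regret gap too large'' event of Lemma~\ref{lem:noisylapregretsbounded}; (ii) verifying the side condition $\alpha \le b$ required by Lemma~\ref{lem:noisylapregretsbounded}, which holds provided $T \ge 24\log(4k/\beta)$ (resp. $T \ge 24k\log(4k/\beta)$) — a mild regime I would either state explicitly or fold into the hypotheses; and (iii) confirming $\zeta < 1/3$ so that Theorem~\ref{thm:noregretalgsexist} is legitimately applicable to $\noisyLosses$. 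The identities $2k\, e^{-\log(4k/\beta)} = \beta/2$ and $2(\beta/4)^k \le \beta/2$ for $k \ge 1$ are elementary.
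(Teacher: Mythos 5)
Your proposal is correct and follows essentially the same route as the paper: in Step~1 you control the noise via a Laplace tail bound plus a union bound over the $kT$ entries (the paper bounds $|\tjnoise| \le 1/3$ with probability $\ge 1-\beta/2$ directly from the hypothesis on $b$, whereas you set the threshold to $\zeta = b\log(2kT/\beta)$ and observe that the hypothesis forces $\zeta < 1/6$ — both suffice to place $\noisyLosses$ in $[0,1]^{T\times k}$); in Step~2 you apply Lemma~\ref{lem:noisylapregretsbounded} with the same choices of $\alpha$ and the same cardinalities $|\fixedmods| = k$, $|\swapmods| = k^k$, and conclude with the same union bound. One small thing you do that the paper does not: you flag that Lemma~\ref{lem:noisylapregretsbounded} (via Theorem~\ref{thm:conc}) carries the side condition $\alpha \le b$, which the corollary statement does not make explicit and which translates to $T \ge 24\log(4k/\beta)$ (resp.\ $T \ge 24k\log(4k/\beta)$); this is a legitimate caveat, although in the regime the paper later uses (Theorem~\ref{thm:accCElaplace}) it holds automatically.
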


As before, the technical condition upper bounding $\zeta$ is to ensure that the noisy loss matrix $\noisyLosses$ is contained in $[0,1]^{T \times k}$, so that the regret bounds of $\cA$ apply.

\subsection{From No Regret to Equilibrium} \label{sec:noregrettoeq}

Let $(\tau_{1}, \dots, \tau_{n})$ be the types for each of $n$ players.  Let $\mathcal{C} = \set{(\state_{i,1}, \dots, \state_{i,T})}_{i=1}^n$ be a collection of $n$ sequences of distributions over $k$ actions, one for each player.  Let $\set{(\losses_{i,1}, \dots, \losses_{i,T})}_{i =1}^n $ be a collection of $n$ sequences of loss vectors $\losses_{i,t} \in [0,1]^{k}$ formed by the action distribution.  More formally, for every $j$, $\loss^{j}_{i,t}  = 1 - \Ex{\ba_{-i}\sim\state_{-i,t}}{u(\tau_i,(j, \ba_{-i}))} $ for nonnegative utilities bounded by 1.
Define the maximum regret that any player has to her losses
\begin{equation*}
\mregret(\mathcal{C}, L, \mods) = \max_{i} \regret(\mathcal{C}_i, L_i, \mods)
\end{equation*}
where $\mathcal{C}_{i} = (\state_{i,0}, \dots, \state_{i,T})$ and $L_i = (\losses_{i,1}, \dots, \losses_{i,T})$.

Given the collection $\mathcal{C}$, we define the correlated action distribution $\Pi_{\mathcal{C}}$ to be the average distribution of play.  That is, $\Pi_{\mathcal{C}}$ is the distribution over $A$ defined by the following sampling procedure: Choose $t$ uniformly at random from $\set{1,2,\dots,T}$, then, for each player $i$, choose $a_i$ randomly according to the distribution $\state_{i,t}$, independently of the other players.

The following well known theorem (see, e.g. \citeasnoun{nisan2007algorithmic}) relates low-regret sequences of play to the equilibrium concepts  (Definition~\ref{def:ace}):
\begin{theorem} \label{thm:regtoeq}

If the maximum regret with respect to $\fixedmods$ is small, i.e.  $\mregret(\mathcal{C},L,\fixedmods) \leq \eta$ and players utilities are nonnegative bounded by $1$ then the correlated action distribution $\Pi_{\mathcal{C}}$ is an $\eta$-approximate coarse correlated equilibrium.  Similarly, if $\mregret(\mathcal{C},L,\swapmods) \leq \eta$, then $\Pi_{\mathcal{C}}$ is an $ \eta$-approximate correlated equilibrium.
\end{theorem}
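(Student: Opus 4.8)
The plan is to unfold the equilibrium inequalities in Definition~\ref{def:ace} (and the coarse-correlated analogue, where deviations are to a fixed action) and recognize each side \emph{exactly} as one of the $\exploss$ quantities that appear in the definition of regret, so that for each player the equilibrium gap becomes precisely $\regret(\mathcal{C}_i, L_i, \mod)$ for the relevant modification rule $\mod$. Two elementary facts drive everything: (i) the sampling description of $\Pi_{\mathcal{C}}$ --- draw $\round$ uniformly from $\set{1,\dots,\rounds}$ and then let each player play independently from $\state_{i,\round}$, so that conditioned on $\round$ the marginal of $\ba_{-i}$ is exactly $\state_{-i,\round}$; and (ii) the defining identity $\itjloss = 1 - \Ex{\ba_{-i}\sim\state_{-i,\round}}{u(\tau_i,(j,\ba_{-i}))}$ of the loss vectors.

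First I would fix a player $i$ and rewrite the ``no-deviation'' payoff. Conditioning on $\round$ and then on $i$'s own drawn action $j$, fact (i) gives $\Ex{\ba\sim\Pi_{\mathcal{C}}}{u_i(\ba)} = \frac{1}{\rounds}\sum_{\round}\sum_j \itjstate\,\Ex{\ba_{-i}\sim\state_{-i,\round}}{u(\tau_i,(j,\ba_{-i}))}$; then fact (ii) together with the definition of $\exploss(\state,\losses)$ turns the inner sum over $j$ into $1-\exploss(\state_{i,\round},\losses_{i,\round})$, and averaging over $\round$ yields $\Ex{\ba\sim\Pi_{\mathcal{C}}}{u_i(\ba)} = 1 - \exploss(\mathcal{C}_i, L_i)$.

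Next I would do the same for a deviation. For the coarse case this is a constant map $\mod_{j^\star}\in\fixedmods$; for the correlated case an arbitrary $\mod\from A_i\to A_i$, which lies in $\swapmods$. Playing $\mod(j)$ whenever the sampler would have played $j$ replaces $\sum_j \itjstate\,\Ex{\ba_{-i}\sim\state_{-i,\round}}{u(\tau_i,(j,\ba_{-i}))}$ by $\sum_j \itjstate\,(1-\loss^{\mod(j)}_{i,\round})$, and regrouping the probability mass according to the definition of $\modstate$ makes the $\loss$-part equal to $\exploss(\modstate_{i,\round},\losses_{i,\round})$; averaging over $\round$ gives $\Ex{\ba\sim\Pi_{\mathcal{C}}}{u_i(\mod(a_i),\ba_{-i})} = 1 - \exploss(\mathcal{C}_i, L_i, \mod)$. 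Subtracting the two expressions, the gain from the deviation is $\exploss(\mathcal{C}_i, L_i) - \exploss(\mathcal{C}_i, L_i, \mod) = \regret(\mathcal{C}_i, L_i, \mod)$, which is at most $\regret(\mathcal{C}_i, L_i, \fixedmods) \le \mregret(\mathcal{C}, L, \fixedmods) \le \eta$ when $\mod \in \fixedmods$, and at most $\regret(\mathcal{C}_i, L_i, \swapmods) \le \mregret(\mathcal{C}, L, \swapmods) \le \eta$ when $\mod$ ranges over all of $\swapmods$. Taking the maximum over players and over deviations then gives the two claimed statements.

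There is no real obstacle here --- this is the standard folklore theorem; the only thing to get right is the bookkeeping of translating between the utility language of Definition~\ref{def:ace} and the loss/regret language of Section~\ref{sec:no-regret}: the $1-(\cdot)$ normalization that relates utilities to losses, the harmless off-by-one between the $\rounds+1$ distributions in $\mathcal{C}_i$ and the $\rounds$ loss rows (the $\round=0$ distribution never enters $\exploss$ or the sampling of $\Pi_{\mathcal{C}}$), and the product structure of $\Pi_{\mathcal{C}}$ across players at each fixed time. The one point worth stating carefully is fact (ii): because each $\itjloss$ is, by construction, one minus the expected utility of the pure deviation to $j$ against the \emph{contemporaneous} marginal $\state_{-i,\round}$, the per-round expected utility of $i$ under $\Pi_{\mathcal{C}}$ and under any of its deviations coincide, after the $1-(\cdot)$ shift, with exactly the $\exploss$ terms whose difference defines $\regret$.
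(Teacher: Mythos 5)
The paper does not actually prove this theorem: it is presented as a well-known result with a citation to \citeasnoun{nisan2007algorithmic}, so there is no internal proof to compare against. Your argument is the standard textbook derivation, and it is correct. The two identities $\Ex{\ba\sim\Pi_{\mathcal{C}}}{u_i(\ba)} = 1 - \exploss(\mathcal{C}_i, L_i)$ and $\Ex{\ba\sim\Pi_{\mathcal{C}}}{u_i(\mod(a_i), \ba_{-i})} = 1 - \exploss(\mathcal{C}_i, L_i, \mod)$ use precisely the two facts you isolate: that under $\Pi_{\mathcal{C}}$, conditioned on the uniformly drawn round $\round$, $a_i$ and $\ba_{-i}$ are independent with $\ba_{-i}\sim\state_{-i,\round}$; and that $\itjloss$ is by construction $1$ minus the expected utility of pure action $j$ against $\state_{-i,\round}$. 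Regrouping mass by $\mod(j)$ recovers $\exploss(\modstate_{i,\round},\losses_{i,\round})$ exactly as in the paper's definition of $\modstate$, and the deviation gain collapses to $\regret(\mathcal{C}_i,L_i,\mod) \le \mregret(\mathcal{C},L,\mods) \le \eta$, with $\fixedmods$ giving the CCE claim and $\swapmods$ giving the CE claim. Your side remark about the $\round=0$ distribution being irrelevant to both $\exploss$ and the sampling of $\Pi_{\mathcal{C}}$ is also accurate; the nonnegativity/boundedness hypothesis serves only to keep the losses in $[0,1]$ and plays no role in the algebra.
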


In this section we show that no-regret algorithms are noise-tolerant, that is we still get good regret bounds with respect to the real losses if we run a no-regret algorithm on noisy losses (real losses plus low-magnitude noise).

Let $\Losses \in [0,1]^{T \times k}$ be a loss matrix.  Define $\scaledLosses = \frac{\Losses + 1}{3}$ (entrywise) and note that $\scaledLosses \in [\frac{1}{3}, \frac{2}{3}]^{T \times k}$.  The following lemma states that running $\nralg$ on $\scaledLosses$ doesn't significantly increase the regret with respect to the real losses.
\begin{lemma} \label{lem:lossscaled}
For every algorithm $\nralg$, every family $\mods$, and every loss matrix $\Losses \in [0,1]^{T \times k}$,
\begin{equation*}
\regret(\vec{\nralg}(\scaledLosses), \Losses, \mods) \leq 3 \regret(\vec{\nralg}, \scaledLosses, \mods).
\end{equation*}
In particular, for every $\Losses \in [0,1]^{T \times k}$
\begin{equation*}
\regret\left(\overrightarrow{\fixedalg}(\scaledLosses), \Losses, \fixedmods\right) \leq \sqrt{\frac{18\log k}{T}}
\;\text{ and }\;
\regret\left(\overrightarrow{\swapalg}(\scaledLosses), \Losses, \swapmods\right) \leq k \sqrt{\frac{18\log k}{T}}.
\end{equation*}
\end{lemma}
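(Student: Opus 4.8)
The key observation is that regret transforms covariantly under affine rescalings of the loss matrix. Since $\scaledLosses = \frac{1}{3}(\Losses + \mathbf{1})$ (entrywise, with $\mathbf{1}$ the all-ones vector in $\R^k$), we have $\Losses = 3\scaledLosses - \mathbf{1}$. Because every distribution $\state$ over the $k$ actions, and every pushforward $\modstate$ for $\mod \in \mods$, sums to $1$, adding the same constant to every coordinate of a loss vector shifts the expected loss $\exploss(\state,\cdot)$ by exactly that constant, independent of $\state$; hence the shift cancels in the difference that defines regret. So I expect the factor $3$ in the statement to come out as an \emph{equality}, not merely an inequality.

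I would make this precise as follows. Fix the distributions $\state_t := \nralg_t(\scaledLosses)$ for $t = 1, \dots, T$ that $\nralg$ produces when run on $\scaledLosses$, and let $\scaled{\losses}_t$ denote the $t$-th row of $\scaledLosses$, so the $t$-th row of $\Losses$ is $3\scaled{\losses}_t - \mathbf{1}$. For any distribution $\state \in \Pi$, since $\sum_{j=1}^k \state^j = 1$,
\[
\exploss(\state,\, 3\scaled{\losses}_t - \mathbf{1}) = \sum_{j=1}^k \state^j\left(3\,\scaled{\losses}^j_t - 1\right) = 3\,\exploss(\state, \scaled{\losses}_t) - 1 .
\]
Taking $\state = \state_t$ and averaging over $t$ gives $\exploss(\vec{\nralg}(\scaledLosses), \Losses) = 3\,\exploss(\vec{\nralg}(\scaledLosses), \scaledLosses) - 1$; taking $\state = \modstate_t$ (again a distribution, by definition of the pushforward) and averaging gives $\exploss(\vec{\nralg}(\scaledLosses), \Losses, \mod) = 3\,\exploss(\vec{\nralg}(\scaledLosses), \scaledLosses, \mod) - 1$. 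Subtracting, the additive $-1$'s cancel, and for every $\mod \in \mods$,
\[
\regret(\vec{\nralg}(\scaledLosses), \Losses, \mod) = 3\,\regret(\vec{\nralg}, \scaledLosses, \mod) .
\]
Taking the maximum over $\mod \in \mods$ on both sides yields $\regret(\vec{\nralg}(\scaledLosses), \Losses, \mods) = 3\,\regret(\vec{\nralg}, \scaledLosses, \mods)$, which is stronger than the claimed inequality.

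For the two concrete bounds, I would observe that $\scaledLosses \in [\frac{1}{3}, \frac{2}{3}]^{T \times k} \subseteq [0,1]^{T\times k}$ is a legitimate loss matrix, and being fixed it is trivially an adapted loss for any algorithm, so Theorem~\ref{thm:noregretalgsexist} applies with loss matrix $\scaledLosses$: $\regret\left(\overrightarrow{\fixedalg}, \scaledLosses, \fixedmods\right) \leq \sqrt{2\log k / T}$ and $\regret\left(\overrightarrow{\swapalg}, \scaledLosses, \swapmods\right) \leq k\sqrt{2\log k / T}$. Substituting into the identity above and absorbing the factor $3$ into the square root ($3^2 \cdot 2 = 18$) gives $\regret\left(\overrightarrow{\fixedalg}(\scaledLosses), \Losses, \fixedmods\right) \leq \sqrt{18\log k/T}$ and $\regret\left(\overrightarrow{\swapalg}(\scaledLosses), \Losses, \swapmods\right) \leq k\sqrt{18\log k/T}$.

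There is no genuine obstacle here; the lemma is a change-of-variables bookkeeping step. The only point that needs a moment's care is that the cancellation of the additive shift relies on applying the affine identity to the \emph{same} rescaling of the losses in both the ``algorithm'' term $\exploss(\vec{\nralg}(\scaledLosses), \cdot)$ and the ``comparator'' term $\exploss(\vec{\nralg}(\scaledLosses), \cdot, \mod)$ --- which works precisely because $\modstate_t$ is itself a probability distribution --- while the sequence of distributions fed to $\nralg$, namely $\state_t = \nralg_t(\scaledLosses)$, is literally unchanged (we rescale only the losses we \emph{evaluate} against, not the ones we \emph{run on}).
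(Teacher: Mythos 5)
Your proof is correct and takes essentially the same approach as the paper's: both observe that since $\Losses = 3\scaledLosses - \mathbf{1}$, linearity of $\exploss$ in the loss vector together with the cancellation of the additive shift in the regret difference yields the factor-$3$ identity (in fact an equality), after which one substitutes $\state_t = \nralg_t(\scaledLosses)$ and invokes Theorem~\ref{thm:noregretalgsexist}. The only cosmetic difference is that you spell out the cancellation of the constant term explicitly, whereas the paper attributes it to ``linearity of expectation.''
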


In light of Lemma \ref{lem:lossscaled}, for the rest of this section we will take $\Losses$ to be a loss matrix in $[\frac{1}{3}, \frac{2}{3}]^{T \times k}$.  This rescaling will only incur an additional factor of $3$ in the regret bounds we prove.  Let $\Noises \in \R^{T \times k}$ be a real valued \emph{noise matrix}.  Let $\noisyLosses = \Losses + \Noises$ (entrywise).  In the next section we  consider the case where $Z$ is an arbitrary matrix with bounded entries.  We prove a tighter bound for the case where $Z$ consists of independent draws from a Laplace distribution.

Having demonstrated the noise tolerance of no-regret algorithms, we now argue that for appropriately chosen noise, the output of the algorithm constitutes a jointly-differentially private mechanism (Definition \ref{def:privacy}).
At a high-level, our proof has two main steps. First, we construct a \emph{wrapper} $\nrl^{\nralg}$ that will ensure privacy.  The wrapper takes as input the parameters of the game, the reported tuple of types, and any no-regret algorithm $\nralg$.  This wrapper will attempt to compute an equilibrium using the method outlined in Section~\ref{sec:noregrettoeq}---for $T$ periods, it will compute a mixed strategy for each player by running $\nralg$ on the previous period's losses.  In order to ensure privacy, instead of using the true losses as input to $\nralg$, it will use losses perturbed by suitably chosen Laplace noise.  After running for $T$ periods, the wrapper will output to each player the sequence of $T$ mixed strategies computed for that player.  In Theorem \ref{thm:privatecnrl} we show that this constitutes a jointly differentially private mechanism. Then, in Theorem \ref{thm:accCElaplace}, we show that the output of this wrapper converges to an approximate correlated equilibrium when the input algorithm is the no-swap-regret algorithm $\swapalg$.

\subsubsection{Noisy No-Regret Algorithms are Differentially Private}\label{sec:noisyDP}

\begin{algorithm}
\caption{``Wrapper" Algorithm}
\label{NR-Laplace}
\begin{algorithmic}[0]
\INPUT : Vector of types.
\OUTPUT : $T$ different distributions for each player over the $k$ actions.
\Procedure {$\nrl^{\nralg}$}{$\tau_{1}, \dots \tau_{n}$}
\State Parameters: $\eps, \delta, \lambda \in (0,1], n, k, T \in \N$
\State Let: $\state_{1,1}, \dots, \state_{n,1}$ each be the uniform distribution over $\actionset$.
\State Let: $b = \lambda \eps^{-1} \sqrt{8nkT \log(1/\delta)}$
\For{$t = 1, 2, \dots, T$}
	\State Let: $\itjloss = 1 - \Ex{\state_{-i,t}}{u(\tau_i,(j, a_{-i}))}$ for every player $i$, action $j$.
	\State Let:  $\itjnoise$ be an i.i.d. draw from $\Lap(b)$ for every player $i$, action $j$.
	\State Let: $\noisyitjloss = \itjloss + \itjnoise$ for every player $i$, action $j$.
	\State Let: $\state_{i,t+1} = \nralg(\itstate, \noisyitlosses)$ for every player $i$.
\EndFor
\Return $(\state_{i,1}, \dots, \state_{i,T})$ to player $i$, for every $i$.
\EndProcedure
\end{algorithmic}
\end{algorithm}

\begin{theorem}[Privacy of $\nrl^\nralg$] \label{thm:privatecnrl}
For any $\nralg$, the algorithm $\nrl^{\nralg}$ satisfies $(\eps, \delta)$-joint differential privacy.
\end{theorem}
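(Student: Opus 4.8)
The plan is to reduce joint differential privacy to ordinary $(\eps,\delta)$-differential privacy via the post-processing property, and then verify the latter by an adaptive-composition argument in which the $\lambda$-largeness of the game supplies a small per-round sensitivity. The first move is to isolate the right object: the trajectory $(\state_{i',1},\dots,\state_{i',T})$ returned to player $i'$ is produced by starting from the uniform distribution and iterating $\nralg$ on $\noisylosses_{i',1},\dots,\noisylosses_{i',T-1}$, and player $i'$'s own update only ever sees player $i'$'s own (noisy) loss vectors. Hence, for any fixed $i$, the joint output to the players $i'\neq i$ is a deterministic function of the collection $\Xi_{-i}\eqdef\{\noisylosses_{i',t}:i'\neq i,\ t\in[T]\}$, so by post-processing it suffices to show that the law of $\Xi_{-i}$ changes by at most $(\eps,\delta)$ when player $i$'s reported type changes from $\tau_i$ to $\tau_i'$, for every $i$ and every $\tau_{-i}$. (This is exactly why joint rather than full privacy is the right target for Definition~\ref{def:privacy}: player $i$'s own trajectory must depend sharply on $\tau_i$ since it is a best response, so full differential privacy is hopeless, while only the other players' outputs are constrained.)

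Next I would generate $\Xi_{-i}$ round by round and treat it as an adaptive composition. Conditioned on the part of $\Xi_{-i}$ revealed in rounds $1,\dots,t-1$, the round-$t$ mixed strategies $\state_{i',t}$ of all players $i'\neq i$ are already pinned down, so the round-$t$ release $(\loss^j_{i',t})_{i'\neq i,\,j\in[k]}$ depends on $\tau_i$ only through player $i$'s own round-$t$ strategy $\state_{i,t}$. Splitting player $i$'s coordinate out of the expectation, $\loss^j_{i',t}=1-\Ex{a_i\sim\state_{i,t}}{g(a_i)}$ with $g(a_i)=\Ex{a_{\mathrm{rest}}\sim\state_{\mathrm{rest},t}}{u(\tau_{i'},(j,a_i,a_{\mathrm{rest}}))}$, and Definition~\ref{def:sensitive} says $g$ has oscillation at most $\lambda$; since $\state_{\mathrm{rest},t}$ and $\tau_{i'}$ are fixed by the history, $\loss^j_{i',t}$ lies in one common interval of width $\le\lambda$ no matter what $\state_{i,t}$ is, hence differs by at most $\lambda$ between the $\tau_i$ and $\tau_i'$ executions. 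Adding independent $\Lap(b)$ noise to each of the $(n-1)k$ coordinates of each of the $T$ rounds, and composing --- coordinate-wise within a round, adaptively across rounds --- the advanced composition theorem for differential privacy (see, e.g., \citeasnoun{DR13}) yields $(\eps,\delta)$-differential privacy once $b$ is at least a constant times $\lambda\eps^{-1}\sqrt{(n-1)kT\log(1/\delta)}$, which is precisely the setting $b=\lambda\eps^{-1}\sqrt{8nkT\log(1/\delta)}$ chosen inside $\nrl^{\nralg}$.

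The step I expect to require the most care is justifying this composition in the face of a mild circularity: player $i$'s round-$t$ strategy $\state_{i,t}$ --- the sole conduit through which $\tau_i$ reaches the other players' losses --- is a function of the Laplace noise added to player $i$'s losses in earlier rounds, which is not part of the released transcript $\Xi_{-i}$. So the round-$t$ ``query'' is not literally a deterministic function of the released history and the type profile; instead, under either type, the round-$t$ view is a mixture over the possible values of $\state_{i,t}$, and one has to observe that every term of this mixture is within $\ell_\infty$-distance $\lambda$ of a common reference point to conclude (by a short convexity argument) that the round-$t$ view still moves by at most the claimed amount when $\tau_i\to\tau_i'$ --- only then can the per-round bounds be fed into the adaptive-composition bookkeeping. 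A secondary, routine point is keeping the perturbed losses inside $[0,1]$ so that $\nralg$'s own guarantees remain applicable, but that is discharged in the accuracy analysis rather than here.
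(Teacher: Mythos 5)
Your proposal follows the paper's proof closely: reduce joint differential privacy to ordinary differential privacy of the noisy-loss transcript $\Xi_{-i}$ via post-processing, then analyze $\Xi_{-i}$ round by round as an adaptive composition of Laplace mechanisms applied to $\lambda$-sensitive queries, with $\lambda$-largeness supplying the per-round sensitivity bound. The decomposition, the role of largeness, and the appeal to advanced composition all match the paper's argument.

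The one place you go beyond the paper is in explicitly flagging the circularity that $\state_{i,t}$ depends on player $i$'s own Laplace noise $\{Z^j_{i,s}\}_{s<t}$, which is never part of the released transcript, so the round-$t$ queries are not deterministic functions of $\tau$ and the released history. The paper's proof brushes past this ("$\pi_{i,t}$ ... can change arbitrarily") without stating how the composition bookkeeping handles randomness that lives outside the transcript. Your convexity/mixture observation does establish the per-round $\lambda/b$-closeness, but for feeding the per-round bounds cleanly into the adaptive composition theorem the simplest fix is slightly different: condition on all of player $i$'s own noise values $\{Z^j_{i,s}\}$ once and for all. These coins are independent of $\tau_i$ and of the fresh noise $Z^j_{i',t}$ added at round $t$, so conditioned on them each round's release is a deterministic $\lambda$-sensitive query plus fresh Laplace noise, advanced composition applies verbatim to the $(n-1)kT$ queries, and the unconditional statement follows by averaging over the $\tau_i$-independent coins. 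Either route is fine; the conditioning phrasing just fits the hypotheses of Theorem~\ref{thm:advcomp} without further argument.
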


We now sketch the proof. Fix any player $i$ and any tuple of types $\tau_{-i} \in \mathcal{T}^{n-1}$. We argue that the output to all other players is differentially private as a function of $\tau_{i}$.   It will be easier to analyze a modified mechanism that outputs the noisy losses $(\noisylosses_{-i,1}, \dots, \noisylosses_{-i,T})$, rather than the mixed strategies $(\state_{-i,1},\dots,\state_{-i,T})$.  Since the noisy losses are sufficient to compute $(\state_{-i,1}, \dots, \state_{-i,T})$, proving that the noisy losses are jointly differentially private is sufficient to prove that the mixed strategies are as well.

To get intuition for the proof, first consider the first period of noisy losses $\noisylosses_{-i,1}$.  For each player $i' \neq i$, and each action $j \in [k]$, the loss $l^{1}_{i',j}$ depends on $\pi_{i,1}$, which is independent of the utility of player $i$.  Thus, in the first round there is no loss of privacy.  In the second round, the loss $l^{2}_{i',j}$ depends on $\pi_{i,2}$, which depends on the losses for player $i$ in period $1$, and thus depends on the utility of player $i$.  The loss $l^{2}_{i',j}$ also depends on the mixed strategies $\pi_{i'',2}$ for players $i'' \neq i, i'$, but as we have argued these mixed strategies are independent of $\tau_{i}$.   We will take a pessimistic view and assume that changing player $i$'s type from $\tau_i$ to $\tau'_{i}$ will change $\pi_{i,2}$ \emph{arbitrarily}.  The assumption that the underlying game is $\lambda$-large ensures that the expected losses of player $i'$, $l^{2}_{i',j}$ only change by at most $\lambda$.  Thus, by Theorem~\ref{thm:laplaceprivacy} and our choice of the noise parameter $b$, each noisy loss $\widehat{l}^{2}_{i',j}$ will be $\eps / \sqrt{8nkT \log(1/\delta)}$ differentially private as a function of $\tau_{i}$.

Understanding the third round will be sufficient to argue the general case.  Just as in period $2$, the loss $l^{3}_{i',j}$ depends on $\pi_{i,3}$.  However, the $l^{3}_{i',j}$ also depends on $\pi_{i'',3}$ for players $i'' \neq i,i'$ and now these strategies do indeed depend on the utility of $i$ (and thus $\tau_i$), as we saw when reasoning about period $2$.  However, the key observation is that $\pi_{i'',3}$ depends on the utility of player $i$ only through the noisy losses $l^{2}_{i'',j}$ that we computed in the previous round.  Since we already argued that these losses are differentially private as a function of $\tau_{i}$, it will not compromise privacy further to use these noisy losses when computing $l^{3}_{i',j}$.  Thus, conditioned on the noisy losses output in periods $1$ and $2$, the losses $l^{3}_{i',j}$ depend only on the mixed strategy of player $i$ in period $3$.  As we argued before, the amount of noise we add to these losses will be sufficient to ensure $\eps / \sqrt{8nkT \log(1/\delta)}$ differential privacy as a function of $\tau_{i}$.

In summary, we have shown that for every period $t$, every player $i' \neq i$, and every action $j$, the noisy loss $\widehat{l}^{t}_{i',j}$ is an $\eps/\sqrt{8nkT \log(1/\delta)}$-differentially private function of $\tau_{i}$ and of the previous $t-1$ periods' noisy losses, which are themselves already differentially private.  In total we compute $T(n-1)k$ noisy losses.  Hence, the adaptive composition theorem (Theorem~\ref{thm:advcomp} in the appendix) ensures that the entire sequence of noisy losses $\widehat{l}^{1}_{-i},\dots,\widehat{l}^{T}_{-i}$ is $\eps$-differentially private as a function of $\tau_{i}$.  Since this analysis holds for every player $i$, and shows that the output to all of the remaining players is $(\eps,\delta)$-differentially private as a function of $\tau_{i}$, the entire mechanism (with all the players input) is $(\eps,\delta)$-jointly differentially private.

\subsubsection{Noisy No-Regret Algorithms Compute Approximate Equilibria}
Therefore we have shown that this ``wrapper" algorithm is jointly differentially private in the sense of Definition \ref{def:privacy}. We now state that using this algorithm with $\swapalg$ will result in an approximate correlated equilibrium.

%
%
\begin{theorem}[Computing CE]\label{thm:accCElaplace}
Let $\nralg = \swapalg$.  Fix the environment, i.e. the number of players $n$, the number of actions $k$, the largeness parameter of the game $\lambda$, and the degree of privacy desired, $(\epsilon, \delta)$.
One can then select the number of rounds the algorithm $\nrl^{\swapalg}$ must run $T$ and $\beta$  satisfying:
\begin{equation} \label{eqn:paramassCE}
\lambda \eps^{-1} \sqrt{8nkT\log(1/\delta)} \leq \frac{1}{6\log(4nkT/\beta)},
\end{equation}
such that probability at least $1-\beta$, the algorithm returns an $\eta$-approximate correlated equilibrium for:%
\begin{equation*}
\eta = \tilde O\left(\frac{ \lambda k \sqrt{n \log(1/\delta) \log(1/\beta)} }{\eps}  \right)
\end{equation*}
\end{theorem}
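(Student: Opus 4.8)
The plan is to verify the two premises of Theorem~\ref{thm:regtoeq}: that with probability at least $1-\beta$ the algorithm $\nrl^{\swapalg}$ outputs a collection $\mathcal{C} = \set{(\state_{i,1},\dots,\state_{i,T})}_{i=1}^n$ of play sequences whose maximum swap-regret against the \emph{realized} losses satisfies $\mregret(\mathcal{C}, L, \swapmods) \le \eta$; Theorem~\ref{thm:regtoeq} then delivers that $\Pi_{\mathcal{C}}$ is an $\eta$-approximate correlated equilibrium. Privacy is not re-examined here, since it is supplied by Theorem~\ref{thm:privatecnrl}.

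\smallskip
\noindent\emph{Per-player regret.} First I would fix a player $i$ and isolate her sequence $\mathcal{C}_i = \vec{\swapalg}(\noisyLosses_i)$, where (after the harmless rescaling of Lemma~\ref{lem:lossscaled}) $\noisyLosses_i = \scaledLosses_i + \Noises_i$, with $\scaledLosses_i \in [1/3,2/3]^{T\times k}$ the rescaled true losses of player $i$ and $\Noises_i$ an i.i.d.\ $\Lap(b)$ matrix. A point requiring care is that $\scaledLosses_i$ is \emph{not} fixed in advance: player $i$'s period-$t$ loss is a function of the other players' period-$t$ mixed strategies, which depend on the earlier \emph{noisy} losses, so $\scaledLosses_i$ is adapted to $\swapalg$ in the sense of Definition~\ref{def:adaptloss}. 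This is exactly the generality covered by Theorem~\ref{thm:noregretalgsexist} (whose swap-regret bound holds against adapted losses) and by Lemma~\ref{lem:noisylapregretsbounded} and Corollary~\ref{thm:lowregretlaplace} (whose proofs only use that $\state_{i,t}$ is a function of the history through round $t-1$, a martingale-type concentration indifferent to adaptivity). Applying Corollary~\ref{thm:lowregretlaplace}(2) with failure probability $\beta/n$ and noise scale $b$ --- legitimate precisely because its hypothesis $b < \frac{1}{6\log(4nkT/\beta)}$ is, once $b = \lambda\eps^{-1}\sqrt{8nkT\log(1/\delta)}$ is substituted, the stated constraint~\eqref{eqn:paramassCE}, and it is what keeps $\noisyLosses_i$ inside $[0,1]^{T\times k}$ --- and then paying the factor $3$ from Lemma~\ref{lem:lossscaled}, I obtain that except with probability $\beta/n$,
\[
\regret(\mathcal{C}_i, L_i, \swapmods) \;\le\; 3k\sqrt{\frac{2\log k}{T}} \;+\; 3b\sqrt{\frac{24k\log(4nk/\beta)}{T}}.
\]

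\smallskip
\noindent\emph{Union bound, substitution, and choice of $T$.} A union bound over the $n$ players gives, with probability at least $1-\beta$, that $\mregret(\mathcal{C},L,\swapmods)$ is at most this same right-hand side; I set $\eta$ equal to it. Substituting $b = \lambda\eps^{-1}\sqrt{8nkT\log(1/\delta)}$, the key structural observation is that in the second summand the $\sqrt{T}$ inside $b$ cancels the $1/\sqrt{T}$, leaving a quantity \emph{independent of} $T$, equal to $\tilde O\!\big(\lambda k\sqrt{n\log(1/\delta)\log(nk/\beta)}/\eps\big)$. The first summand, by contrast, only decreases as $T$ grows, so I would take $T$ as large as~\eqref{eqn:paramassCE} permits (such a $T$ exists exactly when the environment is ``large enough'' for~\eqref{eqn:paramassCE} to be feasible), at which point a short computation --- invoking the swap-regret guarantee of $\swapalg$ --- makes the first summand also $\tilde O\!\big(\lambda k\sqrt{n\log(1/\delta)\log(1/\beta)}/\eps\big)$. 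Summing, $\eta = \tilde O\!\big(\lambda k\sqrt{n\log(1/\delta)\log(1/\beta)}/\eps\big)$, and Theorem~\ref{thm:regtoeq} completes the proof.

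\smallskip
\noindent\emph{Expected difficulties.} The substantive work is in the per-player step: checking that every link in the chain --- rescaling (Lemma~\ref{lem:lossscaled}), Laplace noise-tolerance (Corollary~\ref{thm:lowregretlaplace}), and the $\swapalg$ swap-regret bound --- survives the fact that the loss sequences are generated \emph{adaptively} by the game dynamics rather than fixed in advance, and that the per-player failure probabilities together with the $[0,1]$-containment requirement on the perturbed losses compose into exactly the hypothesis~\eqref{eqn:paramassCE}. The remaining work --- choosing $T$ and tracking how the $\sqrt{T}$ in the noise scale $b$ interacts with the two regret terms so as to recover precisely the stated $\eta$ (in particular its dependence on $k$) --- is bookkeeping, but bookkeeping that must be done with care.
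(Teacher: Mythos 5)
Your proof takes essentially the same route as the paper's: reduce to Theorem~\ref{thm:regtoeq} by bounding per-player swap-regret against the \emph{real} losses via Corollary~\ref{thm:lowregretlaplace}(2) with failure probability $\beta/n$ (after the rescaling of Lemma~\ref{lem:lossscaled}), union-bound over players, substitute $b = \lambda\eps^{-1}\sqrt{8nkT\log(1/\delta)}$, observe the $\sqrt{T}$ cancellation in the noise term, and pick $T$ subject to~\eqref{eqn:paramassCE}. Your observation that~\eqref{eqn:paramassCE} is literally the hypothesis of Corollary~\ref{thm:lowregretlaplace} once $\beta \mapsto \beta/n$ is substituted is correct and is also what the paper does; your adaptivity remark is a careful addition that the paper does not spell out.

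The one place you should not have waved your hands is the ``short computation'' for the first summand. Taking $T$ at the maximum~\eqref{eqn:paramassCE} allows gives $\sqrt{T} \approx \eps / \bigl(6\lambda\sqrt{8nk\log(1/\delta)}\,\log(4nkT/\beta)\bigr)$, and plugging this into $3k\sqrt{2\log k/T}$ (the $\swapalg$ bound as it is literally stated in Theorem~\ref{thm:noregretalgsexist} and Corollary~\ref{thm:lowregretlaplace}(2)) yields $\tilde O\bigl(\lambda\, k^{3/2}\sqrt{n\log(1/\delta)}/\eps\bigr)$ --- an extra $\sqrt{k}$ over the stated $\eta$. The paper's own calculations (e.g.\ the proof of Theorem~\ref{thm:accCEmedian}) use the swap-regret form $\sqrt{2k\log k/T}$, which has the $k$ \emph{inside} the square root rather than outside, and with that form the first summand does match the stated $\eta$. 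So the discrepancy traces back to an inconsistency in the paper's own statement of the $\swapalg$ regret bound, but your proposal, as written, takes Corollary~\ref{thm:lowregretlaplace}(2) at face value and asserts the first summand lands at the stated $\eta$; carrying out the computation you deferred would have revealed that it does not. You should either flag the mismatch in the $k$-dependence of the $\swapalg$ bound and use $\sqrt{2k\log k/T}$, or accept the weaker $\tilde O(\lambda k^{3/2}\sqrt{n\log(1/\delta)\log(1/\beta)}/\eps)$.
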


It is already well known that no-regret alogrithms converge `quickly' to approximate equilibria-- recall Theorems \ref{thm:noregretalgsexist} and \ref{thm:regtoeq}. In the previous section, we showed that adding noise still leads to low regret (and therefore to approximate equilibrium). The tradeoff therefore is this: to get a more `exact' equilibrium, the algorithm has to be run for more rounds but this will result in a less private outcome by the arguments in Theorem \ref{thm:privatecnrl}. The current theorem makes precise the tradeoff between the two.

\begin{theorem}
For every $\lambda$- large game $\G$ with nonnegative utilities bounded by $1$ and action sets $|A_i|\leq k$, there is a mediator $\cM$ such that ``nice behavior" forms an $\eta'$-approximate ex-post Nash equilibrium of the strongly mediated game $\cG_\M^S$ with probability $1-o(1)$ for
$$
\eta' = \tilde O\left( n^{1/4} \sqrt{k \lambda \log(1/\lambda)} \right)
$$
\end{theorem}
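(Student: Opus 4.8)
The plan is to instantiate the reduction of Theorem~\ref{thm:truth} with the mechanism $\nrl^{\swapalg}$ from Section~\ref{sec:noisyDP}. Concretely, let the mediator $\cM$ run $\nrl^{\swapalg}(\tau)$ to produce, for each player $i$, a sequence of mixed strategies $(\state_{i,1},\dots,\state_{i,T})$; then have $\cM$ draw a single round index $\hat t$ uniformly from $[T]$ using internal randomness that is common to all players, and recommend to player $i$ an action sampled from $\state_{i,\hat t}$. Since this last step only post-processes the output of $\nrl^{\swapalg}$, Theorem~\ref{thm:privatecnrl} implies $\cM$ is $(\eps,\delta)$-jointly differentially private. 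Moreover, the induced distribution over recommended action profiles is exactly the correlated distribution $\Pi_{\mathcal{C}}$ for $\mathcal{C} = \{(\state_{i,1},\dots,\state_{i,T})\}_{i}$, so Theorem~\ref{thm:accCElaplace} (via Theorem~\ref{thm:regtoeq}) gives that with probability at least $1-\beta$ the output of $\cM$ is an $\eta$-approximate correlated equilibrium of the realized complete-information game, with $\eta = \tilde O\!\bigl(\lambda k \sqrt{n \log(1/\delta)\log(1/\beta)}/\eps\bigr)$. These are precisely the two hypotheses of Theorem~\ref{thm:truth} (with upper bound $U = 1$ on utilities), so nice behavior forms an $\eta'$-approximate ex-post Nash equilibrium of $\cG^S_\cM$ with $\eta' = \eta + \eps + \delta + \beta$.

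It then remains to choose the free parameters to minimize $\eta'$. First set $\delta = \beta = 1/n$ (any inverse-polynomial choice works): this contributes only $o(1)$ directly to $\eta'$, while inflating $\eta$ by at most a polylogarithmic factor in $n$, which is swallowed by the $\tilde O(\cdot)$. With $\delta,\beta$ fixed we have $\eta' = \tilde O(\lambda k \sqrt{n}/\eps) + \eps + o(1)$, and balancing the first two terms gives $\eps = \Theta(\sqrt{\lambda k}\,n^{1/4})$ up to logarithmic factors, hence $\eta' = \tilde O(\sqrt{\lambda k}\,n^{1/4})$. The explicit $\sqrt{\log(1/\lambda)}$ in the target bound appears because the logarithmic factors hidden in $\tilde O(\cdot)$ carry a $\log(1/\eps)$, and with $\eps$ chosen as above this is $\Theta(\log(1/(\lambda k n^{1/4})))$; since $\lambda$ is a free parameter unconstrained by $n$, we record it as $\log(1/\lambda)$ and fold the $\log n$ and $\log k$ pieces into $\tilde O$. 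Before concluding, one checks the admissibility conditions: $\eps \in (0,1]$, which holds in the large-game regime (e.g. whenever $\lambda k \sqrt n = O(1)$, in particular for $\lambda = O(1/n)$ and $k$ small relative to $n$), and the constraint~\eqref{eqn:paramassCE} relating $T$ to $\eps,\lambda,\delta,\beta$ — but the latter is already part of the hypothesis of Theorem~\ref{thm:accCElaplace}, so no additional verification is needed. Finally, the mediator succeeds with probability $1-\beta = 1-o(1)$, yielding the ``with probability $1-o(1)$'' clause.

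The proof is thus essentially a parameter-optimization exercise: all the substantive content — joint differential privacy of the noisy no-regret wrapper (Theorem~\ref{thm:privatecnrl}) and convergence to an approximate correlated equilibrium under Laplace perturbations (Corollary~\ref{thm:lowregretlaplace}, Theorem~\ref{thm:accCElaplace}) — is already established. The only delicate point is the bookkeeping of logarithmic factors: making sure the chosen $\eps,\delta,\beta,T$ simultaneously respect every side condition (the ranges $\eps,\delta,\beta \in (0,1]$, the noise-scale bound $b < 1/(6\log(4kT/\beta))$ needed for Corollary~\ref{thm:lowregretlaplace}, and~\eqref{eqn:paramassCE}), and verifying that after substitution the dominant term of $\eta'$ is indeed $\tilde O(n^{1/4}\sqrt{k\lambda\log(1/\lambda)})$ and not one of the lower-order $\delta$ or $\beta$ terms. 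This mirrors exactly the derivation of Theorem~\ref{thm:weak_main} for weak mediators in congestion games, with the number of actions $k$ here playing the role the number of facilities $m$ played there.
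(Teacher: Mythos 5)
Your proof takes exactly the paper's route: run $\nrl^{\swapalg}$, draw a common round index and sample to produce actions (a post-processing step the paper leaves implicit), invoke Theorems~\ref{thm:privatecnrl}, \ref{thm:accCElaplace}, and~\ref{thm:truth}, and optimize $\eps,\delta,\beta$; your choice $\delta,\beta = 1/n$ in place of the paper's $\lambda/n$ is equally valid. The one wrinkle is your account of the $\sqrt{\log(1/\lambda)}$ factor: it does not arise from any $\log(1/\eps)$ hidden in the $\tilde O(\cdot)$ of Theorem~\ref{thm:accCElaplace} (that $\tilde O$ hides logs of $nkT/\beta$) --- the paper produces it directly by choosing $\delta,\beta=\lambda/n$ so that $\log(1/\delta)=\log n + \log(1/\lambda)$ --- and with your choice that factor simply does not appear, which is harmless because $\tilde O(n^{1/4}\sqrt{k\lambda})$ is at least as strong as the claimed $\tilde O(n^{1/4}\sqrt{k\lambda\log(1/\lambda)})$.
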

\begin{proof}
We choose $\eps = \Theta\left(\sqrt{k \lambda} \left(n \log(1/\beta) \log(1/\delta) \right)^{1/4} \right)$ and $\delta,\beta = \lambda/n$.  With this choice of $\eps,\delta,\beta$ we then choose $T$ to satisfy \eqref{eqn:paramassCE} in the algorithm $\nrl^{\swapalg}$.  From the previous theorem, Theorem \ref{thm:accCElaplace}, we get that $\nrl^{\swapalg}$ computes an $\eta$-approximate correlated equilibrium with probability $1-\beta$, where
$$
\eta = \tilde O\left(  n^{1/4} \sqrt{ k\lambda \log(1/\lambda)} \right).
$$
Note that plugging in these parameter values into Theorem \ref{thm:truth} will get the theorem statement, where $U = 1$.
\end{proof}

This is a \emph{very positive}
result---in several large games of interest, e.g. anonymous matching games, $\lambda = O(n^{-1})$. Therefore, for games of this sort $\eta = \tilde O\left(\frac{ \sqrt{k \log(n)}}{n^{1/4}} \right)$.  Therefore, if $k$ is fixed, but $n$ is large, a relatively exact equilibrium of the underlying game can be implemented, while still being jointly differentially private to the desired degree.  Note that our results remain nontrivial (treating $k$ as fixed) for $\lambda = o\left( \frac{1}{\sqrt{n}\log(n) } \right)$.

\section{Discussion} \label{sec:discussion}
In this work, we have introduced a new variant of differential privacy (joint differential privacy), and have shown how it can be used as a tool to construct mediators which can implement an equilibrium of full information games, even when the game is being played in a setting of incomplete information. As a bonus, our privacy solution concept maintains the property that no coalition of players can learn (much) about any player's type, even in the worst case over whatever prior information they may have had, and thus players have almost no incentive not to participate even if they view their type as sensitive information. Our mediators have little power
in most respects (they cannot enforce actions, they cannot make payments or charge fees, they cannot compel participation). However, for our results which hold for all large games, we require a strong mediator -- i.e. we make the assumption that player types are verifiable in the event that they choose to opt into the mediator. This assumption is reasonable in many settings: for example, in financial markets, there may be legal penalties for a firm misrepresenting relevant facts about itself, or e.g. in an organ matching setting, the mediator may be able to verify blood types of individuals opting in. For our results which hold in the special case of large congestion games, we are able to relax even this assumption, and give mediators which incentivize good behavior even if agents have the ability to misrepresent their types.

At this stage, it is important to ask what more a mediator might be able to achieve at this level of generality. We conclude this paper by considering a few obvious directions.

\subsection{Exact Versus Approximate Equilibria}
A first direction concerns the fact that our mediators aim for approximate equilibria of the resulting game, rather than exact equilibria. The reason for this is that even in large games, it is not possible to coordinate exact equilibria while incentivizing ``good behavior''. For a simple example, consider a game with $2$ players, each of whom must simultaneously choose between two actions, evocatively called ``Mountain'' or ``Beach.'' Player $1$ may be of two types $M$ or $B$. A $M$ (respectively, $B$) type player gets a utility of $1$ (respectively, $0$) if he goes to the mountain, and $0$ (respectively , $1$) he goes to the beach, less a disutility of $\epsilon$ if player $2$ takes the same action. Player $2$ has no private type, and gets $0$ utility from either action, plus a bonus of $\epsilon$ if he matches player $1$'s action. It is easy to verify that the unique exact Nash equilibrium (and correlated equilibrium) of the two full-information games is for player $1$ to take his preferred action, and for player $2$ to match him. It is equally easy to verify that a weak mediator cannot achieve exact Nash equilibrium in this setting, despite the game satisfying our assumptions---player $1$ has an incentive to misreport his type, and send player $2$ to the other location.  A simple variant of this game (with $2n$ player $1$'s and a single player $2$) verifies the same point for a strong mediator at a profile where there are an equal number of both types.

\subsection{Improving Bounds} \label{sec:lb}
Even if exact equilibria are beyond reach, one may wonder whether more exact equilibria (or alternately, ones which have a lower approximation error than ours) are achievable. We provide suggestive evidence that this is not possible.

Specifically, we show that there is no algorithm that \emph{privately} computes an $\eta$-approximate equilibrium of an arbitrary $n$-player $2$-action game, for $\eta \ll 1/\sqrt{n \log n}$. In other words, there cannot exist an algorithm that privately computes a ``significantly'' more exact equilibrium without assuming further structure on the game. Thus, the construction of a more exact mediator that provides incentives will therefore have to use different techniques entirely than those in our paper.



Our proof is by a reduction to the problem of differentially private \emph{subset-sum query release}, for which strong information theoretic lower bounds are known~\cite{DN03, DY08}.  The problem is as follows:  Consider a database $\mathbf{d} = (d_1, \dots, d_n) \in \bits^n$.  A subset-sum query $q \subseteq [n]$ is defined by a subset of the $n$ database entries and asks ``What fraction of the entries in $\mathbf{d}$ are contained in $q$ and are set to $1$?''  Formally, we define the query $q$ as $q(\mathbf{d}) = \frac{1}{n} \sum_{i \in q} d_i$.  Given a set of subset-sum queries $\cQ = \set{q_1, \dots, q_{|\cQ|}}$, we say that an algorithm $\cM(\mathbf{d})$ \emph{releases $\cQ$ to accuracy $\eta$} if $\cM(\mathbf{d}) = (a_1, \dots, a_{|\cQ|})$ such that $|a_j - q_j(\mathbf{d})| \leq \eta$ for every $j \in [|\cQ|]$.


We show that an algorithm for computing approximate equilibrium in arbitrary games could also be used to release arbitrary sets of subset-sum queries accurately. The following theorem shows that a differentially private mechanism to compute an approximate equilibrium implies a differentially private algorithm to compute subset-sums.

\begin{theorem} \label{thm:lb}
For any $\eta > 0$, if there is an $(\eps, \delta)$-jointly differentially private mechanism $\cM$ that computes an $\eta$-approximate coarse correlated equilibria in $(n+|\cQ| \log n)$-player, $2$-action, $1/n$-sensitive games, then there is an $(\eps, \delta)$-differentially private mechanism $\cM'$ that releases $36\eta$-approximate answers to any $|\cQ|$ subset-sum queries on a database of size $n$.
\end{theorem}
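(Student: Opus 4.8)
The plan is to build, from a database $\mathbf{d} \in \bits^n$ and a query set $\cQ$, a game $G_{\mathbf{d},\cQ}$ with $n + |\cQ|\lceil \log n\rceil$ players, two actions per player, and largeness parameter $1/n$, whose approximate coarse correlated equilibria encode the answers $q_j(\mathbf{d})$, and then to let $\cM'$ run $\cM$ on this game and decode. Players $1,\dots,n$ are \emph{data players}: player $i$ has type $d_i$ and payoff $u_i(\ba) = \mathbf{1}[a_i = d_i]$, which is independent of the other players (hence trivially $1/n$-large) and makes $a_i = d_i$ dominant. In any $\eta$-approximate coarse correlated equilibrium $\pi$, deviating to the fixed action $d_i$ yields payoff $1$, so $\Ex{\pi}{u_i} = \Prob{\pi}{a_i = d_i} \ge 1-\eta$, i.e.\ $\Prob{\pi}{a_i \ne d_i} \le \eta$ for every $i$; consequently the random empirical subset-sum $\hat v_j := \frac1n\sum_{i\in q_j} a_i$ satisfies $|\Ex{\pi}{\hat v_j} - q_j(\mathbf{d})| \le \frac{|q_j|}{n}\eta \le \eta$. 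For each query $q_j$ we attach a \emph{readout gadget} of $\lceil \log n\rceil$ further players whose payoffs depend only on their own actions, on $\hat v_j$, and on each other, designed so that in any $\eta$-approximate coarse correlated equilibrium the gadget's (near-deterministic) action profile is the binary expansion of $\lfloor n\,\hat v_j\rfloor$ up to its lowest few bits --- hence a number within $O(\eta n)$ of $\sum_{i\in q_j} d_i$ --- while each gadget player moves every other player's payoff by at most $1/n$, so the whole game is $1/n$-large. Then $\cM'(\mathbf{d})$ runs $\cM$ on $G_{\mathbf{d},\cQ}$, reads the induced marginal distributions of the gadget players for query $j$, decodes them into an integer $\hat s_j$, and outputs $a_j := \hat s_j/n$.

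Privacy of $\cM'$ is immediate from joint differential privacy of $\cM$. Neighboring databases $\mathbf{d} \sim \mathbf{d}'$ differ in a single coordinate $i^\ast$, so $G_{\mathbf{d},\cQ}$ and $G_{\mathbf{d}',\cQ}$ differ only in the type of data player $i^\ast$; every gadget player has index exceeding $n \ge i^\ast$, hence lies in the ``$-i^\ast$'' block. Since $\cM$ is $(\eps,\delta)$-jointly differentially private, the joint distribution of its outputs to all players other than $i^\ast$ --- in particular to all gadget players --- changes by at most $(\eps,\delta)$ between the two games, and as $\cM'$ is a (post-processed) function of exactly those outputs, $\cM'$ is $(\eps,\delta)$-differentially private. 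This is precisely why one cannot simply read $q_j(\mathbf{d})$ off the data players' own marginals (which are \emph{not} protected by joint privacy): the gadget is what ``launders'' the dependence of $q_j(\mathbf{d})$ on $d_{i^\ast}$ into the protected part of the output, and the subset-sum lower bounds of \cite{DN03,DY08} will then say this laundering cannot be too accurate.

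For accuracy, condition on the probability-$(1-\beta)$ event that $\cM$ outputs an $\eta$-approximate coarse correlated equilibrium $\pi$ and combine three error sources: the $\eta$ slack in the data players' best responses (contributing $\le\eta$ to $|\hat v_j - q_j(\mathbf{d})|$), the $\eta$ slack each gadget player enjoys near its decision threshold, and the constant-factor loss from rescaling the gadget payoffs into $[0,1]$ (the same $\scaledLosses = (\Losses+1)/3$ device used earlier). Tracking these through the decoding shows $|a_j - q_j(\mathbf{d})| \le 36\eta$ for all $j$ simultaneously --- uniform over queries because the data-player guarantee is uniform and the decoding of each gadget is deterministic given $\pi$.

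The main obstacle is the construction and analysis of the readout gadget. A single two-action player whose payoff is $1/n$-insensitive to each of the $n$ data players can depend on $\hat v_j$ only in a $1$-Lipschitz way, so a lone threshold player resolves $\hat v_j$ to accuracy only $\Theta(\eta)$ and yields just one bit; obtaining the $\Theta(\log n)$ bits needed for the downstream lower bound forces the gadget players to refine one another's thresholds (an ``in-game binary search''), and this must be arranged so that no gadget player perturbs any other player's payoff by more than $1/n$. Reconciling ``sharp enough to decode'' with ``flat enough to stay $1/n$-large'', and then verifying that the required approximate best responses survive the passage from exact Nash equilibrium to $\eta$-approximate \emph{coarse} correlated equilibrium (where only fixed unilateral deviations are penalized and the gadget players' actions are correlated with everything else), is the technical heart; the rest is the bookkeeping above plus a black-box invocation of joint differential privacy.
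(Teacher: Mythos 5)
Your high-level architecture matches the paper's --- $n$ data players forced to play their database bits, $|\cQ|\lceil\log n\rceil$ readout players, decode from the readout players' marginals and invoke joint privacy via the billboard lemma --- and your privacy reduction and your bound $\Prob{\pi}{a_i \neq d_i} \leq \eta$ for data players are exactly the paper's Claim~\ref{clm:accuratedataplayers}. But you explicitly leave the readout gadget unconstructed, and the direction you point in (gadget players ``refining one another's thresholds,'' an ``in-game binary search'') is not what the paper does and in fact creates the very CCE-correlation difficulties you flag at the end.

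The paper's query players never depend on each other. For each query $j$ and bit index $h$, player $(j,h)$'s payoff is a function of her own action and of $q_j(a_1,\dots,a_n)$ \emph{only}, via the piecewise-linear ``zigzag'' functions
$$
f_h(x) = 1 - \min_{r}\left| x - \bigl(2^{-(h+1)} + r\,2^{-(h-1)}\bigr) \right|, \qquad g_h(x) = 1 - \min_{r}\left| x - \bigl(2^{-h} + 2^{-(h+1)} + r\,2^{-(h-1)}\bigr) \right|,
$$
with $u_{(j,h)}(0,\ba_{-(j,h)}) = f_h(q_j(a_1,\dots,a_n))$ and $u_{(j,h)}(1,\ba_{-(j,h)}) = g_h(q_j(a_1,\dots,a_n))$. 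Each $f_h,g_h$ is $1$-Lipschitz and already $[0,1]$-valued --- the $\scaledLosses = (\Losses+1)/3$ rescaling you mention plays no role here --- and has period $2^{-(h-1)}$, so player $(j,h)$ resolves, at scale $2^{-h}$, which ``phase'' $q_j(\mathbf{d})$ falls in, entirely independently of what any other gadget player does. Because the utility is a fixed $1$-Lipschitz function of a $1/n$-sensitive aggregate of the data players' actions (and of nothing else), the game is $1/n$-large, and the CCE analysis goes through without ever having to condition one gadget player on another's correlated action: the paper's Claims~\ref{clm:accuratequeries}--\ref{clm:queryplayers} show that if $q_j(\mathbf{d})$ lies in the region $F_{h,9\eta}$ (resp.\ $G_{h,9\eta}$) then any $\eta$-approximate CCE has player $(j,h)$ playing $0$ (resp.\ $1$) with probability at least $2/3$, while otherwise $q_j(\mathbf{d})$ is confined to an interval of width $18\eta$. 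The decoder iterates $h = 1, 2, \dots$ while $2^{-h} \geq 18\eta$ and stops with an interval of width at most $36\eta$. None of this requires the gadget's action profile to equal the binary expansion of $\lfloor n\hat v_j\rfloor$. So: the skeleton and privacy argument are right, but the one piece you leave open is the wrong kind of gadget --- the fix is to make each readout player respond to a differently-scaled periodic function of $q_j$ alone, extracting the bits in parallel rather than sequentially.
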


Applying the results of \citeasnoun{DY08}, a lower bound on equilibrium computation follows easily.
\begin{corollary} \label{cor:lb}
Any $(\eps = O(1), \delta = o(1))$-differentially private mechanism $\cM$ that computes an $\eta$-approximate coarse correlated equilibria in $n$-player $2$-action games with $O(1/n)$-sensitive utility functions must satisfy $\eta = \Omega(1/\sqrt{n\log n})$.
\end{corollary}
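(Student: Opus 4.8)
The plan is to derive Corollary~\ref{cor:lb} directly from the reduction in Theorem~\ref{thm:lb} together with the known information‑theoretic lower bounds for differentially private subset‑sum query release (\cite{DN03,DY08}). One preliminary point: ordinary $(\eps,\delta)$-differential privacy implies $(\eps,\delta)$-joint differential privacy --- the output to the other $n-1$ players is a post‑processing of the full output --- so a mechanism of the type assumed in the corollary also meets the hypothesis of Theorem~\ref{thm:lb}. So let $\cM$ be an $(\eps = O(1),\delta = o(1))$-differentially private mechanism computing an $\eta$-approximate coarse correlated equilibrium in every $N$-player, $2$-action, $O(1/N)$-sensitive game; I want to lower bound $\eta$.

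The one real decision is how to split the $N$ players of Theorem~\ref{thm:lb} between ``database'' players and ``query'' players. The reconstruction attacks of \cite{DN03,DY08} need $\Theta(n')$ random subset‑sum queries on a size-$n'$ database to force normalized error $\Omega(1/\sqrt{n'})$ on any $(\eps = O(1),\delta=o(1))$-DP release mechanism; so I would take $|\cQ| = \Theta(n')$, in which case the player budget $n' + |\cQ|\log n' = \Theta(n' \log n')$ of Theorem~\ref{thm:lb} must equal $N$, forcing $n' = \tilde\Theta(N)$ (of order $N/\log N$), with the reduced game still having $\Theta(N)$ players and the required largeness and action‑set bounds. Feeding $\cM$ into Theorem~\ref{thm:lb} with these parameters yields an $(\eps,\delta)$-differentially private mechanism $\cM'$ answering $|\cQ|$ subset‑sum queries on a size-$n'$ database with error at most $36\eta$.

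Finally, I would invoke the subset‑sum lower bound: $\cM'$ must have normalized error $\Omega(1/\sqrt{n'})$, since otherwise the reconstruction attack would recover a $(1-o(1))$ fraction of the database bits from $\cM'$'s output, while post‑processing an $(\eps=O(1),\delta=o(1))$-DP output cannot produce a guess at Hamming distance $o(n')$ from a uniformly random database (flipping one bit barely changes the output distribution). Hence $36\eta = \Omega(1/\sqrt{n'}) = \tilde\Omega(1/\sqrt{N})$, which, tracking the log factors, gives the claimed $\eta = \Omega(1/\sqrt{N\log N})$ with $N$ playing the role of the corollary's $n$. I do not anticipate a genuine obstacle: Theorem~\ref{thm:lb} does the work, and the only thing needing care is the parameter balancing --- enough queries ($|\cQ| = \Omega(n')$) for reconstruction to bite, yet a reduced game still in the $\Theta(N)$-player, $O(1/N)$-sensitive class that $\cM$ handles --- and it is precisely the $|\cQ|\log n'$ overhead that introduces the $\log$ factor, leaving an (essentially tight, up to $\mathrm{polylog}$) gap against the $\tilde O(1/\sqrt n)$ upper bound implicit in Section~\ref{sec:strongmediatorlaplace}.
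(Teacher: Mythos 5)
Your approach --- reduce via Theorem~\ref{thm:lb}, invoke the reconstruction/lower-bound results of \citeasnoun{DY08}, then balance parameters --- is the approach the paper is pointing at with its one-line ``Applying the results of \citeasnoun{DY08} ... follows easily.'' Your observation that standard differential privacy implies joint differential privacy (by post-processing), so that $\cM$ satisfies the hypothesis of Theorem~\ref{thm:lb}, is also correct and worth making explicit, since the corollary's statement elides the ``joint.''

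There is, however, a concrete gap in the parameter bookkeeping, at exactly the point you wave past with ``the reduced game still having $\Theta(N)$ players and the required largeness and action-set bounds.'' With your choice $|\cQ| = \Theta(n')$, Theorem~\ref{thm:lb} produces a game with $N = n' + |\cQ|\log n' = \Theta(n'\log n')$ players whose utilities are $1/n'$-sensitive; since $n' = \Theta(N/\log N)$, the sensitivity is $\Theta(\log N / N)$, which is \emph{not} $O(1/N)$. So the hard instance you construct lies \emph{outside} the class of games the corollary quantifies over, and a mechanism that only promises accuracy on $O(1/N)$-sensitive games need not do anything sensible on it. Your proof, as written, establishes the stronger bound $\eta = \Omega(\sqrt{\log N / N})$ for a strictly more permissive (larger-sensitivity) class of games; that statement neither implies nor is implied by the corollary. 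Staying within the $O(1/N)$-sensitive class forces $|\cQ| = O(n'/\log n')$, at which point the \citeasnoun{DY08} linear-reconstruction attack --- which needs $\Theta(n')$ queries so that the noisy linear system is overdetermined --- does not directly apply. The cleanest fix is to redo the Theorem~\ref{thm:lb} construction with subset-sum queries normalized by the total player count $N$ rather than the database size $n'$: then the query players' utilities are genuinely $1/N$-sensitive, the answers live in $[0, n'/N]$, and tracking the rescaling through the argument (the $\Omega(1/\sqrt{n'})$ lower bound on the original scale becomes $\Omega(\sqrt{n'}/N)$ on the rescaled one) gives exactly $\eta = \Omega(\sqrt{n'}/N) = \Omega(1/\sqrt{N\log N})$. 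But that is a modification of the reduction, not an invocation of it as a black box, and your writeup asserts rather than supplies it.
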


\subsection{Welfare}
The fact that a mediator achieves \emph{some} equilibrium of the underlying full-information game may be viewed as unsatisfactory. Can a mediator do more, e.g. achieve high (utilitarian) welfare equilibria of the underlying game? The answer is mixed. For general games, the answer is negative. In a recent paper, \citeasnoun{barman} show that it is computationally hard to compute an approximate correlated equilibrium of a game while offering any non-trivial welfare guarantee. In other words, any tractable algorithm which computes a correlated equilibrium of a general game cannot offer non-trival welfare guarantees (unless P = NP).

Within specific classes of games, however, our mediators achieve high welfare. For example, \citeasnoun{Roughgarden09} shows that for the class of \emph{smooth games}, any approximate correlated equilibrium achieves welfare that is a significant fraction of the optimal welfare of any (possibly non-equilibrium) action profile. For such games, of course, our mediators also achieve this welfare guarantee.

\subsection{Strong vs. Weak Mediators }
Next, one may wonder whether there need be any distinction between ``strong'' and ``weak'' mediators, which differ in their ability to verify reported player types. Mechanically, we derive weak mediators by giving algorithms which privately compute Nash equilibria, and strong mediators by giving algorithms which privately compute correlated equilibria. Is the difference merely a failure in our analysis, i.e., might an algorithm that privately computes a correlated equilibrium actually yield a weak mediator?

The answer is no. The intuitive reason is that because joint differential privacy does not constrain the relationship between an agent's report and the suggested action he is given. If the mediator is merely guaranteed to be computing a \emph{correlated equilibrium}, it can be that by misreporting his type, an agent can obtain a suggestion that is more informative about his opponent's actions, which may allow for profitable ``double deviations'' (misreport type and then not follow suggested action). On the other hand, if the mediator is computing a Nash equilibrium, an agent's reported type contains no information about his opponent's randomness, which avoids this problem and yields a weak mediator.

Consider the following simple example of a two player game with two actions -- Beach ($B$) and Mountain ($M$). Player 1 has a fixed type and his utility function is a constant function, independent of the action that either he or his opponent plays. Player 2, in contrast, may be of one of two types: Social ($S$) or antisocial ($A$). Independently of type, player $2$ gets a payoff of $\frac12$ by playing $B$ over $M$. However, if player $2$ is of the social type, then he obtains additional payoff of $1$ if and only if he chooses the same action that his opponent plays. Conversely, if player $2$ is of the antisocial type, then he obtains additional payoff of $1$ if and only if he chooses the opposite action of his opponent.

Now consider the following mediator: independently of player $2$'s report, the mediator suggests to player $1$ that he randomize uniformly between $B$ and $M$. If player $2$ reports type $S$, then the mediator suggests to player $2$ that he play action $B$. If player $2$ reports type $A$, then the mediator always suggests to player $2$ that he play the opposite action of player $1$. Note several things about this mediator: first, it satisfies perfect joint differential privacy, because player $1$'s suggestion is completely independent of player $2$'s report. Second, it always suggests an (exact) correlated equilibrium of the game induced by the reported types. Finally, if player $2$ is of type $S$, he has substantial incentive to deviate from ``good behavior'': he can misreport his type to be $A$, and then always perform the action that the mediator does \emph{not} suggest to him. This results in an expected utility of $5/4$, compared to an expected utility of $1$ if he behaves truthfully. Making the game large does not change the result: a game with a large, odd number of player $1$'s, and a single player $2$ who wants to match/ mismatch the majority, would result in the same result.

The above example demonstrates that it is \emph{not} sufficient for a weak mediator to privately compute a correlated equilibrium, although this is sufficient to implement a strong mediator. We have shown that strong mediators exist in all large games, and that weak mediators exist in all congestion games (and therefore, implicitly, in all potential games that can be concisely represented as potential games). It remains open whether weak mediators exist in all large games.

\bibliographystyle{agsm}
\bibliography{refs}

\newpage
\appendix

\setstretch{1}
\section{Privacy Preliminaries}
We will state some useful tools for achieving differential privacy.  We first cover the Laplace Mechanism \cite{DMNS06}.  
\begin{definition}[Laplace Random Variable]
A continuous random variable $Z$ has a Laplacian distribution with parameter $b>0$ if its probability density $p(z)$ for $Z = z$ is the following:
$$p(z) = \frac{1}{2b} \exp\left(-|z|/b \right).$$
\end{definition}

We next define the most any single player can influence the output of a function $f$ that takes a database and outputs some vector of real values.  
\begin{definition}[Sensitivity]
The \emph{sensitivity} of a function $f: D^n \to \R^k$ is defined as 
$$
S(f)= \max_{\substack{i \in [n], d_i\neq d_i' \in D \\ d_{-i} \in D^{n-1} }} \left\{ || f(d_i,d_{-i}) - f(d_i',d_{-i}) ||_1\right\}
$$

\end{definition}
If we want to evaluate $f$ on some database and $f$ has large sensitivity, then to make the output of $f$ look the same after a person changes her data, we would need to add a lot of noise to $f$.  We are now ready to present the Laplacian Mechanism $M_L$, given in Algorithm \ref{alg:laplace}.

\begin{algorithm}
\caption{Laplacian Mechanism}\label{alg:laplace}
\begin{algorithmic}[0]
\INPUT : Database $d\in D^n$, query $f: D^n \to \R^k$, and privacy parameter $\epsilon$.
\Procedure {$M_L$} {$d,f, \epsilon$}
\State $\{Z_i\} \stackrel{i.i.d.}{\sim} \text{Laplace}\left(\frac{S(f)}{\epsilon}\right) \quad i = 1, \cdots, k$.
\State $Z \gets (Z_1, \cdots, Z_k)$
\State $\hat f(d) \gets f(d) + Z$
\\
\Return $\hat f(d) $.  
\EndProcedure
\end{algorithmic}
\end{algorithm}

\begin{theorem}
The Laplacian Mechanism $M_L$ is $\epsilon$-differentially private.
\label{thm:laplaceprivacy}
\end{theorem}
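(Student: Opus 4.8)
The plan is to verify Definition~\ref{def:standardprivacy} directly with $\delta = 0$, by bounding the ratio of output densities on any pair of neighboring databases. First I would fix databases $d = (d_i, d_{-i})$ and $d' = (d_i', d_{-i})$ differing in a single coordinate, write $b = S(f)/\epsilon$ for the Laplace scale used inside $M_L$, and observe that since $M_L(d) = f(d) + Z$ with $Z = (Z_1, \dots, Z_k)$ having \emph{independent} $\Lap(b)$ coordinates, the output $M_L(d)$ has the product density
\[
p_d(z) = \prod_{i=1}^{k} \frac{1}{2b}\exp\!\left(-\frac{|z_i - f(d)_i|}{b}\right), \qquad z \in \R^k,
\]
and similarly for $p_{d'}$ with $f(d)$ replaced by $f(d')$.

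The main step is then a pointwise bound on the ratio: for every $z \in \R^k$,
\[
\frac{p_d(z)}{p_{d'}(z)} = \exp\!\left(\frac{1}{b}\sum_{i=1}^{k}\bigl(|z_i - f(d')_i| - |z_i - f(d)_i|\bigr)\right) \le \exp\!\left(\frac{1}{b}\sum_{i=1}^{k}|f(d)_i - f(d')_i|\right) = \exp\!\left(\frac{\|f(d) - f(d')\|_1}{b}\right),
\]
where the inequality is the triangle inequality applied coordinatewise, $|z_i - f(d')_i| \le |z_i - f(d)_i| + |f(d)_i - f(d')_i|$. By the definition of sensitivity, $\|f(d) - f(d')\|_1 \le S(f)$, so the ratio is at most $\exp(S(f)/b) = e^{\epsilon}$. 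Integrating the resulting inequality $p_d(z) \le e^{\epsilon} p_{d'}(z)$ over an arbitrary measurable event $B \subseteq \R^k$ yields $\prob{M_L(d) \in B} \le e^{\epsilon}\,\prob{M_L(d') \in B}$; since $d, d'$ were an arbitrary neighboring pair and the argument is symmetric in their roles, this is exactly $(\epsilon,0)$-differential privacy.

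I do not anticipate a genuine obstacle: the only points needing care are (i) factoring the noise density as a product, which is valid precisely because the $Z_i$ are independent, and (ii) the triangle-inequality step that converts a bound in terms of the realized output $z$ into one in terms of the $\ell_1$ distance between $f(d)$ and $f(d')$, which is in turn controlled by $S(f)$ thanks to the choice $b = S(f)/\epsilon$. An equivalent packaging would be to first record the one-dimensional fact that two $\Lap(b)$ densities shifted by $\Delta$ have ratio at most $e^{|\Delta|/b}$ everywhere, and then multiply this bound across the $k$ independent coordinates; either route is routine.
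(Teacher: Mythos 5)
Your proof is correct and is the canonical argument for the Laplace mechanism: factor the noise density as a product over coordinates, bound the pointwise likelihood ratio via the triangle inequality, reduce to the $\ell_1$ sensitivity $S(f)$ (which is cancelled by the choice $b = S(f)/\epsilon$), and integrate over an arbitrary measurable event. The paper itself does not reproduce a proof of Theorem~\ref{thm:laplaceprivacy} --- it is stated as a cited result from \cite{DMNS06} --- so there is no alternative argument in the paper to compare against; your write-up matches the standard one.
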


The following concentration inequality for Laplacian random variables will be useful.
\begin{theorem}[\cite{GRU12}]
  \label{thm:conc}
  Suppose $\{Y_i\}_{i = 1}^T$ are i.i.d.\ $\Lap(b)$ random variables,
  and scalars $q_i \in [0,1]$. Define $Y :=  \frac{1}{T} \sum_i q_iY_i$. Then for any $\alpha \leq b$,
 \begin{gather*}
  \Pr[ Y \geq \acc ] \leq
      \exp\left( - \frac{\acc^2T}{6b^2} 
      \right).
  \end{gather*}
\end{theorem}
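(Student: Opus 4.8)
The statement is a Chernoff-type (Cram\'er) tail bound, and the plan is to apply the exponential-moment method directly, exploiting the closed form of the moment generating function of a Laplace random variable together with independence of the $Y_i$. No result from earlier in the paper is needed; this is a self-contained calculation.

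First I would apply the exponential Markov inequality: for any $s > 0$, writing $Y = \frac{1}{T}\sum_{i=1}^{T} q_i Y_i$ and using independence of the $Y_i$,
\begin{align*}
\Pr[Y \geq \acc] \;=\; \Pr\!\left[\textstyle\sum_{i=1}^{T} q_i Y_i \geq \acc T\right] \;\leq\; e^{-s\acc T}\, \E\!\left[e^{s\sum_i q_i Y_i}\right] \;=\; e^{-s\acc T}\prod_{i=1}^{T}\E\!\left[e^{s q_i Y_i}\right].
\end{align*}
Next I would bound each factor. For $Y_i \sim \Lap(b)$ we have the standard identity $\E[e^{s q_i Y_i}] = (1 - s^2 q_i^2 b^2)^{-1}$, valid whenever $|s q_i b| < 1$; since $q_i \in [0,1]$ this holds for every $i$ as soon as $|s| < 1/b$. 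Using $q_i \leq 1$ together with the elementary inequality $(1-x)^{-1} \leq e^{2x}$ (valid for $0 \leq x \leq \tfrac12$), each factor is at most $(1 - s^2 b^2)^{-1} \leq e^{2 s^2 b^2}$ provided $s^2 b^2 \leq \tfrac12$. Substituting back gives $\Pr[Y \geq \acc] \leq \exp\!\big(T(2 s^2 b^2 - s\acc)\big)$.

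Finally I would optimize over $s$. The quadratic $2 s^2 b^2 - s\acc$ is minimized at $s^\star = \acc/(4 b^2)$, and here is where the hypothesis $\acc \leq b$ enters: it guarantees $s^\star = \acc/(4b^2) \leq 1/(4b)$, so $s^\star$ lies safely inside the admissible range in which the Laplace MGF is finite and $(1-s^2b^2)^{-1} \leq e^{2s^2b^2}$ holds, making the bound of the previous paragraph legitimate at $s = s^\star$. Plugging in $s^\star$ yields exponent $-\acc^2 T/(8 b^2) \leq -\acc^2 T/(6 b^2)$, which establishes the claimed inequality; the stated constant $6$ leaves slack, and indeed replacing $(1-x)^{-1}\le e^{2x}$ by a slightly coarser estimate reproduces the constant $6$ directly. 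There is no substantive obstacle in this argument — the only point requiring care is keeping $s$ in the region where the Laplace moment generating function is finite and the exponential upper bound is valid, which is exactly what the assumption $\acc \leq b$ secures.
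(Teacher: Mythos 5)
Your overall strategy is the right one (the paper itself gives no proof of this theorem, citing \citeasnoun{GRU12}; the cited proof is exactly this Chernoff/moment-generating-function argument), and everything up to the optimization is correct: the MGF identity, the reduction to $q_i\le 1$, the bound $(1-x)^{-1}\le e^{2x}$ on $[0,\tfrac12]$, and the admissibility of $s^\star$ under $\alpha\le b$ are all fine. The problem is the last step. Minimizing $2s^2b^2-s\alpha$ at $s^\star=\alpha/(4b^2)$ gives exponent $-\alpha^2T/(8b^2)$, and your claim that $-\alpha^2T/(8b^2)\le-\alpha^2T/(6b^2)$ is backwards: since $\tfrac{1}{8}<\tfrac{1}{6}$ we have $-\alpha^2T/(8b^2)\ge-\alpha^2T/(6b^2)$, so $\exp(-\alpha^2T/(8b^2))$ is a \emph{weaker} tail bound than the one stated in the theorem, and your argument as written does not establish the claim. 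The closing remark compounds the error: replacing $(1-x)^{-1}\le e^{2x}$ by a \emph{coarser} estimate would only worsen the constant; what you need is a \emph{sharper} one.

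The fix is small. Use $(1-x)^{-1}\le e^{3x/2}$, which holds for $0\le x\le\tfrac13$ (the function $\tfrac32 x+\ln(1-x)$ vanishes at $0$ and is nondecreasing on $[0,\tfrac13]$). Then each factor is at most $e^{\frac32 s^2b^2}$, the exponent to minimize is $T\bigl(\tfrac32 s^2b^2-s\alpha\bigr)$, the optimizer is $s^\star=\alpha/(3b^2)$, and the hypothesis $\alpha\le b$ gives $s^\star b\le\tfrac13<1$ and $s^{\star2}b^2\le\tfrac19\le\tfrac13$, so both the finiteness of the MGF and the inequality are legitimate at $s^\star$. Plugging in yields exactly $\Pr[Y\ge\alpha]\le\exp\bigl(-\alpha^2T/(6b^2)\bigr)$, as required.
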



 An important result we will use is that differentially private mechanisms `compose' nicely.
 \vfill \pagebreak



\begin{theorem}[Adaptive Composition~\cite{DRV10}] \label{thm:advcomp}
Let $\cA\from \cU \to \cR^{T}$ be a $T$-fold adaptive composition\footnote{See \cite{DRV10} for further discussion} of $(\eps, \delta)$-differentially private mechanisms.  Then $\cA$ satisfies $(\eps', T\delta + \delta')$-differential privacy for
\begin{equation*}
\eps' = \eps \sqrt{2T \ln(1/\delta')} + T\eps(e^{\eps}-1).
\end{equation*}
In particular, for any $\eps \leq 1$, if $\cA$ is a $T$-fold adaptive composition of $\left(\eps/\sqrt{8T\ln(1/\delta)}, 0\right)$-differentially privacy mechanisms, then $\cA$ satisfies $(\eps, \delta)$-differential privacy.
\end{theorem}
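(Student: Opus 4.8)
This is the advanced composition theorem of Dwork, Rothblum and Vadhan, so the plan is to reproduce their argument via the \emph{privacy-loss random variable}. Fix two neighboring inputs $D, D' \in \cU$, and for a transcript $o = (o_1, \dots, o_T) \in \cR^T$ write $o_{<t} = (o_1, \dots, o_{t-1})$; let $M_t$ (possibly depending on $o_{<t}$) be the $t$-th mechanism in the composition, and write $\eps_0, \delta_0$ for its privacy parameters, so the first assertion of the theorem is the case $\eps_0 = \eps$, $\delta_0 = \delta$, and the ``in particular'' clause is $\eps_0 = \eps/\sqrt{8T\ln(1/\delta)}$, $\delta_0 = 0$. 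Because $\cA$ is an \emph{adaptive} composition, the log density ratio of the full transcript telescopes into a sum: the total privacy loss is $L(o) = \sum_{t=1}^{T} L_t$ with $L_t = \ln\frac{\Pr[M_t(D, o_{<t}) = o_t]}{\Pr[M_t(D', o_{<t}) = o_t]}$.

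First I would reduce to pure differential privacy: an $(\eps_0,\delta_0)$-DP mechanism is $\delta_0$-close in total variation (on each of $D$ and $D'$) to one that is $(\eps_0, 0)$-DP, so replacing each $M_t$ by its pure-DP surrogate moves $\cA$ to within total variation $T\delta_0$ and contributes the additive $T\delta$ to the final privacy parameter; it also lets me assume $|L_t| \le \eps_0$ pointwise. For pure-DP $M_t$ I also use the standard inequality $\E_{o_t \sim M_t(D, o_{<t})}[\,L_t \mid o_{<t}\,] = \mathrm{KL}\!\left(M_t(D,o_{<t}) \,\|\, M_t(D',o_{<t})\right) \le \eps_0(e^{\eps_0}-1)$, which supplies the linear term of $\eps'$. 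Then, sampling $o \sim \cA(D)$, the centered quantities $L_t - \E[L_t \mid o_{<t}]$ form a martingale difference sequence with increments of magnitude $O(\eps_0)$, so a Hoeffding/Azuma-type concentration bound for martingales gives, with $\eps' := T\eps_0(e^{\eps_0}-1) + \eps_0\sqrt{2T\ln(1/\delta')}$,
\[
\Pr_{o \sim \cA(D)}\!\left[\, L(o) > \eps' \,\right] \le \delta'.
\]

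To finish I would convert this tail bound into differential privacy in the routine way: for any event $B$, splitting $\Pr[\cA(D)\in B]$ according to whether $L(o) \le \eps'$ bounds the first part by $e^{\eps'}\Pr[\cA(D')\in B]$ and the second by $\delta'$, which together with the $T\delta$ slack from the reduction yields $(\eps', T\delta + \delta')$-DP. For the ``in particular'' clause, plug in $\eps_0 = \eps/\sqrt{8T\ln(1/\delta)}$, $\delta' = \delta$, $\delta_0 = 0$: the square-root term of $\eps'$ is exactly $\eps/2$, and (for $\eps \le 1$ and $\delta$ bounded away from $1$) the linear term is at most $2T\eps_0^2 = \eps^2/(4\ln(1/\delta)) \le \eps/2$, so $\eps' \le \eps$ and $\cA$ is $(\eps,\delta)$-DP.

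The main technical content is in making the constants line up: one needs the expected per-step privacy loss bounded by $\eps_0(e^{\eps_0}-1)$ rather than merely $\eps_0$ (so the linear term is $T\eps(e^\eps-1) = O(T\eps^2)$ and not $T\eps$), a martingale concentration bound tight enough to produce the factor $\sqrt{2T\ln(1/\delta')}$, and the $(\eps,\delta)$-to-$(\eps,0)$ decomposition that accounts for the extra $T\delta$; the telescoping of the privacy loss, the tail-to-DP conversion, and the final arithmetic verification of the ``in particular'' clause are all routine.
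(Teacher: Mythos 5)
The paper does not prove this theorem: it is imported verbatim from \cite{DRV10} as a tool, so there is no in-paper argument to compare against. Your blind proposal is the standard DRV10 argument itself---privacy-loss random variable, telescoping via the chain rule, reduction from $(\eps_0,\delta_0)$-DP to $(\eps_0,0)$-DP at a $T\delta_0$ total-variation cost, the KL bound $\E[L_t\mid o_{<t}]\le\eps_0(e^{\eps_0}-1)$, and Azuma--Hoeffding on the centered privacy-loss martingale---so the approach matches the cited source. Two small points worth tightening if you wrote this out in full: the ``replace by a pure-DP surrogate within TV $\delta_0$'' step is really a per-pair-of-neighbors coupling lemma (for fixed $D,D'$ one constructs $(\eps_0,0)$-indistinguishable $X,X'$ with $\mathrm{TV}(X,M(D))\le\delta_0$ and $\mathrm{TV}(X',M(D'))\le\delta_0$), not a statement about a single modified mechanism; and to get the constant $\sqrt{2T\ln(1/\delta')}$ with coefficient exactly $\eps_0$, one should apply the Hoeffding/Azuma bound using the \emph{range} $L_t\in[-\eps_0,\eps_0]$ (variance proxy $\eps_0^2$ per step), rather than the magnitude of the centered increment $L_t-\E[L_t\mid o_{<t}]$, which can be as large as $\eps_0 e^{\eps_0}$ and would inflate the constant. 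With those clarifications the argument is correct, and the arithmetic for the ``in particular'' clause checks out under $\eps\le 1$ (and $\delta$ not too close to $1$, so that $\eps\le 2\ln(1/\delta)$).
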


\section{Binary Mechanism}

\begin{algorithm}
\caption{Binary Mechanism \cite{Binary}}\label{BinMech}
\begin{algorithmic}[0]
\INPUT : $\omega^{\leq T} \in \{-1,0,1 \}^T$, privacy parameter $\epsilon> 0$
\OUTPUT : Count with Laplacian noise $\hat{y}^T$
\Procedure {$\bm$} {$\omega^{\leq T}, \epsilon'$}
\State {\bf initialize:} $\rr$ is a table of zeros of size $\lfloor \log(T) \rfloor+ 1$ by $T$.
\For{$t = 1, \cdots, T$}
\State Write $t = \sum_{j = 0}^{\lfloor \log(t)\rfloor}b_j(t) 2^j$
\State $ i \gets \min\{j:b_j(t) \neq 0 \} $
\For{$j= 0, \cdots,i $}
	\State $\rr(j,t/2^j) \gets \sum_{\ell = t/2^j}^t \omega^\ell $
	\State $\hat\rr(j,t/2^j) \gets \rr(j,t/2^j) +$ Lap$(1/\epsilon)$
\EndFor
$\hat{y}^t \gets \sum_{j = 0}^{\lfloor \log(t) \rfloor} b_j(t) \hat\rr(j,t/2^j)$
\EndFor
 \State \textbf{return} $ \hat y^{\leq T}$
\EndProcedure
\end{algorithmic}
\end{algorithm}

Note that we have modified the original presentation of the binary mechanism from \cite{Binary} but it has the exact same behavior; we are not concerned with space so we keep a table of entries rather than a vector of entries that continually get overwritten as zero to compute the counts $\hat y^t$ for each $t \in [T]$ in $\bm$.  As presented here, $\bm(\omega^{\leq T},\epsilon)$ would be $(\log(T)+1) \epsilon$- differentially private.  The output of $\bm$ is a vector of counts $\hat y^{\leq T}$, whereas $\abm$ produces a single count for a particular $t \in [T]$.    If we start with $\hat\rr^1$ as a table of zeros, and we define for $t = 2, \cdots, T$
$$
\hat\rr^{t} = \pst(\omega^{\leq t},\hat\rr^{t-1},\epsilon) \qquad \hat y^{t} = \abm(\hat\rr^{t},t)
$$
then given $\omega^{\leq T}$ the vector $\hat y^{\leq T}$ has the same distribution over outcomes as $\bm(\omega^{\leq T},\epsilon)$.  

\section{Omitted Proofs from Weak Mediators Section}
\subsection{Designing a Weak Mediator}
\begin{proof}[Proof of Lemma \ref{lem:Tnoise}]
At round $t\in [T]$, player $i$ is making an $\alpha$-noisy best response to get the updated action profile $\ba^{t+1}$, which means that her actual cost may only be decreasing by as little as $\alpha - 2\mathbf{\Delta}$.
\begin{align*}
c(\tau_i, \ba^{t+1})  & \geq  \hat c(\tau_i,\ba^{t+1}) - \mathbf{\Delta} \geq \hat c(\tau_i,\ba^t) +\alpha - \mathbf{\Delta}  \\
& \geq  c(\tau_i,\ba^t)+\alpha - 2\mathbf{\Delta}
\end{align*}
We use the potential function $\Phi(\mathbf{y(\ba)}) = \sum_{e\in E} \sum_{i = 1}^{y_e(\ba)} \ell_e(i)$ to then bound the number of possible $\alpha$-noisy best responses made by all the players.  Note that the difference in potential when a player changes an action is the same as the cost difference that same player that moved experiences, i.e. $\Phi(\mathbf{y(\ba^{t+1})}) - \Phi(\mathbf{y(\ba^{t})}) = c(\tau_i, \ba^{t+1}) - c(\tau_i,\ba^t)$.
\begin{equation*}
T \leq \frac{mn}{a-2\mathbf{\Delta}}   < 2 \frac{mn}{\alpha}. \qedhere
\end{equation*}
\end{proof}

\begin{proof}[Proof of Claim \ref{claim:one}]
When a player makes an $\alpha$- best response at time $t$ with respect to the noisy counts on each facility, then we can bound the amount he could improve by
\begin{align*}
Improve(\tau_i, \ba^t) & =  c(\tau_i,\ba^t) -\min_{a_i'\in R_i}\{ c(\tau_i, (a_i', a_{-i}^t)) \} = c(\tau_i,\ba^t) -\{ c(\tau_i, (a_i^*, a_{-i}^t)), \\
& \leq \hat c(\tau_i,\ba^t)+ \mathbf{\Delta} -( \hat c(\tau_i, (a_i^*,a_{-i}^t)) - \mathbf{\Delta}), \\
& \leq 0 + 2\mathbf{\Delta}. \qedhere
\end{align*}
\end{proof}

\begin{proof}[Proof of Lemma \ref{lem:noisygap}]
Let $t$ and $t+t'$ be two times that player $i$ of type $\tau_i$ made $\alpha$-noisy best responses.  We have
$$
\alpha \leq Improve(\tau_i, \ba^{t+t'}) \leq 2\mathbf{\Delta} + (t') m\sigma_\ell.
$$
We set $t' = \frac{\alpha -2 \mathbf{\Delta} }{m\sigma_\ell}$.  Further, we know from \eqref{eq:Tnoise} that under the low error assumption, there can be as many as $\frac{2mn}{\alpha}$ total $\alpha$-noisy best responses.  With our bound on the noisy gap we obtain a bound on the number of $\alpha$-noisy best response moves $p$ a single player can make is
$$
p \leq \frac{\frac{2mn}{\alpha}}{\Gamma} \leq \frac{2m^2n\sigma_\ell}{\alpha(\alpha-2\mathbf{\Delta})},
$$
and for $\alpha > 4 \mathbf{\Delta}$ we obtain the stated result for $p$.
\end{proof}
\subsection{Analysis of $\pbr$}
We first state a few general lemmas which we use to prove $\pbr(\tau)$ is jointly-differentially private.
\begin{lemma}[Post-Processing \cite{DMNS06}]
Given a mechanism $\M: \mathcal{T}^n \to \mathcal{O}$ and some function $\phi: \mathcal{O} \to \mathcal{O}'$ that is independent of the players' types $\tau \in \mathcal{T}^n$, if $\M(\tau)$ is $(\epsilon, \delta)$- differentially private then $\phi(\M(\tau))$ is $(\epsilon, \delta)$-differentially private.
\label{lem:comp}
\end{lemma}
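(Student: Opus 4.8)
The plan is to reduce any statement about the output distribution of the post-processed mechanism $\phi\circ\M$ to the corresponding statement about the output distribution of $\M$ itself, by pulling events back through $\phi$. The single observation that makes this work is precisely the hypothesis of the lemma: $\phi$ is a fixed map that does not depend on the reported type profile, so the preimage under $\phi$ of any target event is one and the same subset of $\mathcal{O}$ no matter which profile was given to $\M$.

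Carrying this out, I would fix an arbitrary player $i$, arbitrary reports $\tau_i,\tau_i'$ for that player together with a tuple $\tau_{-i}$ for the rest, and an arbitrary (measurable) event $S\subseteq\mathcal{O}'$. The key step is to pass to $R := \phi^{-1}(S)=\{o\in\mathcal{O}:\phi(o)\in S\}\subseteq\mathcal{O}$, which is a single well-defined event exactly because $\phi$ is chosen independently of $\tau$. Then $\phi(\M(\tau_i,\tau_{-i}))\in S$ holds if and only if $\M(\tau_i,\tau_{-i})\in R$, so
\begin{align*}
\prob{\phi(\M(\tau_i,\tau_{-i}))\in S}
&= \prob{\M(\tau_i,\tau_{-i})\in R}\\
&\le e^{\epsilon}\,\prob{\M(\tau_i',\tau_{-i})\in R}+\delta = e^{\epsilon}\,\prob{\phi(\M(\tau_i',\tau_{-i}))\in S}+\delta,
\end{align*}
where the middle inequality is just Definition~\ref{def:standardprivacy} applied to $\M$ with the event $R$. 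Since $i,\tau_i,\tau_i',\tau_{-i}$ and $S$ were arbitrary, $\phi\circ\M$ is $(\epsilon,\delta)$-differentially private; the same argument, restricting both the output coordinates and the event to the players $j\neq i$, gives the analogous statement for joint differential privacy.

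I do not expect a genuine obstacle here --- the proof is essentially one line once the preimage trick is in place. The only point that needs a little care is if one wants to permit a \emph{randomized} post-processing map $\phi$ (with internal coins drawn independently of $\tau$ and of the coins of $\M$): in that case I would condition on the randomness $r$ of $\phi$, apply the deterministic argument above to each resulting map $\phi_r$, and then average over $r$. The guarantee survives the averaging because $p\mapsto e^{\epsilon}p+\delta$ is affine and $r$ is independent of everything else. Measurability of $R=\phi^{-1}(S)$ is free once $\phi$ is assumed measurable.
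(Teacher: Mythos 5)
Your proof is correct and is the standard preimage argument for post-processing immunity. The paper itself does not prove this lemma---it simply cites \cite{DMNS06}---so there is no in-paper proof to compare against, but your argument (pull the target event $S \subseteq \mathcal{O}'$ back to $R = \phi^{-1}(S) \subseteq \mathcal{O}$, observe that $R$ is the same set for both neighboring inputs because $\phi$ is independent of $\tau$, and invoke Definition~\ref{def:standardprivacy} on $R$) is exactly the canonical one; the extension to randomized $\phi$ by conditioning on its internal coins and averaging, which works because $p \mapsto e^{\epsilon}p + \delta$ is affine, is also the standard way to get the general statement.
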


\begin{lemma}[Billboard Lemma]
Let $\M: \mathcal{T}^n \to \mathcal{O}^n$ be an $(\epsilon, \delta)$ differentially private mechanism and consider any function $\theta: \mathcal{T} \times \mathcal{O}^n \to \mathcal{A}^n$. Define the mechanism $\M': \mathcal{T}^n \to \mathcal{A}^n$ as follows: on input $\tau$, $\M'$ computes $o = \M(\tau)$, and then $\M'(\tau)$ outputs to each $i$:
$$\M'(\tau)_i = \theta(\tau_i,o).$$
$\M'$ is  then $(\epsilon, \delta)$-jointly differentially private.
\label{lem:billboard}
\end{lemma}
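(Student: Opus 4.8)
\textbf{Proof proposal for the Billboard Lemma (Lemma~\ref{lem:billboard}).}
The plan is to derive the joint-differential-privacy guarantee for $\M'$ directly from the \emph{standard} differential privacy of $\M$ by a post-processing argument in the spirit of Lemma~\ref{lem:comp}. Fix an arbitrary player $i$, two candidate reports $\tau_i,\tau_i'\in\mathcal{T}$ for that player, a fixed tuple $\tau_{-i}$ of reports for the remaining players, and an arbitrary event $B$ on the outputs delivered to the players $j\neq i$ (i.e.\ $B$ in the range of $(\,\cdot\,)_{-i}$). Per Definition~\ref{def:privacy}, it suffices to show
$\prob{(\M'(\tau_i,\tau_{-i}))_{-i}\in B}\le e^{\epsilon}\,\prob{(\M'(\tau_i',\tau_{-i}))_{-i}\in B}+\delta$.

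The key observation is that, once $\tau_{-i}$ is held fixed, the entire bundle of outputs handed to the players $j\neq i$ is a \emph{deterministic, $\tau_i$-independent} function of the single object $o=\M(\tau)$: indeed $\M'(\tau)_j=\theta(\tau_j,o)$ for each $j\neq i$, and none of these expressions reads $\tau_i$ directly. So define $g_{\tau_{-i}}(o) := \bigl(\theta(\tau_j,o)\bigr)_{j\neq i}$; then $(\M'(\tau_i,\tau_{-i}))_{-i}=g_{\tau_{-i}}\bigl(\M(\tau_i,\tau_{-i})\bigr)$, and likewise with $\tau_i'$ in place of $\tau_i$, where the map $g_{\tau_{-i}}$ is the \emph{same} in both cases. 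Now put $S:=g_{\tau_{-i}}^{-1}(B)=\set{o : g_{\tau_{-i}}(o)\in B}$, an event on the range of $\M$. Since $\M$ is $(\epsilon,\delta)$-differentially private (Definition~\ref{def:standardprivacy}), applying it to player $i$, the reports $\tau_i,\tau_i'$, the fixed $\tau_{-i}$, and the event $S$ gives
\begin{align*}
\prob{(\M'(\tau_i,\tau_{-i}))_{-i}\in B} &= \prob{\M(\tau_i,\tau_{-i})\in S} \\
&\le e^{\epsilon}\,\prob{\M(\tau_i',\tau_{-i})\in S}+\delta \\
&= e^{\epsilon}\,\prob{(\M'(\tau_i',\tau_{-i}))_{-i}\in B}+\delta.
\end{align*}
As $i,\tau_i,\tau_i',\tau_{-i},B$ were arbitrary, this is precisely $(\epsilon,\delta)$-joint differential privacy of $\M'$.

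I do not expect a genuine obstacle here: the argument is the standard ``the billboard $o$ is differentially private, and each agent's output is its own private data post-processed against the billboard'' reasoning. The only point requiring a word of care is that the post-processing map $g_{\tau_{-i}}$ is permitted to depend on the \emph{other} players' (fixed) types, which is slightly broader than the literal hypothesis of Lemma~\ref{lem:comp} (post-processing independent of \emph{all} types); this is harmless, since privacy is only being asserted with respect to player $i$'s report while $\tau_{-i}$ is frozen, so one simply invokes the definition of differential privacy for $\M$ directly, as above, rather than quoting the post-processing lemma verbatim. (Measurability of $S$ is immediate when $\mathcal{O}$ and the output space are discrete, as in all our applications, and otherwise follows from measurability of $\theta$.)
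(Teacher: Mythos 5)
Your proposal is correct and follows essentially the same route as the paper: fix a player $i$, observe that the output bundle delivered to players $j\neq i$ is a $\tau_i$-independent function of $o=\M(\tau)$ (your $g_{\tau_{-i}}$ is the paper's $\phi$), and then transfer the $(\epsilon,\delta)$-differential-privacy guarantee of $\M$ to $\M'$ via this post-processing. Your remark that $g_{\tau_{-i}}$ depends on the other players' (fixed) types and hence does not literally satisfy the hypothesis of Lemma~\ref{lem:comp}, so one should invoke the definition of differential privacy directly rather than quote that lemma, is in fact a small tightening of the paper's own write-up, which cites Lemma~\ref{lem:comp} despite its $\phi$ depending on $\tau_{-i}$.
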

\begin{proof}
We fix a player $i \in [n]$ and let $B_{-i} \subseteq \mathcal{A}^{n-1}$.
\begin{align*}
\PROB_{a_{-i}\sim M'(\tau) }[a_{-i} \in B_{-i}]= \PROB_{o \sim M(\tau)}[(\theta(\tau_j,o) )_{j\neq i}\in B_{-i}].
\end{align*}
We define $\phi_j(o) = \theta(\tau_j,o)$ and then
$$
\phi(o) = (\phi_j(o))_{j\neq i}.
$$
Note that $\phi$ is independent of $\tau_i$.  We can now invoke Lemma \ref{lem:comp}.
\begin{align*}
\PROB_{o \sim M(\tau)}[(\theta(\tau_j,o) )_{j\neq i}\in B_{-i}] &= \PROB_{o \in M(\tau)}[\phi(o) \in B_{-i}],  \\
& \leq \exp(\epsilon)  \PROB_{o \in M(\tau_i',\tau_{-i})}[\phi(o) \in B_{-i}] + \delta, \\
& = \exp(\epsilon) \PROB_{a_{-i}\sim M'(\tau_i',\tau_{-i})_{-i} }[a_{-i} \in B_{-i}]+ \delta. \qedhere
\end{align*}
\end{proof}
We are now ready to prove Theorem \ref{lem:dp},that $\pbr(\tau)$ is indeed jointly differentially private.
\begin{proof}[Proof of Theorem \ref{lem:dp}]
To achieve joint differential privacy, we want to apply Lemma \ref{lem:billboard}.  We first argue that if we were to output the entire transcript of partial counts on every facility after every player has moved, then this is differentially private in the reported types of the players.  We will denote the mechanism $M$ as 
$$M: \mathcal{T}^n \to \R^{\ifaddone \left( \lfloor\log(nT)\rfloor+1\right)\else \lfloor\log(nT)\rfloor \fi \cdot nT \cdot m},  \qquad M\left( \tau \right) = (\hat q_e^{\leq nT})_{e \in E},
$$
where the output of $M$ is the vector form of the final table $\hat \rr^{nT} = (\hat \rr^{nT}_e)_{e \in E}$ computed in $\pbr$ that uses $\pst$ and $\abm$.  Each $\hat q_e^t$ is the vector of entries that were updated in the table $\hat \rr_e^t$ from $\hat \rr_e^{t-1}$.  For ease of notation, we will write values without subscripts to be the vector of values for every subscript, e.g. $\hat q ^{\leq t} = (\hat q_e^{\leq t})_{e \in E}$. 

There are three random vectors $W^t$, $Q^t$ and $\hat Q^t$ that we need to account for at each iteration $t$ of $\pbr$.  Initially, there is a vector stream $\omega^{\leq n} \in \{-1,0,1 \}^{m \times n}$ which is deterministically chosen as a function of what each of the $n$ players report.  We then set $W^t = \omega^t$ for $t \leq n$.  

With each stream $\omega^{\leq t}$ we then use the routine $\pst$, which calculates the table of \emph{exact} partial counts $\rr^t = (\rr^t_e)_{e \in E}$, where $\rr^t_e \in \R^{\left(\lfloor \log(nT) \rfloor+1\right) \times nT}$ and then outputs the table of \emph{approximate} partial counts $\hat \rr^{t}$.  We will use the function $g^t_e: \{ -1,0,1\}^{t} \to \N^{i(t)}$ to denote the vector of \emph{exact} partial counts that get updated at time $t$ in $\pst$ where 
$$
i(t) = \min\left\{j: b_j(t) \neq 0 \quad \text{with} \quad t = \sum_{\ell=0}^{\lfloor\log(t) \rfloor} b_\ell(t)2^\ell\right\}.
$$ 
That is, 
$$
g_e^t\left(\omega_e^{\leq t} \right)= \left(\rr^t_e(j,t/2^{j}) \right)_{j = 0}^{i(t)} = \left( \sum_{\ell = t/2^{j}}^{t} \omega_e^{\ell} \right)_{j = 0}^{i(t)}, \qquad \text{ and } \qquad g^t(\omega) = (g^t_e(\omega_e) )_{e \in E}.
$$
We then define the variable
\begin{equation}
Q^t = g^t(W^{\leq t}),
\label{eq:Q}
\end{equation}  
which up to round $n$ is deterministic.  We introduce the noise vector $Z^t$ of size $m \cdot i(t)$ with each entry being an i.i.d. Laplace random variable with parameter $1/\epsilon'$.
\begin{equation}
\hat Q^t = Q^t + Z^t.
\label{eq:hatQ}
\end{equation}
We note that there is a one to one mapping $\phi$ that takes any output of $\pst(\omega_e^{\leq t},\hat \rr_e^{t-1}) = \hat \rr_e^t$ that does not include any of the zero entries that have not been updated from $\rr^1_e$ and $\rr_e^t$, to a realization of $\hat Q^{\leq t} = \hat q^{\leq t}$; the quantity $\hat q_e^{\leq t}$ is just the table $\hat \rr_e^t$ represented as a vector so that $\phi(\hat\rr_e^t) = \hat q_e^{\leq t}$.  It will be more convenient in this proof to focus on the vector $\hat q^{\leq t}$, rather than the table $\hat\rr^t$.  

We are left with defining $W^t$ for $t>n$.  In $\pbr$, each bit $\omega_e^t$ for $t > n$ depends on which player is moving at time $t$ and the table of \emph{approximate} partial counts, $\hat \rr^{t-1}$, from which we can count the number of people $\hat y_e^{t-1}$ on facility $e$ at that time.  We denote the player that moves at time $t$ when they reported type $\tau$ as $\tau(t)$.  The bit $\omega_e^t$ is then determined by player $\tau(t)$'s best response move given the table $\hat \rr^{t-1}$ (or equivalently $\hat q^{\leq t-1}$).  We denote this dependence as a function $f^t(\hat q^{\leq t-1}; \tau(t) )$.  We then define the random variable $W^t$ as 
\begin{equation}
W^t = f^t\left(\hat Q^{\leq t-1} ; \tau(t)\right) \qquad \text{ for } t>n.
\label{eq:W}
\end{equation}
We then have that $M(\tau)$ and $\hat Q^{\leq nT}$ have the same distribution.  

When proving differential privacy, we set the outcome and see how much the probability distributions differ when we change a single entry in our database.  If we change the reported types $\tau$ to $(\tau_i',\tau_{-i})$, then the random variable $W^t$ changes to $(W')^t$, $Q^t$ changes to $(Q')^t$, and $\hat Q^t$ changes to $(\hat{Q'})^t$ but we fix the outcome realization $\hat q^{\leq nT} $ in both cases.  

Regardless of whether player $i$ reports $\tau_i$ or $\tau_i'$, we know that he can move at most $p$ different times.  We let $\{t_1, \cdots, t_p \}$ be the times when $i$ moves when reporting $\tau_i$ given realization $\hat q^{\leq nT}$ and $\{t_1', \cdots, t_p'  \}$ be the times when $i$ moves having reported $\tau_i'$ instead.  We then relabel the times and reorder them so that $t_1 < t_2< \cdots, t_{2p}$ are the times when $i$ moves having reported $\tau_i$ or $\tau_i'$. We denote $T^p = \{t_1, \cdots, t_{2p} \}$.  We consider bounding the following quantity
\begin{align*}
\Prob{}{\hat Q^{\leq nT} = \hat q^{\leq nT} }  = \prod_{t = 1}^{nT} \Prob{}{\hat Q^{t} = \hat q^{t} \left| \hat q^{< t} \right.}.
 \end{align*}
From \eqref{eq:W} we know $W^t = (W')^t$ when we condition on $\hat q^{< t}$ and $t \notin T^p$.  We then consider rounds $t \in T^p$, while conditioning on $\hat q^{<t}$. We know that each realization of $W^{t}_e - (W')^{ t}_e \in \{-1,0,1 \}$ for each $e \in E$.  Thus, $W^{\leq nT}$ will differ by one in as many as $2pm$ entries from $(W')^{\leq nT}$ when we condition on $\hat q^{< t}$ for each entry $t \leq nT$.  Hence, $Q^{\leq nT}$ has total $\ell_1$ distance $2pm \left(\log(nT) + 1 \right)\leq 3pm\log(nT)$ from $(Q')^{\leq nT}$ when we again condition on $\hat q^{<t}$ for each round.  We then add independent Laplacian noise with parameter $1/\epsilon'$ to each component to get
\begin{align*}
\Prob{}{\hat Q^{\leq nT} = \hat q^{\leq nT} } & = \prod_{t = 1}^{nT} \Prob{}{Z^t =  \hat q^{t} - Q^{t} \left| \hat q^{< t} \right.}, \\
&\leq \exp(3\epsilon 'pm\log(nT))\cdot \prod_{t = 1}^{nT} \Prob{}{Z^t =  \hat q^{t} - (Q')^{t} \left| \hat q^{< t} \right.}, \\
&= \exp(3\epsilon 'pm\log(nT))\cdot \Prob{}{(\hat Q')^{\leq nT} = \hat q^{\leq nT} }= \exp(\epsilon) \Prob{}{(\hat Q')^{\leq nT} = \hat q^{\leq nT} }.
 \end{align*}
We have thus shown that $M$ is $ \epsilon$-differentially private.

From the partial sum tables that $M$ outputs and the extra bit of information to each player $j$ being which initial action $\pbr$ had $j$ taking, each player can determine what actions the algorithm had her moving to at each round that she moved and hence the last action she was taking when the algorithm terminated.  We then use Lemma \ref{lem:billboard} to conclude that we have an $\epsilon$-joint differentially private mechanism.  Lastly, we get $(\epsilon, \beta)$ joint differential privacy (instead of $(\epsilon, 0)$ joint differential privacy) because there is a small probability $\beta$ that $\pbr$ may not produce an output, which may be disclosive.  
\end{proof}

We now prove the remaining theorems from this section.

\begin{proof}[Proof of Theorem \ref{thm:nash_eq}]
  We first assume that the low error assumption holds. Recall that in this case, no player ever makes more than $p$ action changes, and so the algorithm outputs an action profile and does not fail. We consider the difference in cost to player $i$ when she deviates from her suggested action $a_i$ to a best response action $a^*_i$ given the other players' actions $a_{-i}$.  Recall that $\mathbf{\Delta}$ is the upper bound on the error between the noisy costs for each player and their true cost.  We showed in Lemma \ref{lem:Tnoise} that given $\alpha > 4\mathbf{\Delta}$ the total number of possible $\alpha$-noisy best response moves in a routing game is $T$.  $\pbr (\tau)$ iterates through each of the $n$ players $T$ times (for a total of $nT$ time steps) which is enough to elicit $T$ $\alpha$-noisy best responses.  Under the low error assumption therefore, at completion with action profile $\mathbf{a}$, no player can have an $\alpha$-noisy best response. We have,
\begin{align*}
\eta  & = \max_{i \in [n]}\{Improve(\tau_i, \ba)\} = \max_{i \in [n]}\{c(\tau_i,\ba) - c(\tau_i, (a_i^*,a_{-i}))\} \leq \alpha + 2 \mathbf{\Delta}.
\end{align*}
We now expand $\mathbf{\Delta} $ from \eqref{eq:costdiff} by substituting in the privacy parameter $\epsilon'$ that was given in Algorithm \ref{BR-PRIVATE}.  We have bounds for the number of total approximate best response moves $T$ given in \eqref{eq:Tnoise} and the total number of times $p$ any single player can change actions, found in \eqref{eq:knoise}, which are used in $\pbr(\tau)$:
\begin{align*}
\eta \leq \alpha + 2 \mathbf{\Delta}  = \alpha + O\left(m^4  n (\sigma_\ell)^2 \frac{\log^{3/2}\left( m n/\alpha\right)\sqrt{\log(m/\beta)}}{\alpha^2\epsilon} \right).
\end{align*}
If we set
$$
\alpha = \Theta\left( \left(\frac{m^4n (\sigma_\ell)^2\log^{2}(mn/\beta) }{\epsilon}\right)^{1/3}\right),
$$
we can balance the two terms in the bound of $\eta$ to get the desired result.  We still need to check the condition $\alpha > 4 \mathbf{\Delta} $.  This requires
$$
\alpha^3 > 48 \sqrt{8} \frac{  m^4 n(\sigma_\ell)^2 \log^{2}(2 mn/(\alpha \beta))}{\epsilon}.
$$
This is precisely the ordering for $\alpha$ we found to obtain the best bound for $\eta$.  To complete the proof, we note that the low error assumption holds with probability at least $1-\beta$.
\end{proof}

\begin{proof}[Proof of Theorem \ref{thm:weak_main}]
We have satisfied the hypotheses of Theorem \ref{lem:main} (with the slight difference that we are dealing with costs instead of utilities) because we have an algorithm $\M(\tau)  = \pbr(\tau)$ that is $(\epsilon, \beta)$ joint differentially private and with probability $1-\beta$ produces an action profile that is an $\eta$ approximate pure strategy Nash Equilibrium of the routing game defined by the type vector $\tau$.  Thus for any $\tau \in \mathcal{T}^n$, the ``good" behavior profile $(\tau,f)$ is an $\eta'$-approximate Nash Equilibrium of the complete information game $\G_\M$, where
\begin{align*}
\eta'  \leq & \eta + m\epsilon + m\beta + m \beta = O\left( \left(\frac{m^4n (\sigma_\ell)^2\log^{2}(mn/\beta)}{\epsilon}\right)^{1/3}\right) + m\epsilon+ 2m\beta .
\end{align*}
We can now optimize the bound for $\eta'$ by adjusting $\epsilon$.  We set
\begin{align}
\epsilon = \Theta \left( \left(mn (\sigma_\ell)^2 \log^{2}(mn/\beta) \right)^{1/4}\right).
\label{epsilon}
\end{align}

We then need to scale $\beta$ in an optimal way recalling that the bound in the difference in the exact and approximate counts in Lemma \ref{lem:chanerror} requires $\beta>2m/(nT)$ (note that the stream is length $nT$ instead of $T$). 
We set $\beta = \sqrt{\sigma_\ell} / n$ for $n$ sufficiently large.
This gives our bound for $\eta'$.
\end{proof}

\section{Proofs of Noise Tolerance of No Regret Algorithms (Section \ref{sec:no-regret})}

\begin{proof}[Proof of Lemma \ref{lem:noisyregretsbounded}]
Let $(\state_0, \dots, \state_{T})$ be any sequence of distributions and $f \from \actionset \to \actionset$ be any function.  Then:
\begin{align}
\regret(\state_0, &\dots, \state_T, \Losses, f) - \regret(\state_0, \dots, \state_T, \noisyLosses, f) \notag\\
={} &( \exploss(\state_0, \dots, \state_T, \Losses) - \exploss(\modstate_0, \dots, \modstate_T, \Losses) ) - ( \exploss(\state_0, \dots, \state_T, \noisyLosses) - \exploss(\modstate_0, \dots, \modstate_T, \noisyLosses) ). \notag \\
={} & ( \exploss(\state_0, \dots, \state_T, \Losses) - \exploss(\state_0, \dots, \state_T, \noisyLosses) ) + ( \exploss( \modstate_0, \dots, \modstate_T, \noisyLosses) - \exploss(\modstate_0, \dots, \modstate_T, \noisyLosses) ) \notag\\
={} & \left( \frac{1}{T} \sum_{t = 1}^{T} \sum_{j = 1}^{k} \tjstate (\tjloss - \noisytjloss) \right) +
\left( \frac{1}{T} \sum_{t=1}^{T} \sum_{j=1}^{k} (\modstate_t)^j (\tjloss - \noisytjloss) \right) \qquad (\text{by definition of } \exploss)\notag \\
={} & \left( \frac{1}{T} \sum_{t = 1}^{T} \sum_{j = 1}^{k} \tjstate \tjnoise \right) +
\left( \frac{1}{T} \sum_{t=1}^{T} \sum_{j=1}^{k} (\modstate_t)^j \tjnoise \right) \,\quad\qquad\qquad\qquad (\text{by definition of }z)\label{eq:noisyregrets0} \\
\leq{}& \zeta \left( \frac{1}{T} \sum_{t=1}^{T} \sum_{j=1}^{k} \tjstate \right) +  \zeta \left( \frac{1}{T} \sum_{t=1}^{T} \sum_{j=1}^{k} (\modstate_t)^j \right) \,\,\,\qquad\qquad\qquad\qquad (\forall j,t \; |z^{j}_{t}| \leq \zeta)\nonumber\\
={}& 2\zeta, \nonumber
\end{align}
where the final equality follows from the fact that $\tstate, \modstate_t$ are probability distributions.
\end{proof}

\begin{proof}[Proof of Corollary \ref{thm:lowregretbounded}]
We will prove only item $1$, the proof for $2$ is analogous. First, by the assumption of the theorem, we will have $\noisyLosses \in [0,1]^{T \times k}$ except with probability at most $\beta$.
Therefore, by Theorem~\ref{thm:noregretalgsexist},
\begin{align*}
\Prob{\Noises}{\regret\left(\overrightarrow{\fixedalg}, \noisyLosses, \fixedmods\right) > \sqrt{\frac{2 \log k}{T}}} \leq \beta
\end{align*}
Further, by Lemma \ref{lem:noisyregretsbounded}, we know that $\noisyLosses \in [0,1]^{T \times k}$ implies
\begin{align*}
\regret\left(\overrightarrow{\fixedalg}(\noisyLosses), \Losses, \mods\right)  \leq \regret\left(\overrightarrow{\fixedalg}, \noisyLosses, \mods\right) + 2\zeta.
\end{align*}
Combining, we have the desired result, i.e.
\begin{align*}
&\Prob{\Noises}{\regret\left(\overrightarrow{\fixedalg}(\noisyLosses), \Losses, \fixedmods\right) > \sqrt{\frac{2 \log k}{T}} + 2\zeta} \leq \beta. \qedhere
\end{align*}
\end{proof}

\begin{proof}[Proof of Lemma \ref{lem:noisylapregretsbounded}]
Let $(\state_0, \dots, \state_{T})$ be a sequence of distributions and $f \from \actionset \to \actionset$ be any function.  Recall by~\eqref{eq:noisyregrets0},
\begin{align}
&\regret(\state_0, \ldots, \state_T, \Losses, f) - \regret(\state_0, \dots, \state_T, \noisyLosses, f)
= \left( \tfrac{1}{T} \sum_{t = 1}^{T} \sum_{j=1}^{k} \tjstate \tjnoise \right) +\left( \tfrac{1}{T} \sum_{t=1}^{T} \sum_{j=1}^{k} (\modstate_t)_j \tjnoise \right).
\label{eq:noisyregrets}
\end{align}
We wish to place a high probability bound on the quantities:
\begin{align*}
&Y_{\state_0, \dots, \state_{T}} =   \frac{1}{T} \sum_{t = 1}^{T} \sum_{j = 1}^{k} \tjstate \tjnoise.
\intertext{Changing the order of summation,}
&Y_{\state_{0}, \dots, \state_{T}}
={} \sum_{a_{1}, \dots, a_{T} \in A} \left( \prod_{t=1}^{T} \state_{t}^{a_{t}} \right) \left(\frac{1}{T} \sum_{t=1}^{T} z_{t}^{a_{t}} \right),
\end{align*}
the equality follows by considering the following two ways of sampling elements $z^j_{t}$.  The first expression represents the expected value of $z^{j}_{t}$ if $t$ is chosen uniformly from $\set{1,2,\dots,T}$ and then $j$ is chosen according to $\pi_{t}$.  The second expression represents the expected value of $z^{j}_{t}$ if $(a_1, \dots, a_{T})$ are chosen independently from the product distribution $\pi_{1} \times \pi_{2} \times \dots \times \pi_{T}$ and then $a_{t}$ is chosen uniformly from $(a_1, \dots, a_{T})$.  These two sampling procedures induce the same distribution, and thus have the same expectation.
Thus we can write:
\begin{align*}
\Prob{\Noises}{Y_{\state_{0}, \dots, \state_{T}} > \alpha}
\leq \max_{a_{1}, \dots, a_{T} \in A}  \Prob{\Noises}{ \frac{1}{T} \sum_{t=1}^{T} z_{t}^{a_{t}} > \alpha}
\leq  \Prob{\Noises}{ \frac{1}{T} \sum_{t=1}^{T} z_{t}^{1} > \alpha}.
\end{align*}
where the second inequality follows from the fact that the variables $z_{t}^{j}$ are identically distributed. Applying Theorem~\ref{thm:conc}, we have that for any $\alpha < \zeta$,
\begin{align} \label{eqn:conc1}
\Prob{\Noises}{Y_{\state_{0}, \dots, \state_{T}} > \alpha} \leq e^{-\alpha^2 T / 6b^2}.
\end{align}
Let $(\state_{0}, \dots, \state_{T}) = \nralg(\noisyLosses)$.  By Equation~\eqref{eq:noisyregrets} we have
\begin{align*}
&\Prob{\Noises}{\regret(\nralg(\noisyLosses), \Losses, f) - \regret(\nralg(\noisyLosses), \noisyLosses, f) > \alpha} \\
\leq{} &\Prob{\Noises}{ \frac{1}{T} \sum_{t = 1}^{T} \sum_{j = 1}^{k} \tjstate \tjnoise > \alpha/2} + \Prob{\Noises}{\frac{1}{T} \sum_{t=1}^{T} \sum_{j=1}^{k} (\modstate_t)_j \tjnoise > \alpha/2}
\leq{} 2e^{-\alpha^2 T/24b^2}
\end{align*}
where the last inequality follows from applying (\ref{eqn:conc1}) to the pair of sequences $(\state_0, \dots, \state_{T})$ and $(\modstate_{0}, \dots, \modstate_{T})$. The Lemma now follows by taking a union bound over $\mods$.
\end{proof}

\begin{proof}[Proof of Corollary \ref{thm:lowregretlaplace}]
First, we demonstrate that $\noisyLosses \in [0,1]^{T \times k}$ except with probability at most $\beta$, which will be necessary to apply the regret bounds of Theorem~\ref{thm:noregretalgsexist}.  Specifically:
\begin{equation} \label{eqn:beta1}
\Prob{\Noises}{\exists \tjnoise \textrm{ s.t. } |\tjnoise| > \frac{1}{3}} \leq Tk \Prob{\Noises}{|\noise^1_{1}| > \frac{1}{3}} \leq 2Tke^{-1/6b} \leq \beta/2,
\end{equation}
where the first inequality follows from the union bound, the second from the definition of Laplacian r.v.'s and the last inequality follows from the assumption that $b \leq 1/6 \log(4Tk/\beta)$.

The theorem now follows by conditoning on the event $\noisyLosses \in [0,1]^{T \times k}$ and combining the regret bounds of Theorem~\ref{thm:noregretalgsexist} with the guarantees of Lemma \ref{lem:noisylapregretsbounded}. For parsimony, we will only demonstate the first inequality, the second is analogous. Recall again by Theorem \ref{thm:noregretalgsexist}, we have that whenever $\noisyLosses \in [0,1]^{T \times k}$:
\begin{align}
&\regret(\overrightarrow{\fixedalg}, \noisyLosses, \fixedmods) \leq \sqrt{\frac{2 \log k}{T}}. \nonumber
\intertext{Further, by Lemma \ref{lem:noisylapregretsbounded}, we know that:}
&\Prob{\Noises}{ \regret\left(\overrightarrow{\fixedalg}(\noisyLosses), \Losses, \fixedmods\right) - \regret\left(\overrightarrow{\fixedalg}, \noisyLosses, \fixedmods\right)  > \alpha} \leq 2|\fixedmods| e^{- \alpha^2T/ 24b^2}\nonumber\\
&\hphantom{\Prob{\Noises}{ \regret\left(\fixedalg(\noisyLosses), \Losses, \fixedmods\right) - \regret\left(\fixedalg(\noisyLosses), \noisyLosses, \fixedmods\right)  > \lambda}} = 2k e^{- \alpha^2 T/ 24b^2}.\nonumber\\
\intertext{Substituting $\alpha = b\sqrt{\frac{24\log(4k/\beta)}{T}}$, we get:}
&\Prob{\Noises}{ \regret\left(\overrightarrow{\fixedalg}(\noisyLosses), \Losses, \fixedmods\right) - \regret\left(\overrightarrow{\fixedalg}, \noisyLosses, \fixedmods\right)  > \alpha} \leq \beta/2. \label{eqn:beta2}
\end{align}
The result follows by combining (\ref{eqn:beta1}) and (\ref{eqn:beta2}).
\end{proof}

\begin{proof}[Proof of Lemma \ref{lem:lossscaled}]
Let $\state_0, \dots, \state_T \in \actiondists$ be any sequence of distributions and let $f\from \actionset \to \actionset$ be any function.  Then
\begin{align*}
\regret(\state_0, \dots, \state_T, \Losses, f)
&= \exploss(\state_0, \dots, \state_T, \Losses) - \exploss(\modstate_0, \dots, \modstate_T, \Losses) \\
&= 3\left(\exploss(\state_0, \dots, \state_T, \scaledLosses) - \exploss(\modstate_0, \dots, \modstate_T, \scaledLosses) \right) \\
&= 3 \left( \regret(\state_0, \dots, \state_T, \scaledLosses, f) \right).
\end{align*}
The second equality follows from the definition of $\exploss$ and from linearity of expectation.
The Lemma now follows by setting $(\state_0, \dots, \state_T) = \vec{\nralg}(\scaledLosses)$, 
taking a maximum over $\mod \in \mods$, and plugging in the guarantees of Theorem~\ref{thm:noregretalgsexist}.
\end{proof}

\section{Proofs for Computing Equilibria in Games with Few Actions (Section \ref{sec:noisyDP})}
\begin{proof}[Proof of Theorem~\ref{thm:privatecnrl}]

Fix any player $i$, any pair of types for $i$, $\tau_{i}, \tau'_{i}$, and a tuple of types $\tau_{-i}$ for everyone else. To show differential privacy, we need to analyze the change in the distribution of the joint output for all players other than $i$, $(\state_{-i,1}, \dots, \state_{-i,T})$ when the input is $(\tau_i, \tau_{-i})$ as opposed to $(\tau'_i,\tau_{-i})$.

It will be easier to analyze the privacy of a modified mechanism that outputs $ (\noisylosses_{-i,1}, \dots, \noisylosses_{-i,T})$.  Observe that this output is sufficient to compute $(\state_{-i,1}, \dots, \state_{-i,T})$ just by running $\nralg$.  Thus, if we can show the modified output satisfies differential privacy, then the same must be true for the mechanism as written.

Let us define $\pi_{j,1}$ to be the uniform distribution over the $k$ actions for player $j \in [n]$, which is independent in the input types of all the players.  We then define $l_{j,1}^a$ as the loss for player $j \in [n]$ for action $a \in \actionset$ at round 1 for input type $(\tau_i, \tau_{-i})$ and similarly ${l'}_{j,1}^a $ for input type $(\tau_i'.,\tau_{-i})$.  Note that $l_{j,1}^a = {l'}_{j,1}^a$ for all $j \neq i$ and $a \in \actionset$.  

We then define the random variables $\hat{L}_{j,1}^a = l_{j,1}^a + Z_{j,t}^a$ and $\hat{L'}_{j,1}^a = {l'}_{j,1}^a + Z_{j,1}^a$where $Z_{j,1}^a$ is a Laplace random variable with parameter $b$.  We are then outputting a realization $(\hat{l}_{-i,1}^a )_{a \in [k]}$ that we hold fixed no matter if the input type is $(\tau_i,\tau_{-i})$ or $(\tau_i',\tau_{-i})$.  Note that $\hat{L}_{j,1}^a$ and $\hat{L'}_{j,1}^a$ have the same distribution for $j \neq i$.  

We then compute the distribution over actions for the next iteration $\pi_{j,2} = \cA(\pi_{j,1},\hat{l}_{j,1})$, for all $j\neq i$.  Note that because we are not fixing the output for the $i$'th player, we have $\pi_{i,2} \neq {\pi'}_{i,2}$ and can change arbitrarily.  We then ask how much this can effect the other $j \neq i$ players' losses at round 2
\begin{align*}
 l_{j,2}^a  & = 1- \Ex{\ba_{-j} \sim \state_{-j,2}}{u(\tau_{j},(a, \ba_{-j}))} ={} 1 - \Ex{\ba_{-(i,j)} \sim \state_{-(i,j),t}}{\Ex{a_i\sim \state_{i,t}}{u(\tau_{j},(a,a_i,\ba_{-(i, j)}))}}
\\ & \leq 1 - \Ex{\ba_{(-i,j)} \sim \state_{-(i,j),t}}{\Ex{a_i \sim {\state}_{i,t}}{u(\tau_{j},(a,a_i,\ba_{-(i, j)})) + \lambda}} = {l'}_{j,2}^a  + \lambda,
\end{align*} 
where the inequality comes from the fact that we assumed our game is $\lambda$-large (Definition \ref{def:sensitive}), and by linearity of expectation.  A similar argument shows for $j \neq i$:
\begin{align*}
 l_{j,2}^a \geq {l'}_{j,2}^a - \lambda.
\end{align*}

We then add noise to get $\hat{L}_{j,2}^a =  l_{j,2}^a + Z_{j,2}^a$ and $\hat{L'}_{j,2}^a =  {l'}_{j,2}^a + Z_{j,2}^a$ where the sensitivity between $l_{j,2}^a$ and ${l'}_{j,2}^a$  is $\lambda$ for $j \neq i$ and again $Z_{j,2}^a$ are Laplace random variables with parameter $b$.  We are then fixing the realizations of $\hat{L}_{j,2}^a$ and $\hat{L'}_{j,2}^a$ to be $\hat{l}_{j,2}^a$ for $j \neq i$.  Once again, the distributions $\pi_{j,3} = \cA(\pi_{j,2},\hat{l}_{j,2})$ do not change for $j\neq i$ between the two input types, but $\pi_{i,3}$ and ${\pi'}_{i,3}$ can change arbitrarily.  

Continuing via induction on the rounds $t$, we see that fixing the output $(\hat{l}_{-i,t}^a )_{a \in [k]}$ between the input types, allows for the losses $l_{j,t+1}^a$ and ${l'}_{j,t+1}^a$ to differ by as much as $\lambda$ between the two input types.  The noisy losses $\noisylosses_{-i,1}, \dots, \noisylosses_{-i,t}$ have already been computed when the mechanism reaches round $t+1$, thus the mechanism fits the definition of adaptive composition.  

Thus, we have rephrased the output $ (\noisylosses_{i',1}, \dots, \noisylosses_{i',T})$ as computing the answers to $nkT$ (adaptively chosen) queries on $(\tau_{1}, \dots, \tau_{n})$, each of which is $\lambda$-sensitive to the input $\tau_{i}$.  Thus the theorem follows from our choice of $b = \lambda \eps^{-1} \sqrt{8 nkT \log(1/\delta)}$ and Theorems~\ref{thm:advcomp} and~\ref{thm:laplaceprivacy}.
\end{proof}

\section{Proof of the Lower Bound on Error (Theorem \ref{thm:lb})}
To begin, as an overview of the proof for the reader , we provide a sketch of the proof of Theorem \ref{thm:lb}.  Let $\mathbf{d}\in \bits^n$ be an $n$-bit database and $\cQ = \set{q_1, \dots, q_{n'}}$ be a set of $n'$ subset-sum queries.  For the sketch, assume that we have an algorithm that computes exact equilibria.  We will split the $(n+n')$ players into $n$ ``data players'' and $n'$ ``query players.''  Roughly speaking, the data players will have utility functions that force them to play ``0'' or ``1'', so that their actions actually represent the database $\mathbf{d}$.  Each of the query players will represent a subset-sum query $q$, and we will try to set up their utility function in such a way that it forces them to take an action that corresponds to an approximate answer to $q(\mathbf{d})$.  In order to do this, first assume there are $n+1$ possible actions, denoted $\set{0,\frac{1}{n}, \frac{2}{n}, \dots, 1}$.  We can set up the utility function so that for each action $a$, he receives a payoff that is maximized when an $a$ fraction of the data players in $q$ are playing $1$.   That is, when playing action $a$, his payoff is maximized when $q(\mathbf{d}) = a$.  Conversely, he will play the action $a$ that is closest to the true answer $q(\mathbf{d})$.  Thus, we can read off the answer to $q$ from his equilibrium action.  Using each of the $n'$ query players to answer a different query, we can compute answers to $n'$ queries.  Finally, notice that joint differential privacy says that all of the actions of the query players will satisfy (standard) differential privacy with respect to the inputs of the data players, thus the answers we read off will be differentially private (in the standard sense) with respect to the database.

This sketch does not address two important issues.  The first is that we do not assume that the algorithm computes an exact equilibrium, only that it computes an approximate equilibrium.  This relaxation means that the data players do not have to play the correct bit with probability $1$, and the query players do not have to choose the answer that exactly maximizes their utility.  Below, we show that the error in the answers we read off is only a small factor larger than the error in the equilibrium computed.

The second is that we do not want to assume that the (query) players have $n+1$ available actions.  Instead, we use $\log n$ players per query, and use each to compute roughly one bit of the answer, rather than the whole answer.  However, if the query players' utility actually depends on a specific bit of the answer, then a single data player changing his action might result in a large change in utility. Below, we show how to compute bits of the answer using $1/n$-sensitive utility functions.

Given a database $D \in \{0,1\}^n$, $D = (d_1, \dots, d_n)$ and $n'$ queries $\cQ = \set{q_1, \dots, q_{n'}}$, we will construct the following $N$-player $2$-action game where $N  = n + n' \log n$.  We denote the set of actions for each player by $A = \bits$.  We also use $\set{(j, h)}_{j \in [n'], h \in [\log n]}$ to denote the $n' \log n$ players $\set{n + 1, \dots, n + n' \log n}$.  For intuition, think of player $(j,h)$ as computing the $h$-th bit of $q_j(D)$.  

We fix the type profile $\tau \in \mathcal{T}^n$.  Each player $i \in [n]$ has the utility function
  \begin{equation*}
    u(\tau_i,\ba) = \begin{cases}
     1   &\qquad\text{if $a_i = d_i$} \\
     0   &\qquad\text{otherwise}
    \end{cases}
  \end{equation*}
 That is, player $i$ receives utility $1$ if they play the action matching the $i$-th entry in $D$, and utility $0$ otherwise.  Clearly, these are $0$-sensitive utility functions because it is independent of each player's type. 

 The specification of the utility functions for the query players $(j,h)$ is somewhat more complicated.  First, we define the functions $f_{h}, g_{h} \from [0,1] \to [0,1]$ as
 \begin{align*}
     &f_{h}(x) =
     1 - \min_{r \in \set{0,\dots,2^{h-1} - 1}} \left| x - (2^{-(h+1)} + r 2^{-(h-1)}) \right| \\
     &g_{h}(x) =
     1 - \min_{r \in \set{0,\dots,2^{h-1} - 1}} \left| x - (2^{-h} + 2^{-(h+1)} + r 2^{-(h-1)}) \right|
 \end{align*}
 Each $(j,h)$ player will have the utility function based on her type $\tau_{(j,h)}$
 \begin{align*}
     &u(\tau_{(j,h)},(0,a_{-(j,h)})) =
     f_{h}(q_j(a_1, \dots, a_n)) \\
     &u(\tau_{(j,h)},(1,a_{-(j,h)})) =
    g_{h}(q_j(a_1, \dots, a_n))
 \end{align*}
Since $q(a_1, \dots, a_n)$ is defined to be $1/n$-sensitive in the actions $a_1, \dots, a_n$, and $f_{h}, g_{h}$ are $1$-Lipschitz, $u(\tau_{(j,h)},\cdot)$ is also $1/n$-sensitive.  This means that the described game is $1/n$ large.  


Also notice that since $\cQ$ is part of the definition of the game, we can simply define the types space to be all those we have given to the players.  For the data players we only used $2$ distinct types, and each of the $n' \log n$ query players may have a distinct type.  Thus we only need the set $\mathcal{T}$ to have $n' \log n + 2$ types in order to implement the reduction.

 Now we can analyze the structure of $\alpha$-approximate equilibrium in this game, and show how, given any equilibrium set of strategies for the query players, we can compute a set of $O(\alpha)$-approximate answers to the set of queries $\cQ$.

We start by claiming that in any $\alpha$-approximate CCE, every data player plays the action $d_i$ in most rounds.  Specifically,
 \begin{claim}\label{clm:accuratedataplayers}
 Let $\pi$ be any distribution over $A^N$ that constitutes an $\alpha$-approximate CCE of the game described above.  Then for every data player $i$,
\begin{align*}
\Prob{\pi}{a_i \neq d_i} \leq \alpha.
\end{align*}
 \end{claim}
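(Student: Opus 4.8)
The plan is to exploit the fact that each data player's utility depends only on her own action, so the (coarse) correlated equilibrium constraint pins that action down. First I would observe that for any data player $i\in[n]$ and any action profile $\ba$, the construction gives $u(\tau_i,\ba)=\mathbf{1}[a_i=d_i]$, which does not depend on $\ba_{-i}$ at all. Consequently, under the distribution $\pi$,
\[
\Ex{\pi}{u(\tau_i,\ba)} \;=\; \Prob{\pi}{a_i = d_i} \;=\; 1 - \Prob{\pi}{a_i \neq d_i}.
\]

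Next I would apply the approximate equilibrium condition for player $i$ to the constant deviation that always plays $d_i$. Since $d_i\in\bits$ is a legal action for player $i$, and since playing $d_i$ yields $u(\tau_i,(d_i,\ba_{-i}))=1$ regardless of the other players' actions, the equilibrium inequality for player $i$ with deviation $a_i'=d_i$ reads
\[
\Ex{\pi}{u(\tau_i,\ba)} \;\geq\; \Ex{\pi}{u(\tau_i,(d_i,\ba_{-i}))} - \alpha \;=\; 1-\alpha.
\]
Combining the two displays yields $1-\Prob{\pi}{a_i\neq d_i}\geq 1-\alpha$, i.e. $\Prob{\pi}{a_i\neq d_i}\leq\alpha$, which is exactly the claim.

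There is essentially no real obstacle here; the only point requiring a moment's care is which solution concept is being invoked. The argument uses only a \emph{constant} unilateral deviation, so it applies verbatim to an $\alpha$-approximate coarse correlated equilibrium, and hence also to an $\alpha$-approximate correlated equilibrium (which is a refinement). This is the form that will be needed downstream, since the lower bound is stated for coarse correlated equilibria, and it is also what lets us later read off the query players' answers: the same $\alpha$-slack translates into only an $O(\alpha)$ loss in the accuracy of the recovered subset-sum answers.
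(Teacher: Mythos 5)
Your proof is correct and is essentially identical to the paper's: both rewrite $\Prob{\pi}{a_i\neq d_i}$ as $1-\Ex{\pi}{u(\tau_i,\ba)}$ and then invoke the $\alpha$-approximate CCE inequality against the constant deviation $a_i'=d_i$, which yields utility $1$ deterministically. The only difference is presentational (you state the two ingredients first and then combine, while the paper runs them as a single chain of inequalities).
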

\begin{proof}
\begin{align}
\Prob{\pi}{a_i \neq d_i}
={} &1 - \Ex{\pi}{u(\tau_i,(a_i, a_{-i}))} \notag \\
\leq{} &1 - \left(\Ex{\pi}{u(\tau_i,(d_i, a_{-i}))} - \alpha \right) &&(\text{Definition of $\alpha$-approximate CCE}) \notag \\
={} &1 - \left( 1 - \alpha \right) = \alpha &&(\text{Definition of $u(\tau_i,\cdot)$}) 
\end{align}
 \end{proof}

The next claim asserts that if we view the actions of the data players, $a_1, \dots, a_n$, as a database, then $q(a_1, \dots, a_n)$ is close to $q(d_1, \dots, d_n)$ on average.
\begin{claim} \label{clm:accuratequeries}
Let $\pi$ be any distribution over $A^N$ that constitutes an $\alpha$-approximate CCE of the game described above.  Let $S \subseteq [n]$ be any subset-sum query.  Then
$$
\Ex{\pi}{\left| q(d_1, \dots, d_n) - q(a_1, \dots, a_n) \right|} \leq \alpha.
$$
\end{claim}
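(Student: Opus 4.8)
The plan is to reduce the claim directly to Claim~\ref{clm:accuratedataplayers} via the triangle inequality and linearity of expectation. First I would observe that by definition of a subset-sum query, for any realization of actions $a_1,\dots,a_n \in \bits$ we have
$$
q(d_1, \dots, d_n) - q(a_1, \dots, a_n) = \frac{1}{n} \sum_{i \in S} (d_i - a_i),
$$
so that, since each term satisfies $|d_i - a_i| = \mathbf{1}[a_i \neq d_i] \in \bits$ and $S \subseteq [n]$,
$$
\left| q(d_1, \dots, d_n) - q(a_1, \dots, a_n) \right| \leq \frac{1}{n} \sum_{i \in S} \mathbf{1}[a_i \neq d_i] \leq \frac{1}{n} \sum_{i = 1}^{n} \mathbf{1}[a_i \neq d_i].
$$

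Next I would take the expectation over $\ba \sim \pi$ of both sides. By linearity of expectation, the right-hand side becomes $\frac{1}{n} \sum_{i=1}^n \Prob{\pi}{a_i \neq d_i}$. Now I invoke Claim~\ref{clm:accuratedataplayers}, which applies since $\pi$ is an $\alpha$-approximate CCE of the constructed game, to bound each term $\Prob{\pi}{a_i \neq d_i} \leq \alpha$. This yields
$$
\Ex{\pi}{\left| q(d_1, \dots, d_n) - q(a_1, \dots, a_n) \right|} \leq \frac{1}{n} \sum_{i=1}^{n} \alpha = \alpha,
$$
which is the desired bound.

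There is no real obstacle here: the only ingredients are the explicit form of the subset-sum query, the triangle inequality, linearity of expectation, and the already-established Claim~\ref{clm:accuratedataplayers}. The one point worth stating carefully is that the bound on $\Pr[a_i \neq d_i]$ holds for \emph{every} data player $i \in [n]$ simultaneously (not just in expectation over $i$), so summing the per-player bounds is legitimate; this is exactly what Claim~\ref{clm:accuratedataplayers} provides.
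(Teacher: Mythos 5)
Your proof is correct and takes essentially the same route as the paper: apply the triangle inequality to the explicit form of the subset-sum query, pass to expectation by linearity, and bound each $\Prob{\pi}{a_i \neq d_i}$ by $\alpha$ via Claim~\ref{clm:accuratedataplayers}. The only cosmetic difference is that you extend the sum from $S$ to all of $[n]$ before dividing by $n$, whereas the paper keeps the sum over $S$ and then uses $|S| \le n$; both give the same bound.
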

\begin{proof}
\begin{align}
&\Ex{\pi}{\left| q(d_1, \dots, d_n) - q(a_1, \dots, a_n) \right|}
={} \Ex{\pi}{ \frac{1}{n} \sum_{i \in S} (d_i - a_i)} &&\notag \\
&\leq{} \frac{1}{n} \sum_{i \in S} \Ex{ \pi}{ \left| d_i - a_i \right| }
={} \frac{1}{n} \sum_{i \in S} \Prob{ \pi}{a_i \neq d_i} \notag &&\\
&\leq{} \frac{1}{n} \sum_{i \in S} \alpha \quad\leq{} \alpha &&(\text{Claim~\ref{clm:accuratedataplayers}, $S\subseteq [n]$}) 
\end{align}
\end{proof}

We now prove a useful lemma that relates the expected utility of an action (under any distribution) to the expected difference between $q_j(a_1, \dots, a_n)$ and $q_j(D)$.
\begin{claim} \label{lem:utilityvserror}
Let $\mu$ be any distribution over $A^N$.  Then for any query player $(j,h)$,
\begin{align*}
&\left| \Ex{ \mu}{u(\tau_{(j,h)},(0,\ba_{-(j,h)}))} - f_{h}(q_j(D)) \right| \leq \Ex{ \mu}{\left| q_j(a_1, \dots, a_n) - q_j(D) \right|} \textrm{, and} \\
&\left| \Ex{ \mu}{u(\tau_{(j,h)},(1,\ba_{-(j,h)}))} - g_{h}(q_j(D)) \right| \leq \Ex{ \mu}{\left| q_j(a_1, \dots, a_n) - q_j(D) \right|}.
\end{align*}
\end{claim}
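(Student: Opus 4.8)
The plan is to observe that both statements are instances of the same two-line argument: the claimed quantity is the absolute value of an expectation of a difference of values of a Lipschitz function, so Jensen's inequality plus the Lipschitz bound finishes it. Concretely, for the first inequality I would start from the definition $u(\tau_{(j,h)},(0,\ba_{-(j,h)})) = f_h(q_j(a_1,\dots,a_n))$, so that $f_h(q_j(D))$ is a constant (it does not depend on the sampled profile $\ba$), and write
\begin{align*}
\left| \Ex{\mu}{u(\tau_{(j,h)},(0,\ba_{-(j,h)}))} - f_h(q_j(D)) \right|
&= \left| \Ex{\mu}{f_h(q_j(a_1,\dots,a_n)) - f_h(q_j(D))} \right| \\
&\leq \Ex{\mu}{\left| f_h(q_j(a_1,\dots,a_n)) - f_h(q_j(D)) \right|} \\
&\leq \Ex{\mu}{\left| q_j(a_1,\dots,a_n) - q_j(D) \right|},
\end{align*}
where the first inequality is Jensen's inequality applied to the convex function $|\cdot|$ (equivalently, the triangle inequality for expectations), and the second uses that $f_h$ is $1$-Lipschitz on $[0,1]$ (as noted in the construction of the game, right before the claim, $f_h$ and $g_h$ are $1$-Lipschitz). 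The second inequality in the claim is identical with $g_h$ in place of $f_h$ and the action $1$ in place of $0$.

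The only point worth spelling out, if desired, is the $1$-Lipschitz property of $f_h$ and $g_h$: each is of the form $x \mapsto 1 - \min_{r} |x - c_r|$ for a finite set of constants $c_r$, and $x \mapsto |x - c_r|$ is $1$-Lipschitz, a minimum of $1$-Lipschitz functions is $1$-Lipschitz, and negating and adding a constant preserves the Lipschitz constant; hence $f_h, g_h$ are $1$-Lipschitz. I would either cite the earlier sentence or include this one-line verification.

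I do not expect any real obstacle here: the statement is purely about Lipschitz continuity and linearity/convexity of expectation, and does not interact with the equilibrium structure at all (it holds for an arbitrary distribution $\mu$ over $A^N$). The role of this claim downstream — converting a near-optimal action of the query player into a near-accurate answer to $q_j(D)$ — is where the work happens, but that is outside the scope of this statement.
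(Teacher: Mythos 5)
Your argument is correct and matches the paper's own proof: both rewrite the utility via its definition $u(\tau_{(j,h)},(0,\ba_{-(j,h)})) = f_h(q_j(a_1,\dots,a_n))$ and then invoke $1$-Lipschitzness of $f_h$ (and $g_h$), with the paper simply collapsing your explicit Jensen/triangle-inequality step and the Lipschitz bound into one line. The extra remark verifying the Lipschitz property from the explicit form of $f_h, g_h$ is a fine, if unnecessary, addition.
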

\begin{proof}
We prove the first assertion, the proof of the second is identical.
\begin{align}
&\left| \Ex{\mu}{u(\tau_{(j,h)},(0,\ba_{-(j,h)}))} - f_{h}(q_{j}(D)) \right|&& \notag \\
&\qquad ={} \left| \Ex{ \mu}{f_h(q_j(a_1, \dots, a_n)) -f_{h}(q_{j}(D))} \right|&& \notag \\
&\qquad \leq{} \Ex{\mu}{\left| q_j(a_1, \dots, a_n) - q_j(D) \right|} &&(\text{$f_{h}$ is $1$-Lipschitz}) 
\end{align}
\end{proof}

The next claim, which establishes a lower bound on the expected utility player $(j,h)$ will obtain for playing a fixed action, is an easy consequence of Claims \ref{clm:accuratequeries} and \ref{lem:utilityvserror}.
\begin{claim} \label{clm:utilityvsanswer}
Let $\pi$ be any distribution over $A^N$ that constitutes an $\alpha$-approximate CCE of the game described above.  Then for every query player $(j,h)$,
\begin{align*}
&\left| \Ex{\pi}{u(\tau_{(j,h)},(0,\ba_{-(j,h)}))} - f_{h}(q_{j}(D)) \right| \leq \alpha \textrm{, and} \\
&\left| \Ex{\pi}{u(\tau_{(j,h)},(1,\ba_{-(j,h)}))} - g_{h}(q_{j}(D)) \right| \leq \alpha.
\end{align*}
\end{claim}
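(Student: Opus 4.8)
The plan is to obtain Claim~\ref{clm:utilityvsanswer} as a one-line consequence of Claims~\ref{clm:accuratequeries} and~\ref{lem:utilityvserror}. First I would apply Claim~\ref{lem:utilityvserror} with the distribution $\mu = \pi$; since $\pi$ is a distribution over $A^N$ the claim applies verbatim and yields
$$\left| \Ex{\pi}{u(\tau_{(j,h)},(0,\ba_{-(j,h)}))} - f_{h}(q_j(D)) \right| \leq \Ex{\pi}{\left| q_j(a_1,\dots,a_n) - q_j(D) \right|},$$
together with the analogous inequality with $(1,\ba_{-(j,h)})$ and $g_h$ in place of $(0,\ba_{-(j,h)})$ and $f_h$. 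Both bounds share the same right-hand side.

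Next I would bound that common right-hand side. Writing $S_j \subseteq [n]$ for the subset of data players defining the query $q_j$, and recalling that by construction $q_j(D) = q_j(d_1,\dots,d_n)$ (the data players' actions encode the database $D$), Claim~\ref{clm:accuratequeries} applied to $q_j$ gives $\Ex{\pi}{\left| q_j(d_1,\dots,d_n) - q_j(a_1,\dots,a_n) \right|} \leq \alpha$, where we use precisely the hypothesis that $\pi$ is an $\alpha$-approximate CCE of the constructed game (via Claim~\ref{clm:accuratedataplayers}, each data player plays $d_i$ except with probability at most $\alpha$). Substituting this into both displayed inequalities produces exactly the two claimed bounds.

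I do not expect any genuine obstacle: the statement is a direct corollary, and the only point to keep straight is that the ``answer error'' $\Ex{\pi}{|q_j(a_1,\dots,a_n)-q_j(D)|}$ appearing on the right in Claim~\ref{lem:utilityvserror} is literally the quantity that Claim~\ref{clm:accuratequeries} bounds by $\alpha$, so the two chain together with no loss beyond the single additive $\alpha$. (The later factor-of-$36$ blowup in Theorem~\ref{thm:lb} arises downstream, when these utility estimates are decoded into bit-level query answers through $f_h$ and $g_h$; it plays no role in this claim.)
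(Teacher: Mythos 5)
Your proposal is correct and matches the paper's intended argument: the paper states Claim~\ref{clm:utilityvsanswer} is ``an easy consequence of Claims \ref{clm:accuratequeries} and \ref{lem:utilityvserror},'' and your chaining of Claim~\ref{lem:utilityvserror} (with $\mu = \pi$) into the bound from Claim~\ref{clm:accuratequeries} is exactly that consequence.
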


Now we state a simple fact about the functions $f_{h}$ and $g_{h}$.  Informally, this asserts that we can find alternating intervals of width nearly $2^{-h}$, that nearly partition $[0,1]$, in which $f_h(x)$ is significantly larger than $g_h(x)$ or vice versa.
\begin{obs}
Let $\beta \leq 2^{-(h+1)}$.  If
\begin{align*}
x \in \bigcup_{r \in \set{0,1,\dots,2^{h-1} - 1}} \left( r 2^{-h} + \beta, (r+1) 2^{-h} - \beta \right)
\end{align*}
then $f_h(x) > g_h(x) + \beta$.  We denote this region $F_{h, \beta}$.  Similarly, if
\begin{align*}
x \in \bigcup_{r \in \set{0,1,\dots,2^{h-1} - 1}} \left( (r+1) 2^{-h} + \beta, (r+2) 2^{-h} - \beta \right)
\end{align*}
then $g_{h}(x) > f_h(x) + \beta$.  We denote this region $G_{h, \beta}$
\end{obs}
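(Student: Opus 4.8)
The plan is to reduce the statement to an elementary one-dimensional computation about distances to arithmetic progressions. First I would rewrite the two functions as $f_h(x) = 1 - \mathrm{dist}(x, P_h)$ and $g_h(x) = 1 - \mathrm{dist}(x, Q_h)$, where $P_h$ is the set of ``$f$-peaks'' $2^{-(h+1)} + r 2^{-(h-1)}$ and $Q_h$ is the set of ``$g$-peaks'' $2^{-h} + 2^{-(h+1)} + r 2^{-(h-1)}$, with $r$ ranging over $\{0,\dots,2^{h-1}-1\}$. The structural fact driving everything is that the $r$-th $g$-peak lies exactly $2^{-h}$ to the right of the $r$-th $f$-peak, while consecutive $f$-peaks are $2^{-(h-1)} = 2\cdot 2^{-h}$ apart; hence merging $P_h$ and $Q_h$ and sorting produces an arithmetic progression of step $2^{-h}$ that alternates $f$-peak, $g$-peak, $f$-peak, $\dots$. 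In particular each interval appearing in $F_{h,\beta}$ brackets exactly one peak of the merged list, and that peak is an $f$-peak; symmetrically each interval of $G_{h,\beta}$ brackets a single $g$-peak.

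With this picture in hand, I would fix $x$ in one of the trimmed intervals comprising $F_{h,\beta}$ and compute $f_h(x) - g_h(x) = \mathrm{dist}(x, Q_h) - \mathrm{dist}(x, P_h)$ directly. Because the interval has been trimmed by $\beta$ on each side and $\beta \le 2^{-(h+1)}$, the point $x$ stays strictly inside the two-cell window of width $2^{-(h-1)}$ centered at the bracketed $f$-peak $p$, so the nearest point of $P_h$ is $p$ itself and the nearest point of $Q_h$ is whichever of $p \pm 2^{-h}$ lies on the same side as $x$. Both distances are then explicit affine functions of $x$ with slope $\pm 1$, so $f_h(x) - g_h(x)$ is piecewise linear, symmetric about $p$, equal to $2^{-h}$ at $p$, and decreasing linearly to $2\beta$ at the two (excluded) endpoints of the trimmed interval; hence $f_h(x) - g_h(x) \ge 2\beta > \beta$ throughout, which is the claim. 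The only extra care is for the intervals touching the ends of $[0,1/2]$, where $P_h$ or $Q_h$ is ``missing'' a peak on one side; but a missing peak only makes $\mathrm{dist}(x,\cdot)$ larger, so in those boundary cases the bound on $f_h - g_h$ can only improve. The assertion for $G_{h,\beta}$ then follows by the symmetry interchanging the roles of $P_h$ and $Q_h$ (equivalently, a reflection), so no separate argument is needed.

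The only real obstacle is bookkeeping: one must line up the endpoints of the intervals defining $F_{h,\beta}$ and $G_{h,\beta}$ with the peak locations so that the ``nearest peak'' identification is valid over the entire trimmed interval, and check that $\beta \le 2^{-(h+1)}$ is exactly what makes the trimmed intervals nonempty while keeping them inside the relevant window --- for $\beta$ above this threshold the trimmed endpoints would slide past a crossover point of $f_h$ and $g_h$ and the strict inequality would fail. Once the geometry is pinned down, the inequality is simply the value of a piecewise-linear function read off at its minimum over the interval.
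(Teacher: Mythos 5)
The paper states this Observation without proof, so your argument is filling a genuine gap rather than shadowing the authors' exposition; and it fills it correctly. The central computation is exactly right: writing $f_h(x) = 1 - \mathrm{dist}(x,P_h)$ and $g_h(x) = 1 - \mathrm{dist}(x,Q_h)$, where $P_h$ and $Q_h$ are interleaved arithmetic progressions of common step $2^{-(h-1)}$ offset by $2^{-h}$, one finds that on the trimmed interval centered at an $f$-peak $p$ the nearest $f$-peak is $p$ and the nearest $g$-peak is $p \pm 2^{-h}$, so that $f_h(x) - g_h(x) = \mathrm{dist}(x,Q_h) - \mathrm{dist}(x,P_h) = 2^{-h} - 2|x-p|$, which on $|x-p| < 2^{-(h+1)} - \beta$ is strictly greater than $2\beta \ge \beta$. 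Your handling of the boundary intervals is also correct and goes in the right direction: when a $g$-peak is missing (e.g.\ the first $F$-interval), $\mathrm{dist}(x,Q_h)$ can only increase, which only increases $f_h - g_h$; symmetrically a missing $f$-peak helps $g_h - f_h$ in the $G$ case.

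One thing worth saying explicitly, since your proof relies on it without comment: the displayed formula in the Observation, taken literally, indexes the intervals $(r 2^{-h} + \beta, (r+1)2^{-h} - \beta)$ over \emph{consecutive} $r \in \{0,\dots,2^{h-1}-1\}$, which gives interval centers spaced $2^{-h}$ apart and therefore alternately brackets $f$-peaks and $g$-peaks. Under that literal reading the claim is false (e.g.\ for $h=3$ the interval at $r=1$ contains the $g$-peak $3/16$, where $g_h > f_h$). The paper's own $h=3$ example shows the intended set $F_{3,\beta}$ with interval centers spaced $2^{-(h-1)}$ apart, bracketing only $f$-peaks, and your proof works for that intended version. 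So the proof is correct, but it silently corrects a typo in the Observation's display; it would be cleaner to state the corrected indexing (e.g.\ $\bigcup_r (2r\cdot 2^{-h} + \beta,\ (2r+1)2^{-h} - \beta)$) before running the distance computation, so the claim "each $F$-interval brackets exactly one peak, and it is an $f$-peak" is not taken on faith.
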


\bigskip
For example, when $h = 3$, $F_{3, \beta} = [0,\frac{1}{8} - \beta] \cup [\frac{2}{8} + \beta, \frac{3}{8} - \beta] \cup [\frac{4}{8} + \beta, \frac{5}{8} - \beta] \cup [\frac{6}{8} + \beta, \frac{7}{8} - \beta]$.  By combining this fact, with Claim~\ref{clm:utilityvsanswer}, we can show that if $q_j(D)$ falls in the region $F_{h, \alpha}$, then in an $\alpha$-approximate CCE, player $(j,h)$ must be playing action $0$ `often'.
\begin{claim} \label{clm:queryplayers}
Let $\pi$ be any distribution over $A^N$ that constitutes an $\alpha$-approximate CCE of the game described above.  Let $j \in [n']$ and $2^{-h} \geq 10\alpha$.  Then, if $q_j(D) \in F_{h,9\alpha}$,
$
\Prob{ \pi}{a_i = 0} \geq 2/3.
$
Similarly, if $q_j(D) \in G_{h,9\alpha}$, then
$
\Prob{\pi}{a_i = 1} \geq 2/3.
$
\end{claim}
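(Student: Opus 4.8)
\noindent\emph{Proof strategy.} The plan is to extract the conclusion from the two \emph{fixed-action} deviations available to the query player $(j,h)$ in an $\alpha$-approximate coarse correlated equilibrium---deviate to the constant action $0$, and deviate to the constant action $1$---together with the Lipschitz control on $q_j$ already packaged in Claims~\ref{clm:accuratequeries} and~\ref{clm:utilityvsanswer}. Write $p \eqdef \Prob{\pi}{a_{(j,h)} = 0}$ (the statement's ``$a_i$'' denotes the action of player $(j,h)$); the goal is $p \geq 2/3$ when $q_j(D) \in F_{h,9\alpha}$ and, symmetrically, $1-p \geq 2/3$ when $q_j(D) \in G_{h,9\alpha}$. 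Only the first case needs a genuine argument, the second being a mirror image.

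First I would lower-bound player $(j,h)$'s equilibrium utility. Since $q_j$ is a function of the data players' actions only, the constant-$0$ deviation yields exactly $\Ex{\pi}{u(\tau_{(j,h)},(0,\ba_{-(j,h)}))}$, which by Claim~\ref{clm:utilityvsanswer} is at least $f_h(q_j(D)) - \alpha$; hence the coarse-equilibrium inequality gives $\Ex{\pi}{u(\tau_{(j,h)},\ba)} \geq f_h(q_j(D)) - 2\alpha$. Next I would upper-bound the same quantity by splitting on the value of $a_{(j,h)} \in \set{0,1}$:
\[
\Ex{\pi}{u(\tau_{(j,h)},\ba)} = p\,\Ex{\pi}{f_h(q_j(a_1,\dots,a_n)) \mid a_{(j,h)}=0}
+ (1-p)\,\Ex{\pi}{g_h(q_j(a_1,\dots,a_n)) \mid a_{(j,h)}=1},
\]
and then use that $f_h,g_h$ are $1$-Lipschitz to replace $q_j(a_1,\dots,a_n)$ by $q_j(D)$ inside each conditional expectation, at a total additive cost of $\Ex{\pi}{\left| q_j(a_1,\dots,a_n) - q_j(D) \right|} \leq \alpha$ by Claim~\ref{clm:accuratequeries}; this produces $\Ex{\pi}{u(\tau_{(j,h)},\ba)} \leq p\,f_h(q_j(D)) + (1-p)\,g_h(q_j(D)) + \alpha$. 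The one place that needs care---and what I expect to be the main (if modest) obstacle---is that $\pi$ is a genuine joint distribution in which $a_{(j,h)}$ may be correlated with the data players' actions; the point is that the Lipschitz bound is pointwise, so it is unaffected by any such correlation, and the two conditional error terms recombine into the single unconditional quantity bounded by Claim~\ref{clm:accuratequeries}.

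Chaining the lower and upper bounds gives $(1-p)\big(f_h(q_j(D)) - g_h(q_j(D))\big) \leq 3\alpha$. Now $q_j(D) \in F_{h,9\alpha}$ means, by the very definition of that region in the Observation (which is non-degenerate under the standing hypothesis $2^{-h} \geq 10\alpha$), that $f_h(q_j(D)) - g_h(q_j(D)) > 9\alpha$, so $1-p < 1/3$ and hence $p \geq 2/3$, as claimed. For the $G$-case one instead deviates to the constant action $1$, obtaining $\Ex{\pi}{u(\tau_{(j,h)},\ba)} \geq g_h(q_j(D)) - 2\alpha$, combines it with the identical upper bound to get $p\big(g_h(q_j(D)) - f_h(q_j(D))\big) \leq 3\alpha$, and concludes $p \leq 1/3$, i.e.\ $\Prob{\pi}{a_{(j,h)}=1} = 1-p \geq 2/3$. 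No ingredient beyond Claims~\ref{clm:accuratequeries} and~\ref{clm:utilityvsanswer}, the fixed-action coarse-equilibrium inequality, and the Observation is required.
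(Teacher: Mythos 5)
Your proof is correct and follows essentially the same route as the paper's: a lower bound on player $(j,h)$'s equilibrium utility via the fixed-action-$0$ deviation and Claim~\ref{clm:utilityvsanswer}, an upper bound by conditioning on $a_{(j,h)}$ and applying the Lipschitz/Claim~\ref{clm:accuratequeries} recombination, and then the $F_{h,9\alpha}$ gap. The only difference is that the paper phrases the final step as a contradiction (assume $\Prob{\pi}{a_{(j,h)}=0}<2/3$ and derive $\Ex{\pi}{u}<f_h(q_j(D))-2\alpha$), whereas you chain the bounds directly into $(1-p)\bigl(f_h(q_j(D))-g_h(q_j(D))\bigr)\leq 3\alpha$; this is a purely presentational difference.
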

\begin{proof}
We prove the first assertion.  The proof of the second is identical.  If player $(j,h)$ plays the fixed action $0$, then, by Claim~\ref{clm:utilityvsanswer},
$$
\Ex{ \pi}{u(\tau_{(j,h)},(0,\ba_{-(j,h)}))} \geq f_h(q_j(D)) - \alpha.
$$
Thus, if $\pi$ is an $\alpha$-approximate CCE, player $(j,h)$ must receive at least $f_h(q_j(D)) - 2\alpha$ under $\pi$.  Assume towards a contradiction that $\prob{a_{(j,h)} = 0} < 2/3$.  We can bound player $(j,h)$'s expected utility as follows:
\begin{align}
&\Ex{\ba \sim \pi}{u(\tau_{(j,h)},\ba)} \notag \\
& \qquad ={} \prob{a_{(j,h)} = 0} \Ex{ \pi | a_{(j,h)} = 0 }{u(\tau_{(j,h)},(0,\ba_{-(j,h)}))} \notag \\
&\qquad \quad + \prob{a_{(j,h)} = 1} \Ex{ \pi | a_{(j,h)} = 1}{u(\tau_{(j,h)},(1,\ba_{-(j,h)}))} \notag \\
&\qquad \leq{} \prob{a_{(j,h)} = 0} \left( f_h(q_j(D)) + \Ex{ \pi | a_{(j,h)} = 0}{\left| q_j(a_1, \dots, a_n) - q_j(D) \right| } \right) \notag \\
& \qquad \quad + \prob{a_{(j,h)} = 1} \left( g_h(q_j(D)) + \Ex{ \pi | a_{(j,h)} = 1}{\left| q_j(a_1, \dots, a_n) - q_j(D) \right| } \right) \label{eq:distortion} \\
& \qquad ={} f_h(q_j(D)) + \Ex{\pi}{\left| q_j(a_1, \dots, a_n) - q_j(D) \right|} - \prob{a_{(j,h)} = 1}\left( f_h(q_j(D)) - g_h(q_j(D)) \right) \notag \\
& \qquad \leq{} f_h(q_j(D)) + \alpha - 9 \alpha \prob{a_{(j,h)} = 1} \label{eq:adv} \\
&\qquad <{} f_h(q_j(D)) - 2 \alpha \label{eq:final}
\end{align}
Line~\eqref{eq:distortion} follows from the Claim~\ref{lem:utilityvserror} (applied to the distributions $\pi \mid a_{(j,h)} = 0$ and $\pi \mid a_{(j,h)} = 1$). Line~\eqref{eq:adv} follows from Claim~\ref{clm:accuratequeries} (applied to the expectation in the second term) and the fact that $q_j(D) \in F_{h, 9\alpha}$ (applied to the difference in the final term).  Line~\eqref{eq:final} follows from the assumption that $\prob{a_{(j,h)} = 0} < 2/3$.  Thus we have established a contradiction to the fact that $\pi$ is an $\alpha$-approximate CCE.
\end{proof}

Given the previous claim, the rest of the proof is fairly straightforward.  For each query $j$, we will start at $h=1$ and consider two cases: If player $(j,1)$ plays $0$ and $1$ with roughly equal probability, then we must have that $q_j(D) \not\in F_{1, 9\alpha} \cup G_{1, 9\alpha}$.  It is easy to see that this will confine $q_j(D)$ to an interval of width $18\alpha$, and we can stop.  If player $(j,1)$ does play one action, say $0$, a significant majority of the time, then we will know that $q_j(D) \in F_{1, 9\alpha}$, which is an interval of width $1/2 - 9\alpha$.  However, now we can consider $h=2$ and repeat the case analysis: Either $(j,2)$ does not significantly favor one action, in which case we know that $q_j(D) \not\in F_{2, 9\alpha} \cup G_{2, 9\alpha}$, which confines $q_j(D)$ to the union of two intervals, each of width $18\alpha$.  However, only one of these intervals will be contained in $F_{1, 9\alpha}$, which we know contains $q_j(D)$.  Thus, if we are in this case, we have learned $q_j(D)$ to within $18\alpha$ and can stop. Otherwise, if player $(j,2)$ plays, say, $0$ a significant majority of the time, then we know that $q_j(D) \in F_{1, 9\alpha} \cap F_{2, 9\alpha}$, which is an interval of width $1/4 - 9\alpha$.  It is not too difficult to see that we can repeat this process as long as $2^{-h} \geq 18\alpha$, and we will terminate with an interval of width at most $36\alpha$ that contains $q_j(D)$.

\section{Computing Equilibria in Games with Many Actions}\label{sec:many_actions}

In this section we construct and algorithm for computing equilibria in games with many actions but bounded type spaces.  First, we will formally state the privacy and accuracy guarantees of the Median Mechanism (see \cite{RR10,HR10}).

\begin{theorem}[Median Mechanism For General Queries] \label{thm:MM}
Consider the following $R$-round experiment between a mechanism $\MM$, who holds a tuple $\tau_1, \dots, \tau_n \in \mathcal{T}$, and an adaptive querier $\cB$.  For every round $r = 1,2,\dots, R$:
\begin{enumerate}
\item $\cB(Q_{1}, a_{1}, \dots, Q_{r-1}, a_{r-1}) = Q_{r}$, where $Q_{r}$ is a $\lambda$-sensitive query.
\item $a_{r} \sim \MM(\tau_1, \dots, \tau_{n}; Q_{r})$.
\end{enumerate}
For every $\eps, \delta, \lambda, \beta \in (0,1], n, R, |\mathcal{T}| \in \N$, there is a mechanism $\MM$ such that for every $\cB$
\begin{enumerate}
\item The transcript $(Q_{1}, a_{1}, \dots, Q_{R}, a_{R})$ satisfies $(\eps, \delta)$-differential privacy.
\item With probability $1-\beta$ (over the randomizations of $\MM$), $| a_{r} - Q_{r}(\tau_1, \dots, \tau_{n}) | \leq \alpha_{\MM}$ for every $r = 1,2,\dots,R$ and for
$$
\alpha_{\MM} = 16 \eps^{-1} \lambda \sqrt{n \log |\mathcal{T}|} \log(2R/\beta) \log(4/\delta).
$$
\end{enumerate}
\end{theorem}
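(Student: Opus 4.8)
The statement is, up to the choice of parameters, the median mechanism of \citeasnoun{RR10} (see also the multiplicative-weights variant of \citeasnoun{HR10}), specialized to queries of sensitivity $\lambda$ in place of $1$ and to a general type space $\mathcal{T}$ in place of $\bits$, so the plan is to recall that construction and re-derive the parameters. The mechanism $\MM$ maintains a set $\mathcal{D}_r \subseteq \mathcal{T}^n$ of ``consistent databases,'' initialized to $\mathcal{D}_1 = \mathcal{T}^n$, which has size $|\mathcal{T}|^n$. On receiving a $\lambda$-sensitive query $Q_r$ it computes $m_r$, the median of $\{Q_r(\mathbf{x}) : \mathbf{x} \in \mathcal{D}_r\}$, and runs a noisy comparison (one step of the sparse-vector / \texttt{AboveThreshold} primitive, with Laplace noise of scale $\Theta(\lambda/\eps_0)$) testing whether $|m_r - Q_r(\tau)|$ exceeds a threshold $T_0$. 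If the test reports ``below'' (an \emph{easy} round) it outputs $a_r = m_r$ and sets $\mathcal{D}_{r+1} = \mathcal{D}_r$; if it reports ``above'' (a \emph{hard} round) it outputs $a_r = Q_r(\tau) + \Lap(\Theta(\lambda/\eps_0))$ and sets $\mathcal{D}_{r+1} = \{\mathbf{x} \in \mathcal{D}_r : |Q_r(\mathbf{x}) - a_r| \le T_0/2\}$.

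The combinatorial heart of the argument is a bound of $k := \lceil n \log_2 |\mathcal{T}| \rceil$ on the number of hard rounds. Condition on the high-probability event --- a union bound over the at most $R$ rounds, valid with probability at least $1-\beta$ once $\Theta((\lambda/\eps_0)\log(2R/\beta)) \le T_0/4$ --- that every Laplace noise term used (in a threshold test or a release) has magnitude at most $T_0/4$. On this event, (i) $\tau$ is never removed from $\mathcal{D}_r$, since on a hard round $|Q_r(\tau) - a_r|$ equals a Laplace term $\le T_0/4 < T_0/2$; and (ii) each hard round removes at least half of $\mathcal{D}_r$, because the median $m_r$, which the test (correctly, on this event) found to be more than $3T_0/4$ away from $Q_r(\tau)$, is therefore more than $T_0/2$ away from $a_r$, so at least half of $\mathcal{D}_r$ assigns $Q_r$ a value on the far side of $a_r$ and is discarded. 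Since $\mathcal{D}_r$ always contains $\tau$ it never has size $0$, so there are at most $\log_2 |\mathcal{T}|^n = k$ hard rounds. Accuracy is then immediate on the same event: the test is accurate up to $T_0/4$, so on an easy round $|a_r - Q_r(\tau)| = |m_r - Q_r(\tau)| \le 5T_0/4$, and on a hard round $|a_r - Q_r(\tau)|$ is a Laplace term $\le T_0/4$; after absorbing the constant, $\alpha_{\MM} = \Theta(T_0)$.

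For privacy, the transcript is produced by two interleaved private subroutines: the sequence of threshold tests, an \texttt{AboveThreshold} instance that ``pays'' only on its at most $k$ above-threshold answers and is otherwise free, and the at most $k$ Laplace releases on hard rounds; each sub-step is $\Theta(\eps_0)$-differentially private in $\tau$ because the queries are $\lambda$-sensitive and the noise has scale $\Theta(\lambda/\eps_0)$. The adaptivity of $\cB$ is harmless, since both \texttt{AboveThreshold} and the advanced composition theorem (Theorem~\ref{thm:advcomp}) hold against an adaptive adversary. Applying Theorem~\ref{thm:advcomp} to these $O(k)$ mechanisms, each with parameter $\Theta(\eps_0)$, shows the whole transcript is $(\eps,\delta)$-differentially private provided $\eps_0 = \Theta(\eps / \sqrt{k \log(1/\delta)}) = \Theta(\eps / \sqrt{n \log|\mathcal{T}| \cdot \log(1/\delta)})$. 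Substituting this into $T_0 = \Theta((\lambda/\eps_0)\log(2R/\beta))$ gives $T_0 = \Theta(\eps^{-1} \lambda \sqrt{n\log|\mathcal{T}|\,\log(1/\delta)}\,\log(2R/\beta))$, and bounding $\sqrt{\log(1/\delta)} \le \log(4/\delta)$ and carrying the absolute constants through yields the claimed $\alpha_{\MM} = 16\,\eps^{-1}\lambda\sqrt{n\log|\mathcal{T}|}\,\log(2R/\beta)\,\log(4/\delta)$.

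The one genuinely delicate point is that $T_0$ plays three roles at once --- it is the threshold separating easy from hard rounds, it must exceed a constant multiple of the typical noise magnitude for both the halving argument and the ``$\tau$ is never removed'' property, and it is the final accuracy bound --- while simultaneously $\eps_0$ must be small enough (relative to $k$ and $\delta$) that $O(k)$-fold composition stays inside $(\eps,\delta)$; reconciling these constraints is exactly the parameter computation above, and pinning down the constant $16$ is bookkeeping. (Alternatively, one can invoke \citeasnoun{RR10} or \citeasnoun{HR10} essentially verbatim, noting that the only modifications are replacing query sensitivity $1$ by $\lambda$ and the $2^n$ databases over $\bits$ by the $|\mathcal{T}|^n$ databases over $\mathcal{T}$, which changes $\log 2^n = n$ to $\log|\mathcal{T}|^n = n\log|\mathcal{T}|$ in the bound on hard rounds.)
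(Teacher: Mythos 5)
The paper does not actually prove Theorem~\ref{thm:MM}: it states it as a recalled result with the parenthetical ``(see \cite{RR10,HR10})'' and relies on those references. Your reconstruction is a faithful account of the argument underlying those works (and their textbook treatment): maintain a consistent set $\mathcal{D}_r\subseteq\mathcal{T}^n$, answer with the median on ``easy'' rounds and a noisy true answer on ``hard'' rounds detected by a sparse-vector threshold test, bound the hard rounds by $n\log_2|\mathcal{T}|$ via the halving argument (with $\tau$ never evicted on the good noise event), charge privacy only for the hard rounds plus the above-threshold answers, and then pick $\eps_0$ by advanced composition so that the accumulated threshold $T_0$ gives the stated $\alpha_{\MM}$. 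The reasoning is sound, including the delicate accounting where $T_0$ simultaneously controls the noise-to-threshold ratio, the halving step, and the final accuracy. One cosmetic caveat: the exact constant $16$ and the precise factor $\log(4/\delta)$ in place of $\sqrt{\log(1/\delta)}$ depend on the particular variant of advanced composition and sparse vector used, and your proof only carries these through at the $\Theta(\cdot)$ level; that is consistent with the spirit of the theorem (whose constants are inherited from the cited references) but is not a from-scratch derivation of that specific constant.
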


\subsection{Noisy No-Regret via the Median Mechanism}
We now define our algorithm for computing equilibria in games with exponentially many actions.

To keep notation straight, we will use $\tau= (\tau_{1}, \dots, \tau_{n})$ to denote the types of all the $n$ players. 

First we sketch some intuition for how the mechanism works.  In particular, why we cannot simply substitute the Median Mechanism for the Laplace mechanism and get a better error bound.  Recall the queries we used in analyzing the Laplace-based algorithm $\nrl$ in the Proof of Theorem \ref{thm:privatecnrl}.  We were able to argue that fixing $\tau_{-i}$ and the previous noisy losses, the query was low-sensitivity as a function of its input $\tau_{i}$.  This argument relied on the fact that we were effectively running \emph{independent} copies of the Laplace mechanism, which guarantees that the answers given to each query do not explicitly depend on the previous queries that were asked (although the queries themselves may be correlated).  However, in the mechanism we are about to define, the queries are all answered using a \emph{single} instantiation of the Median Mechanism.  The Median Mechanism correlates its answers across queries, and thus the answers to one query may depend on the previous queries that were made.  This fact will be problematic, because the description of the queries contains the types $\tau_{-i}$.  Thus, the queries we made to construct the output for players other than $i$ will actually contain information about $\tau_{-i}$, and we cannot guarantee that this information does not leak into the answers given to other sets of players.

We address this problem by asking a larger set of queries whose description does not depend on any particular player's type.  We will make the set of queries large enough that they will actually contain every query that we might possibly have asked in $\nrl$, and each player can select from the larger set of answers only those which she needs to compute her losses.  Since the queries do not depend on any type, we do not have to worry about leaking the description of the queries.

In order to specify the mechanism it will be easier to define the following family of queries first. Let  $i$ be any player, $j$ any action, $t$ any round of the algorithm, and $\hat\tau$ any type in $\mathcal{T}$.  The queries will be specified by these parameters and a sequence $\Lambda_{1}, \dots, \Lambda_{t-1}$ where $\Lambda_{t'} \in \R^{n \times k \times |\cT|}$ for every $1 \leq t' \leq t-1$.
Intuitively, the query is given a description of the ``state'' of the mechanism in all previous rounds.  Each state variable $\Lambda_{t}$ encodes the losses that would be experienced by every possible player $i$ and every action $j$ and every type $\hat\tau$, \emph{given that the previous $t-1$ rounds of the mechanism were played using the real vector of types}.  We will think of the variables $\Lambda_{1}, \dots, \Lambda_{t-1}$ as having been previously sanitized, and thus we do not have to worry about the fact that these state variables encode information about the actual types of the players.
\begin{algorithm}
 $\cQ^{j}_{i,t,\hat\tau}(\tau_{1}, \dots, \tau_{n} \mid \Lambda_{1}, \dots, \Lambda_{t-1})$
\begin{algorithmic}
\State Using $\tau_{1}, \dots, \tau_{n} \mid \Lambda_{1}, \dots, \Lambda_{t-1}$, compute $\loss^j_{i,t,\hat\tau} = 1 - \Ex{\state_{-i,t}}{u(\tau_i,(j,a_{-i}))}$.  
\State This computation can be done in the following steps:
\begin{enumerate}
\item For every $i' \neq i$, use $\Lambda^{j}_{i',1, \tau_{i'}}, \dots, \Lambda^{j}_{i',t-1, \tau_{i'}}$, $\nralg$, and $\tau_{i'}$ to compute $\state_{i',1}, \dots, \state_{i',t-1}$.
\item Using $\state_{-i,t-1}$, compute $\loss^j_{i,t,\hat\tau}$.
\end{enumerate}
\end{algorithmic}
\end{algorithm}

Observe that $Q^{j}_{i,t,\hat\tau}$ is $\lambda$-sensitive for every player $i$, step $t$, action $j$, and type $\hat\tau$.  To see why, consider what happens when a specific player $i'$ switches her input from $\tau_{i'}$ to $\tau'_{i'}$.  In that case that $i = i'$, this has no effect on the query answer, because player $i$'s utility is never used in computing $Q^{j}_{i,t,\hat\tau}$.  In the case that $i' \neq i$ then the type of player $i'$ can (potentially) affect the computation of $\state_{i',t-1}$, and can (potentially) change it to an arbitrary state $\overline{\state}_{i', t-1}$.  But then $\lambda$-sensitivity follows from the $\lambda$-sensitivity of $u(\tau_{i},\cdot)$, the definition of $\itjvloss$, and linearity of expectation.  Notice that $\tau_{i'}$ does not, however, affect the state of any other players, who will use the losses $\Lambda_{1}, \dots, \Lambda_{t-1}$ to generate their states, not the actual states of the other players.

Now that we have this family of queries in places, we can describe the algorithm.  Our mechanism uses two steps.  At a high level, there is an inner mechanism, $\nrms$, that will use the Median Mechanism to answer each query $Q^{j}_{i,t,\hat\tau}\left(\cdot \mid  \widehat{\Lambda}_{1}, \dots, \widehat{\Lambda}_{t-1}\right)$, and will output a set of noisy losses $\noisy{\Lambda}_{1}, \dots, \noisy{\Lambda}_{T}$.  The properties of the Median Mechanism will guarantee that these losses satisfy $(\eps, \delta)$-differential privacy (in the standard sense of Definition~\ref{def:standardprivacy}).

There is also an outer mechanism that takes these losses and, for each player, uses the losses corresponding to her type to run a no-regret algorithm. This is $\nrm$ which takes the sequence $\noisy{\Lambda}_{1}, \dots, \noisy{\Lambda}_{T}$ and using the type $\tau_{i}$ will compute the equilibrium strategy for player $i$. Since each player's output can be determined only from her own type and a set of losses that is $(\eps, \delta)$-differentially private with respect to every type, the entire mechanism will satisfy $(\eps, \delta)$-joint differential privacy.  This follows from the Billboard Lemma (Lemma \ref{lem:billboard}).

\begin{algorithm}
\caption{Inner Mechanism}\label{NRMedian_shared}
\begin{algorithmic}[0]
\Procedure {$\nrms^{\nralg}$}{$\tau_{1}, \dots \tau_{n}$}
\State Parameters $\eps, \delta, \lambda \in (0,1]$ and $n, k, T \in [n]$ 
\For{$t = 1, 2, \dots, T$}
\State Let: $\noisyitjvloss = \MM\left(\tau_{1}, \dots, \tau_{n}; Q^{j}_{i,t,\hat\tau}( \cdot \mid \widehat{\Lambda}_{1}, \dots, \widehat{\Lambda}_{t-1})\right)$ for every $i, j, \hat\tau $.
\State Let: $\widehat{\Lambda}^j(i, t, \hat\tau) = \noisyitjvloss$ for every $i , j , \hat\tau $.
\EndFor
\State \textbf{return} $(\noisy{\Lambda}_{1}, \dots, \noisy{\Lambda}_{T})$. 
\EndProcedure
\end{algorithmic}
\end{algorithm}
\begin{algorithm}
\caption{Outer Mechanism}\label{NRMedian} 
\begin{algorithmic}
\Procedure {$\nrm^{\nralg}$} {$\tau_{1}, \dots \tau_{n}$}
\State Parameters: $\eps, \delta, \Delta \in (0,1], n, k, T \in [n]$ 
\State Let: $(\noisy{\Lambda}_{1}, \dots, \noisy{\Lambda}_{T}) =\nrms^{\nralg}(\tau_{1}, \dots, \tau_{n})$.
\For{$i = 1, \dots, n$}
\State Let: $\state_{i,1}$ be the uniform distribution over $\actionset$.
	\For{ $t = 1, \dots, T$}
		\State Let: $\state_{i,t} = \nralg\left(\state_{i,t-1}, \noisy{\Lambda}_{i,t-1,\tau_{i}}\right)$
	\EndFor
\State \textbf{return} to player $i$: $(\state_{i,1}, \dots, \state_{i,T})$. 
\EndFor
\EndProcedure
\end{algorithmic}
\end{algorithm}

\begin{theorem}[Privacy of $\nrm$]\label{thm:privCCEMedian}
The algorithm $\nrm$ satisfies $(\eps, \delta)$-joint differential privacy.
\end{theorem}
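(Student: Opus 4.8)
The plan is to decompose $\nrm$ into its inner piece $\nrms$ and its outer piece, and then apply the Billboard Lemma (Lemma~\ref{lem:billboard}). Observe that $\nrm$ first runs $\nrms(\tau_1,\dots,\tau_n)$ to produce the shared object $o=(\noisy{\Lambda}_1,\dots,\noisy{\Lambda}_T)$, and then, for each player $i$, computes that player's output $(\state_{i,1},\dots,\state_{i,T})$ by running $\nralg$ using only $\tau_i$ together with the coordinates $\noisy{\Lambda}_{i,t,\tau_i}$ of $o$. Thus if we can show that $\nrms$ satisfies $(\eps,\delta)$-differential privacy in the standard sense (Definition~\ref{def:standardprivacy}) with respect to the full type vector $\tau=(\tau_1,\dots,\tau_n)$, then the Billboard Lemma immediately gives that $\nrm$ is $(\eps,\delta)$-jointly differentially private.

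To analyze $\nrms$, I would first recall the two structural facts about the query family $Q^{j}_{i,t,\hat\tau}(\cdot\mid\widehat\Lambda_1,\dots,\widehat\Lambda_{t-1})$ that are established in the discussion preceding the theorem. First, each such query is $\lambda$-sensitive as a function of the input type vector: changing one player $i'$'s type leaves the query unchanged when $i'=i$ (player $i$'s utility is never used), and when $i'\neq i$ it can change $\state_{i',t-1}$ to an arbitrary distribution, whereupon $\lambda$-largeness of the underlying game (Definition~\ref{def:sensitive}) together with linearity of expectation bounds the change in $\itjvloss$ by $\lambda$. Second, and crucially, the \emph{description} of each query $Q^{j}_{i,t,\hat\tau}$ depends only on the type vector (as the "database") and on the previously sanitized state variables $\widehat\Lambda_1,\dots,\widehat\Lambda_{t-1}$ — it does not encode any player's actual type — so these are legitimate public queries, and the set of queries issued in round $t$ is a deterministic function of the Median Mechanism's answers in rounds $1,\dots,t-1$.

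With these facts in hand, I would cast $\nrms$ into the adaptive-querier experiment of Theorem~\ref{thm:MM}. Over its $T$ rounds, $\nrms$ poses one $\lambda$-sensitive query $Q^{j}_{i,t,\hat\tau}$ for every tuple $(t,i,j,\hat\tau)\in[T]\times[n]\times[k]\times\mathcal{T}$, for a total of $R=Tnk|\mathcal{T}|$ queries answered by a \emph{single} instantiation of $\MM$; by the previous paragraph the querier is adaptive only through past answers. Plugging $\eps,\delta,\lambda$ and $R=Tnk|\mathcal{T}|$ into $\MM$, part~(1) of Theorem~\ref{thm:MM} guarantees that the whole transcript is $(\eps,\delta)$-differentially private with respect to $\tau$, and since $(\noisy{\Lambda}_1,\dots,\noisy{\Lambda}_T)$ is just a rearrangement (a post-processing) of the transcript, $\nrms$ is $(\eps,\delta)$-differentially private. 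Applying the Billboard Lemma with $o=\nrms(\tau)$ as the differentially private "billboard" and $\theta(\tau_i,o)=(\state_{i,1},\dots,\state_{i,T})$ then finishes the proof.

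I expect the main obstacle — and the conceptual heart of the argument — to be the justification that routing all answers through one Median Mechanism does not leak: because $\MM$ correlates its answers across queries, and because a query's description would in general contain some $\tau_{i'}$, one must check that no private information enters through the \emph{choice} of which queries are asked. This is exactly what is bought by asking the entire type-independent family $\{Q^{j}_{i,t,\hat\tau}\}$ rather than only the queries a given player needs, so that query descriptions depend only on already-sanitized $\widehat\Lambda$'s. Once that point is made precise, the remainder is bookkeeping: counting the $Tnk|\mathcal{T}|$ queries, re-verifying $\lambda$-sensitivity, and matching the round-by-round structure to the hypotheses of Theorem~\ref{thm:MM} and Lemma~\ref{lem:billboard}.
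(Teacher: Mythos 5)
Your proposal is correct and follows essentially the same route as the paper: decompose $\nrm$ into the inner mechanism $\nrms$ and the per-player post-processing, invoke Theorem~\ref{thm:MM} to conclude $\nrms$ is $(\eps,\delta)$-differentially private, then apply the Billboard Lemma (Lemma~\ref{lem:billboard}). The paper's proof is terse — it treats the $\lambda$-sensitivity of the queries and the type-independence of their descriptions as already established in the surrounding discussion — whereas you re-verify those facts and explicitly count the $R = Tnk|\mathcal{T}|$ queries, but the argument is the same.
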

\begin{proof}
We let $\M$ be $\nrms$ and write $o = \M(\tau)$.  Observe that the output of $\nrm$ can be written as $\M'(\tau) = (f_1(\tau_1,o), \dots, f_n(\tau_n,o))$ where $f_i$ depends only on $\tau_i$ for every player $i$ where $f_i$ is the $i$-th iteration of the main loop in $\nrm$.  The privacy of the Median Mechanism (Theorem~\ref{thm:MM}) directly implies that $\M$ is $(\eps, \delta)$-differentially private (in the standard sense).

We use the Billboard Lemma to then say that $\M'(\tau) = (f_1(\tau_1,\M(\tau)), \cdots, f_n(\tau_n,\M(\tau)))$ is $(\epsilon,\delta)$ joint differentially private.  
%
\end{proof}

\subsection{Computing Approximate Equilibria}

\begin{theorem}[Computing CE]\label{thm:accCEmedian}
Let $\nralg$ be $\swapalg.$ Fix the environment, i.e the number of players $n$, the number of actions $k$, number of possible types $|\mathcal{T}|$, largeness of the game $\lambda$ and desired privacy $(\epsilon, \delta)$.
Suppose $\beta$ and $T$ are such that:
\begin{equation} \label{eqn:paramassCEmedian}
16 \eps^{-1} \lambda \sqrt{n \log |\cT|} \log(2 |\cT|nkT / \beta) \log(4/\delta) \leq \tfrac{1}{6}
\end{equation}
Then with probability at least $1-\beta$ the algorithm $\nrm^{\fixedalg}$ returns an $\alpha$-approximate CE for:
%
\begin{equation*}
\alpha = \tilde{O}\left(\frac{\lambda \sqrt{n} \log^{3/2} (|\cT|) \log(k/\beta) \log(1/\delta)}{\eps} \right).
\end{equation*}
\end{theorem}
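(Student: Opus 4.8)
The plan is to mirror the proof of the Laplace-based Theorem~\ref{thm:accCElaplace}, replacing the Laplace mechanism by the Median Mechanism and exploiting the fact that the Median Mechanism's error grows only polylogarithmically in the number of queries it answers (so that, unlike in $\nrl$, the per-coordinate noise does not blow up as $T$ grows). Joint differential privacy of $\nrm$ is already in hand (Theorem~\ref{thm:privCCEMedian}), so only accuracy remains, and the accuracy argument has four steps: bound the per-coordinate perturbation of the losses using Theorem~\ref{thm:MM}; use the noise-tolerance of $\swapalg$ (Lemma~\ref{lem:lossscaled} and Lemma~\ref{lem:noisyregretsbounded}) to bound each player's swap regret against her \emph{true} losses; pass from low swap regret to an approximate correlated equilibrium via Theorem~\ref{thm:regtoeq}; and finally choose $T$ to balance the two error terms.

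First I would count queries. Over the $T$ rounds, $\nrms$ asks the Median Mechanism for the answer to $Q^{j}_{i,t,\hat\tau}$ for every player $i\in[n]$, action $j\in[k]$, and hypothetical type $\hat\tau\in\cT$, a total of $R=|\cT|nkT$ queries, each of which is $\lambda$-sensitive (this was argued immediately after the query family was defined). Applying Theorem~\ref{thm:MM} with this $R$, with probability at least $1-\beta$ every returned value $\noisyitjvloss$ satisfies $\lvert \noisyitjvloss - Q^{j}_{i,t,\hat\tau}(\tau\mid\widehat\Lambda_{1},\dots,\widehat\Lambda_{t-1})\rvert \le \alpha_{\MM}$ with $\alpha_{\MM}=16\eps^{-1}\lambda\sqrt{n\log|\cT|}\,\log(2R/\beta)\,\log(4/\delta)$; condition on this event henceforth. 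Note that Assumption~\eqref{eqn:paramassCEmedian} is exactly $\alpha_{\MM}\le\tfrac16$ when $R=|\cT|nkT$, which is precisely the slack needed so that, after rescaling losses into $[\tfrac13,\tfrac23]$ as in Lemma~\ref{lem:lossscaled}, the perturbed losses still lie in $[0,1]$ and the regret bound of Theorem~\ref{thm:noregretalgsexist} applies.

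Next I would record the structural point that makes the shared-state recursion consistent. In the outer loop of $\nrm$, player $i$'s copy of $\swapalg$ is fed the sequence $\widehat\Lambda_{i,1,\tau_i},\dots,\widehat\Lambda_{i,T,\tau_i}$; and the true loss she faces at round $t$ against the actual play $\state_{-i,t}$ of the others — all of whom reconstruct their mixed strategies from the same shared variables $\widehat\Lambda_{1},\dots,\widehat\Lambda_{t-1}$ rather than from each other's true play — is exactly $Q^{j}_{i,t,\tau_i}(\tau\mid\widehat\Lambda_{<t})$. Thus, writing $\Losses_i$ for this true loss matrix (rescaled into $[\tfrac13,\tfrac23]$ via Lemma~\ref{lem:lossscaled}) and $\noisyLosses_i$ for the sequence actually used, we have $\noisyLosses_i=\Losses_i+Z_i$ with $\lVert Z_i\rVert_\infty\le\alpha_{\MM}$ and $(\state_{i,1},\dots,\state_{i,T})=\overrightarrow{\swapalg}(\noisyLosses_i)$. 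Lemma~\ref{lem:noisyregretsbounded} together with Theorem~\ref{thm:noregretalgsexist} then gives, for each player $i$,
\begin{equation*}
\regret\!\left(\overrightarrow{\swapalg}(\noisyLosses_i),\Losses_i,\swapmods\right)\;\le\;3\left(k\sqrt{\tfrac{2\log k}{T}}+2\alpha_{\MM}\right),
\end{equation*}
hence $\mregret(\mathcal{C},L,\swapmods)\le 3(k\sqrt{2\log k/T}+2\alpha_{\MM})$, and Theorem~\ref{thm:regtoeq} makes $\Pi_{\mathcal{C}}$ an $\eta$-approximate correlated equilibrium with $\eta=3(k\sqrt{2\log k/T}+2\alpha_{\MM})$.

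Finally I would optimize over $T$. Since $\alpha_{\MM}$ depends on $T$ only through the factor $\log(2|\cT|nkT/\beta)$, taking $T=\widetilde\Theta(k^2/\alpha_{\MM}^2)$ (or larger) makes $k\sqrt{2\log k/T}=O(\alpha_{\MM})$ while keeping $\log T=\widetilde O(\log(|\cT|nk/(\beta\delta\lambda\eps)))$; then $\eta=O(\alpha_{\MM})$, and substituting $R=|\cT|nkT$ and collapsing logarithms (in particular $\sqrt{\log|\cT|}\cdot\log(2R/\beta)\le\log^{3/2}|\cT|\cdot\widetilde O(\log(k/\beta))$) gives the stated bound on $\alpha$, with Assumption~\eqref{eqn:paramassCEmedian} guaranteeing an admissible $(T,\beta)$ pair exists. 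I expect the main obstacle to be the bookkeeping of the previous paragraph: making precise that the noisy losses handed to player $i$ really do equal the Median Mechanism's answers to the queries $Q^{j}_{i,t,\tau_i}$ (so that everyone's play is mutually consistent because everyone reads from the same sanitized $\widehat\Lambda$'s), and that each $Q^{j}_{i,t,\hat\tau}$ is genuinely $\lambda$-sensitive in every player's type even though the state variables $\widehat\Lambda_{<t}$ carry information about $\tau_{-i}$ — the two points flagged around the definition of the query family. The regret algebra, the choice of $T$, and the logarithm-chasing are routine given the lemmas already established.
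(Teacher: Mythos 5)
Your proposal is correct and follows essentially the same route as the paper's proof: bound the per-query error by $\alpha_{\MM}$ via Theorem~\ref{thm:MM} over $R=|\cT|nkT$ queries, note that~\eqref{eqn:paramassCEmedian} gives $\alpha_{\MM}\le 1/6$ so the bounded-noise regret machinery (Corollary~\ref{thm:lowregretbounded}, which packages Lemma~\ref{lem:noisyregretsbounded} with Theorem~\ref{thm:noregretalgsexist}) applies, pass from swap-regret to approximate CE via Theorem~\ref{thm:regtoeq}, and choose $T$ to balance. You are a bit more explicit than the paper about the rescaling factor from Lemma~\ref{lem:lossscaled} and about why the shared sanitized state variables $\widehat\Lambda_{<t}$ make each player's reconstruction consistent, but the overall argument is the same.
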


Again, considering `low sensitivity' games where $\lambda$ is $O(1/n)$, the theorem says that fixing the desired degree of privacy, we can compute an $\alpha$-approximate equilibrium for $\alpha = \tilde{O}\left(\frac{\log^{\frac{3}{2}}( |\cT|) \log k}{\sqrt n} \right)$. The tradeoff to the old results is in dependence on the number of actions. The results in the previous section had a $\sqrt{k}$ dependence on the number of actions $k$. This would have no bite if $k$ grew even linearly in $n$. We show that positive results still exist if the number of possible private types is is bounded - the dependence on the number of actions and the number of types is now logarithmic. However this comes with two costs. First, we can only consider situations where the number of types any player could have is bounded, and grows sub-exponentially in $n$. Second, we lose computational tractability-- the running time of the Median Mechanism is exponential in the number of players in the game.

\begin{proof}[Proof of Theorem \ref{thm:accCEmedian}]
By the accuracy guarantees of the Median Mechanism:
\begin{equation*}
\Prob{\MM}{\exists i, t, j, \hat\tau \textrm{ s.t. } \left|\noisyitjvloss - \itjvloss\right| > \alpha_{\MM}} \leq \beta
\end{equation*}
where
\begin{equation*}
\acc_{\MM} =  16\lambda \eps^{-1} \sqrt{n \log |\cT|} \log(2|\cT| nkT/ \beta) \log(4/\delta)
\end{equation*}
By (\ref{eqn:paramassCEmedian}), $\acc_{\MM} \leq 1/6$. Therefore,
\begin{equation*}
\Prob{\MM}{\exists i,j,t,\hat\tau \textrm{ s.t. } |\noisyitjvloss - \itjvloss| > \tfrac{1}{6}} \leq \beta
\end{equation*}
Applying Theorem~\ref{thm:lowregretbounded} and substituting $\alpha_{\MM}$, we obtain:
\begin{equation*}
\Prob{\Noises}{\exists i \textrm{ s.t. } \regret(\state_{i,1}, \dots, \state_{i,T}, \Losses, \swapmods) > \sqrt{\frac{2k \log k}{T}} + 2\acc_{\MM}} \leq \beta
\end{equation*}
Now we can choose $\sqrt{T} = k(\lambda \sqrt{n})^{-1}$ to conclude the proof.
\end{proof}

\section{Additional Related Work}
\label{app:related}
The most well studied problem is that of accurately answering numeric-valued queries on a data set. A basic result of \cite{DMNS06} is that any low sensitivity query (i.e. the addition or removal of a single entry can change the value of the query by at most $1$) can be answered efficiently and ($\epsilon$-differential) privately while introducing only $O(1/\epsilon)$ error. Another fundamental result of \cite{DKMMN06,DRV10} is that differential privacy composes gracefully. Any algorithm composed of $T$ subroutines, each of which are $O(\epsilon)$-differentially private, is itself $\sqrt{T}\epsilon$-differentially private. Combined, these give an efficient algorithm for privately answering any $T$ low sensitivity queries with $O(\sqrt{T})$ effort, a result which we make use of.

Using computationally inefficient algorithms, it is possible to privately answer queries much more accurately \cite{BLR08, DRV10, RR10, HR10, GHRU11,GRU12}. Combining the results of the latter two yields an algorithm which can privately answer arbitrary low sensitivity queries as they arrive, with error that scales only logarithmically in the number of queries. We use this when we consider games with large action spaces.

Our lower bounds for privately computing equilibria use recent information theoretic lower bounds on the accuracy queries can be answered while preserving differential privacy \cite{DN03,DMT07,DY08, De12}. Namely, we construct games whose equilibria encode answers to large numbers of queries on a database.

Variants of differential privacy related to joint differential privacy have been considered in the setting of query release, specifically for analyst privacy \cite{DNV12}. Specifically, the definition of one-analyst-to-many-analyst privacy used by \cite{HRU13} can be seen as an instantiation of joint differential privacy.

\end{document}